\numberwithin{equation}{section}
\newtheoremstyle{indented}
  {6pt}
  {6pt}
  {\addtolength{\@totalleftmargin}{.75em}
   \addtolength{\linewidth}{-.75em}
   \parshape 1 .75em \linewidth}
  {}
  {\bfseries}
  {.}
  {.5em}
  {}
\theoremstyle{indented}
\newtheorem{theo}{Theorem}[section]
\newtheorem{prop}[theo]{Proposition}
\newtheorem{defn}[theo]{Definition}
\newtheorem{eg}[theo]{Example}
\newtheorem{rmk}[theo]{Remark}
\newcommand{\e}{{\varepsilon}}
\newcommand{\R}{{\mathbb R}}
\newcommand{\T}{{\mathbb T}}
\newcommand{\N}{{\mathbb N}}
\newcommand{\calB}{{\mathcal{B}}}
\newcommand{\calN}{{\mathcal{N}}}
\newcommand{\calF}{{\mathcal{F}}}
\newcommand{\calE}{{\mathcal{E}}}
\newcommand{\calD}{{\mathcal{D}}}
\newcommand{\calL}{{\mathcal{L}}}
\newcommand{\calU}{{\mathcal{U}}}
\newcommand{\calO}{{\mathcal{O}}}
\newcommand{\calJ}{{\mathcal{J}}}
\newcommand{\calW}{{\mathcal{W}}}
\newcommand{\calZ}{{\mathcal{Z}}}
\newcommand{\id}{{\mathbbm{1}}}
\newcommand{\Sym}{{\textup{Sym}}}
\newcommand{\Dens}{{\textup{Dens}}}
\newcommand{\del}{{\partial}}
\newcommand{\delbar}{{\bar\partial}}
\newcommand{\g}{{\mathfrak{g}}}
\newcommand{\n}{{\mathfrak{n}}}
\newcommand{\el}{{\mathfrak{l}}}
\newcommand{\into}{{\ \hookrightarrow\ }}
\newcommand{\Poly}{{\textup{PolyDiff}}}
\newcommand{\Diff}{{\textup{Diff}}}
\newcommand{\loc}{{\textup{loc}}}
\newcommand{\Der}{{\textup{Der}}}
\newcommand{\scg}{{\mathscr{g}}}
\newcommand{\sch}{{\mathscr{h}}}
\newcommand{\scv}{{\mathscr{v}}}
\newcommand{\X}{{\mathscr{X}}}
\newcommand{\Y}{{\mathscr{Y}}}
\newcommand{\scL}{{\mathscr{L}}}
\newcommand{\scl}{{\mathscr{l}}}
\newcommand{\V}{{\mathscr{V}}}
\newcommand{\dR}{{\textup{dR}}}
\newcommand{\triv}{{\textup{triv}}}
\newcommand{\LL}{{\mathbb{L}}}
\newcommand{\TT}{{\mathbb{T}}}
\newcommand{\mr}{\mathrm}
\newcommand{\mm}{\mathfrak{m}}
\begin{document}

\title{Degenerate Classical Field Theories and Boundary Theories}
\author{Dylan Butson and Philsang Yoo}
\date{}

\maketitle

\begin{abstract}
We introduce a framework for degenerate classical field theories in the BV formalism, which allows us to discuss many interesting examples of theories which do not admit a Lagrangian description. Further, we study phase spaces and boundary conditions for classical field theories on manifolds with boundary, and from a fixed classical field theory together with a choice of boundary condition, construct a degenerate classical field theory on the boundary. We apply these ideas to many physically interesting examples including the Kapustin--Witten twists of $\calN=4$ supersymmetric Yang--Mills, Chern--Simons theory, the chiral Wess--Zumino--Witten model, chiral Toda theory, and a new three dimensional degenerate classical field theory we call Whittaker theory.
\end{abstract}

\tableofcontents

\newpage 

\section{Introduction}

The BV-BRST formalism for classical and quantum field theory, especially in the topological case, was studied by Schwartz, Kontsevich and many others \cite{Schwarz} \cite{AKSZ}  using ideas from homological algebra in a geometric way. Recently, following this perspective, Costello \cite{CostelloBook} formulated a precise definition of perturbative classical and quantum field theories in the BV formalism, applicable to many examples of physical interest, and in particular well-defined in the necessary infinite dimensional setting. Further, inspired by the work of Beilinson and Drinfeld \cite{BeilinsonDrinfeld} on 2-dimensional conformal field theories and Lurie \cite{LurieCobordism} on topological field theories, Costello and Gwilliam \cite{CostelloGwilliam} formulated a theory of local observables for general classical and quantum field theories; their work constructs, from the differential geometric input data of a classical or quantum field theory, a factorization algebra of $P_0$ or $BD_0$ algebras on the manifold underlying the field theory. This foundational work has led to the exploration of many interesting classical and quantum field theories and their applications to related areas of pure mathematics \cite{CostelloYangian} \cite{GradyGwilliam} \cite{ChanLeungLi} \cite{CostelloLi2} \cite{GradyLiLi} \cite{LiLi} \cite{CostelloM-theory} \cite{GGW} \cite{LiVAQME}.

In this paper, we develop the framework of \emph{degenerate} classical field theories in the BV formalism, extending the work of Costello and Gwilliam to include a much larger class of examples of classical field theories, which need not have a Lagrangian description. Further, we formulate the notions of phase spaces and boundary conditions for classical field theories on manifolds with boundary, and from a classical field theory together with a choice of boundary condition for it, we construct a possibly degenerate classical field theory on the boundary, which we call the \emph{boundary theory}; this theory governs the behaviour, in the presence of the specified boundary condition, of the observables local to the boundary, called the \emph{boundary observables}. Finally, for any degenerate classical field theory on a manifold $N$, we construct a classical field theory on $N\times \R_{\geq 0}$, called the \emph{universal bulk theory}, and a canonical boundary condition for it, such that the corresponding boundary theory is the given degenerate theory; moreover, we show that this bulk theory is indeed universal among all theories on $N\times \R_{\geq 0}$ which yield the given degenerate theory on $N$ as a boundary theory corresponding to some boundary condition.

We also study in detail many interesting examples of these ideas, recovering several important relationships between classical field theories. In particular, we study many theories that have deep connections to representation theory: the Kapustin--Witten twists of $\calN=4$ supersymmetric Yang--Mills, Chern--Simons theory, the chiral Wess--Zumino--Witten model and the corresponding affine Kac--Moody Poisson vertex algebra, and chiral Toda theory and the corresponding classical affine $\calW$-algebra. We also study a new three dimensional field theory, which we call \emph{Whittaker theory}, occurring as the boundary theory for Kapustin--Witten theory, corresponding to the boundary condition $S$-dual to that which yields Chern--Simons \cite{GaiottoWittenBoundary} \cite{GaiottoWittenKnot}, and discuss briefly its relationship to geometric representation theory \cite{GaitsgoryWhittaker}. We hope that the foundational work occurring here, especially once extended to the quantum level, will be used to yield new results, as well as original or better motivated proofs of existing ones, in related areas of mathematics.

\subsection{Summary of Results}

To begin our overview, we describe schematically the theory of phase spaces, boundary conditions and boundary theories, in the language of global derived symplectic geometry of \cite{PTVV}; although we will not formally work in this setting, we will often explain our motivation from the global perspective. Afterwards, we will outline the formulation of perturbative classical field theory used here, and describe our results more precisely.

Consider an $n$-dimensional field theory with a Lagrangian description: for a closed $n$-manifold $M$, we have a space of fields $\calF=\calF(M)$ on $M$, and an action functional $S \colon \calF \to k$. In the classical BV formalism, the moduli space of the solutions to the corresponding Euler--Lagrange equations on a closed $n$-manifold $M$ is the $(-1)$-shifted symplectic space defined by $\calE\calL(M) := \Gamma_{dS} \times_{T^* \calF } \calF$, which is manifestly symplectic as a Lagrangian intersection inside a $0$-shifted symplectic space. This $(-1)$-shifted symplectic structure is an essential part of the description of a classical field theory in the BV formalism, and moreover, is required for quantization. As an example, Chern--Simons theory gives $\calE\calL_{\mr{CS} }(M^3) = \mr{Loc}_G(M) = \underline{\mr{Map}}( M_{\mr{B}},BG  )$, which inherits the AKSZ symplectic structure. 

The situation when the spacetime manifold $M$ has a nonempty boundary is more subtle. In this case, $\calE\calL(M)$ does not admit a canonical $(-1)$-shifted symplectic structure, and hence does not describe a well-defined field theory on the manifold with boundary. However, the phase space of the theory on the boundary $\calE\calL( \del M)$, which is defined to be the space of germs of solutions near the boundary $\del M$, admits a canonical $0$-shifted symplectic structure. For instance, for Chern--Simons theory,  if $\del M^3 = \Sigma$, then one has $\calE\calL_{\mr{CS} }( \Sigma) = \mr{Loc}_G(\Sigma) = \underline{\mr{Map}}( \Sigma_{\mr{B}},BG  )$.

Further, if one considers a Lagrangian $\calB\to \calE\calL(\del M)$, then $\calE\calL( M,\del M ; \calB ) : = \calE\calL(M) \times_{ \calE\calL(\del M) } \calB$ has a canonical $(-1)$-shifted symplectic structure. Thus, a \emph{classical boundary condition} is defined to be a Lagrangian $\calB \to \calE\calL(\del M)$, as this is the appropriate data such that $\calE\calL( M,\del M ; \calB )$, the space of solutions to the Euler--Lagrange equations on $M$ satisfying the given boundary condition $\calB$ at the boundary $\del M$, is again $(-1)$-shifted symplectic and thus describes a well-defined classical field theory.

Now, we restrict our attention to theories which are topological in the direction normal to the boundary of the manifold; our constructions rely essentially on this simplifying assumption throughout. In this case, restricting to a small collar neighbourhood $U_\e$ of $\del M$ in $M$, we know that all functions on $\calE\calL( U_\e,\del M ; \calB ) $ can be chosen to depend only on the values of the fields restricted to the boundary $\del M$, where the fields must satisfy the boundary condition $\calB$. Thus, we expect that $\calB$ should in some sense carry the structure of a field theory, via this relationship to $\calE\calL( U_\e,\del M ; \calB )$. Moreover, it is generally expected that the functions on a Lagrangian in a $0$-shifted symplectic space should carry a homotopy $P_0$ structure. Our essential observation is that this $P_0$ structure describes the information of being a field theory that $\calB$ inherits from its relationship to $\calE\calL( U_\e,\del M ; \calB )$, and $\calB$ together with this $P_0$ structure is an example of a degenerate classical field theory, which we call the boundary theory.

Finally, for each homotopy $P_0$ space $(X,\Pi)$, there is a $0$-shifted symplectic space $Z_\Pi(X)$, called its higher Poisson centre, in which $X$ is a Lagrangian, and such that $X$ inherits its given $P_0$ structure $\Pi$ as a Lagrangian in this space. Given a degenerate classical field theory, we construct a classical field theory of one dimension higher on a manifold $M$ with boundary, such that its phase space $\calE\calL(\del M)$ is given by the higher Poisson centre corresponding to the $P_0$ structure underlying the degenerate classical field theory, and prove that it is universal among theories producing the given degenerate classical field theory as a boundary theory.

We have now described the basic narrative of this paper schematically in global terms. Of course, it is very difficult in general to understand the global derived stacks describing the spaces of solutions to the Euler--Lagrange equations modulo gauge transformations and their shifted symplectic structures for interesting physical examples. Further, the most common method of analysis of field theory at the quantum level is by perturbation theory, which does not require a global description of the space of solutions to the Euler--Lagrange equations. Thus, we will only attempt to study the formal neighbourhood of a given point in this space of solutions.

Formal pointed spaces have been studied extensively algebraically, using ideas from rational homotopy theory, deformation theory and algebraic geometry, with the conclusion that any such space can be described by its $(-1)$-shifted tangent complex at the geometric point, viewed as a homotopy Lie algebra (see e.g.  \cite{Hinich} \cite{Getzler} \cite{Pridham} \cite{LurieDeformation}. We will work with $L_\infty$ algebras as a concrete model for homotopy Lie algebras, and refer to their corresponding formal spaces as \emph{formal moduli problems}. Thus, we will to describe the space $\calE\calL(M)$ of solutions to the Euler--Lagrange equations of a classical field theory as a formal moduli problem $X$ corresponding to an $L_\infty$ algebra $\g=\T_o[-1]X$.

A crucial concept in classical and quantum field theory which can still be understood at the perturbative level is that of local observables, which, at least classically, are simply the functions on the space $\calE\calL(M)$ that depend only on the values of the fields in a given open set $U\subset M$. Since the Euler--Lagrange equations corresponding to a local action functional are differential equations, their solutions restrict, and thus the spaces of solutions form a presheaf $U\mapsto \calE\calL(U)$ on $M$. Correspondingly, the spaces of functions on the spaces of solutions to the Euler--Lagrange equations form a precosheaf $U\mapsto \calO(\calE\calL(U))$, and these spaces of functions $\calO(\calE\calL(U))$ are precisely the spaces of classical observables local to the open sets $U\subset M$.

Combining these insights, we will describe the space of solutions to the Euler--Lagrange equations underlying a classical field theory as a presheaf $\X$ on $M$ of formal moduli problems, or equivalently as the corresponding presheaf of $L_\infty$ algebras $\scg=\T_o[-1]\X$. In fact, the presheaves of $(-1)$-shifted tangent complexes to the solutions to the Euler--Lagrange equations of a local action functional have a natural strict model defined in terms of differential geometry, called a \emph{local $L_\infty$ algebra}, coming from their description as solutions to differential equations on spaces of sections of a vector bundle over $M$; we define \emph{local moduli problems} as the presheaves $\X$ of formal moduli problems corresponding to local $L_\infty$ algebras $\scg$.

In their work on classical field theory, Costello and Gwilliam \cite{CostelloGwilliam} study extensively the geometry of local moduli problems: they formulate an appropriate notion of local $n$-shifted symplectic structure $\omega$ on a local moduli problem $\X$, define a classical field theory as a local moduli problem $\X$ together with a local $(-1)$-shifted symplectic form $\omega$ on $\X$, and prove the expected equivalence between classical field theories in this sense and local action functionals, the usual defining data of a classical field theory. Further, they show that the precosheaf of strict dg $P_0$ algebras $\calO(\X)$ representing the observables of the classical field theory determines a $P_0$ factorization algebra; this is the essential algebraic object they extract from the differential geometric input data of the classical field theory. We begin here by first reviewing these results in Subsections \ref{lla}--\ref{laf}.

Our original work begins in Subsection $\ref{dcft}$ with the formulation of the notion of a local (homotopy) $(-1)$-shifted Poisson structure $\bar\Pi$ on a local moduli problem, and the definition of a degenerate classical field theory as a local moduli problem $\X$ together with a local $(-1)$-shifted Poisson structure $\bar\Pi$. This is a large generalization of the class of classical field theories introduced by Costello and Gwilliam, allowing us to describe many more examples of theories, which need not admit a Lagrangian description. In particular, the boundary theories constructed in the later part of the paper will in general be degenerate classical field theories in this sense. The primary motivating result about our formulation of degenerate classical field theories is the following:

\begin{theo}
Let $(\X,\Pi)$ be a degenerate classical field theory on $M$. Then $\calO(\X)$ determines a $P_0$ factorization algebra on $M$.
\end{theo}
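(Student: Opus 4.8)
The plan is to establish this theorem by showing that the local $(-1)$-shifted Poisson structure $\bar\Pi$ on the local moduli problem $\X$ endows the precosheaf of observables $\calO(\X)$ with a $P_0$ structure compatible with the factorization structure. The strategy follows the template of Costello and Gwilliam for the non-degenerate case, but where they obtain the Poisson bracket from inverting a local $(-1)$-shifted symplectic form $\omega$, here the bracket comes directly from the datum $\bar\Pi$, so the main conceptual point is to verify that the Poisson-algebraic structure assembles locally and is compatible with extension maps.

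\begin{Proof}[Proof sketch]
First I would recall that $\calO(\X) = \Sym(\scg^\vee[-1])$ is already a precosheaf of commutative dg algebras, with the factorization (prefactorization) structure arising from the fact that the underlying local $L_\infty$ algebra $\scg$ is a cosheaf-like object: disjoint opens give rise to the direct sum on tangent complexes and hence the tensor product of symmetric algebras, providing the factorization product maps $\calO(\X)(U_1)\otimes \calO(\X)(U_2)\to \calO(\X)(U)$ for disjoint $U_1,U_2\subset U$. This part is independent of the Poisson structure and is identical to the commutative-algebra bookkeeping already present in the non-degenerate theory.

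The key step is to produce, from the local $(-1)$-shifted Poisson structure $\bar\Pi$, a genuine Poisson bracket $\{-,-\}$ of cohomological degree $+1$ on each $\calO(\X)(U)$, making it a $P_0$ algebra. Since $\bar\Pi$ is by definition a local homotopy bivector, I would interpret it as a degree $+1$ element of the appropriate complex of local polyvector fields on $\X$, satisfying the homotopy Maurer--Cartan/self-commuting condition that encodes the Jacobi identity up to coherent homotopy. Locality of $\bar\Pi$ guarantees that the induced bracket on $\calO(\X)(U)$ is built from polydifferential operators supported on $U$, so it restricts to each open and is compatible with the extension maps of the precosheaf. I would then check the three $P_0$ axioms: the bracket has the correct degree, it is a biderivation with respect to the commutative product (so that the Leibniz rule holds), and it satisfies the Jacobi identity, the last of which follows from the self-commuting condition on $\bar\Pi$.

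The main obstacle, and the step requiring genuine care, is reconciling the \emph{homotopy} nature of $\bar\Pi$ with the requirement that $\calO(\X)$ be a \emph{strict} $P_0$ factorization algebra, parallel to how Costello and Gwilliam obtain a strict $P_0$ structure in the non-degenerate case. In the symplectic setting strictness is automatic because the bracket comes from a strict symplectic pairing; here one must argue that, after possibly replacing $\calO(\X)$ by a quasi-isomorphic model or using the explicit local form of $\bar\Pi$, the higher homotopies can be absorbed so that the resulting bracket is strictly Poisson, or else work at the level of homotopy $P_0$ algebras and invoke the rectification of homotopy $P_0$ factorization algebras to strict ones. Finally I would verify that the bracket is compatible with the factorization product---that is, that for disjoint opens the bracket of observables from separated regions vanishes---which is immediate from the locality of $\bar\Pi$ and the disjointness of supports, thereby completing the identification of $\calO(\X)$ as a $P_0$ factorization algebra.
\end{Proof}
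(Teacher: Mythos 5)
The central difficulty your proposal never engages is that the bracket you construct is not well defined on all of $\calO(\X)$, and this is precisely the point the paper's formulation is designed to handle. Observables in $\calO(\X)_U=\widehat\Sym^\bullet(\scg_U^*[-1])$ are built from \emph{distributional} sections (recall $\scg^*=\bar\scg^!_c$), so the covector component of $d_\dR f$ is in general distributional. Contracting the polydifferential operator $\Pi$ against $d_\dR f$ then produces a distributional section, and pairing that against the (also distributional) covector component of $d_\dR g$ is an ill-defined product of distributions. This is why the body of the paper (Theorem \ref{dftf}) asserts the $P_0$ structure on the sub-precosheaf $\calO_{md}(\X)$ of functionals whose de Rham differential has mollified covector component: on $\calO_{md}(\X)$ the contraction $\{f,g\}=\langle \Pi(d_\dR f),d_\dR g\rangle_{\T_\X}$ makes sense, exactly as in the non-degenerate Proposition where $\Pi_\omega$ plays the role of $\Pi$. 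The introduction's phrasing ``$\calO(\X)$'' is shorthand for this; passing from $\calO_{md}(\X)$ to $\calO(\X)$ would require the ellipticity/Atiyah--Bott argument that the paper explicitly sets aside. Saying ``locality of $\Pi$ guarantees the bracket is built from polydifferential operators'' does not address this: locality gives the support and factorization properties, not analytic well-definedness of the pairing.

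Second, your ``main obstacle'' paragraph addresses a problem that is not part of the statement, and resolves it with a tool that does not exist here. By the paper's definition, a degenerate classical field theory carries a \emph{strict} local $(-1)$-shifted Poisson structure: a bivector $\Pi\in\Gamma_{m\loc}(\X,\Sym^2\T_\X)[1]$ satisfying $\calL_{Q_\scg}\Pi=0$ and $[\Pi,\Pi]=0$. (Homotopy structures $\bar\Pi$ arise as outputs of the boundary-theory construction, but the theorem, and the very definition of degenerate theory, concern only the strict case.) Hence the induced bracket is strictly Poisson with no rectification step: Jacobi follows from $[\Pi,\Pi]=0$ via the multilocal Schouten bracket of Proposition \ref{SB}, and compatibility of the bracket with the Chevalley--Eilenberg differential --- a $P_0$ axiom your list omits --- follows from $\calL_{Q_\scg}\Pi=0$. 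Invoking ``rectification of homotopy $P_0$ factorization algebras to strict ones'' would in any case be an unjustified appeal, since no such result is proved or cited in the paper. With these two corrections --- work on $\calO_{md}(\X)$, and use the strict structure equations in place of a homotopy Maurer--Cartan condition --- the remainder of your outline (the commutative factorization structure, the biderivation property, and vanishing of brackets of disjointly supported observables by locality of $\Pi$) is consistent with the machinery the paper sets up for this purpose.
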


The locality conditions in our definition of shifted Poisson structure, as well as the original definition of local moduli problem, were chosen to ensure this result. We emphasize the factorization structure in this result is a significant amount of additional information; for example, in the case of chiral conformal field theory, this information is equivalent to that of a Poisson vertex algebra or Coisson algebra, rather than just a Poisson algebra.

In Section $\ref{bt}$ of the paper we study classical field theories on manifolds with boundary, their phase spaces, boundary conditions, and induced boundary theories, in this local, formal framework. Given a (non-degenerate) classical field theory $(\X,\omega)$ on a manifold $M$ with boundary $\del M$, we define the phase space of $(\X,\omega)$ on $\del M$ as a local moduli problem $\X^\del$ on $\del M$ together with a local $0$-shifted symplectic form $\omega^\del$ on $\X^\del$ which can be constructed from the data of $(\X,\omega)$; to state this result most easily, we at this point restrict our attention to theories which have the property of being topological in the direction normal to the boundary of the manifold. Setting aside the local differential geometric subtleties of our definitions, the notion of phase space defined here agrees with the definition of boundary BFV theory in \cite{CattaneoMnevReshetikhin}.

Next, we define the notion of a regular embedded, local boundary condition for $(\X,\omega)$: this is a local moduli problem $\scL_+$ on $\del M$, together with a homotopically strict map $\scL_+\into \X^\del$ of local moduli problems such that $\scL_+$ is in a certain sense a Lagrangian in $\X^\del$ with respect to its $0$-shifted symplectic structure. Our main result is then that under these hypotheses, there is a natural local homotopy $(-1)$-shifted Poisson structure:

\begin{theo}
Let $(\X,\omega)$ be a classical field theory on a manifold with boundary $M$, and $\scL_+\into \X^\del$ a regular embedded, local boundary condition for $(\X,\omega)$. Then there is a canonical local $(-1)$-shifted homotopy Poisson structure $\bar\Pi$ on $\scL_+$.
\end{theo}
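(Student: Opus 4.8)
The plan is to translate the statement into local $L_\infty$ algebras, use the Lagrangian condition to identify functions on the phase space with shifted polyvector fields on the boundary condition, and then read off $\bar\Pi$ from the symplectic potential of the phase space. Concretely, I would write $\scg:=\T_o[-1]\X^\del$ and $\scl:=\T_o[-1]\scL_+$ for the local $L_\infty$ algebras underlying the phase space and the boundary condition, so that the homotopically strict map $\scL_+\into\X^\del$ becomes a strict injection $\iota\colon\scl\to\scg$. The $0$-shifted symplectic form $\omega^\del$ is an invariant $\Dens$-valued pairing of degree $-2$, with musical map $\omega^\flat\colon\scg\to\scg^![-2]$, where $(-)^!:=(-)^\vee\otimes\Dens$. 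Isotropy of the Lagrangian hypothesis reads $\iota^!\circ\omega^\flat\circ\iota=0$, and non-degeneracy says the resulting complex is a short exact sequence of the underlying local graded bundles
\[
0\to\scl\xrightarrow{\ \iota\ }\scg\xrightarrow{\ \iota^!\omega^\flat\ }\scl^![-2]\to 0 .
\]
I would take the polydifferential, strict version of this sequence as the working meaning of ``regular embedded, local, Lagrangian.''

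Next I would use regular embeddedness to choose a \emph{local} splitting of this sequence, adjusted so that $\scg\cong\scl\oplus\scl^![-2]$ as local graded bundles with $\omega^\del$ in canonical off-diagonal form, pairing $\scl$ with $\scl^![-2]$. Dualizing the complementary summand shows that its function generators are $\scl[1]$, the tangent directions of $\scL_+$; consequently the splitting promotes to an isomorphism of local graded-commutative algebras
\[
\calO(\X^\del)=\CE(\scg)\ \cong\ \Pol(\scL_+,-1):=\Sym_{\calO(\scL_+)}(\T_{\scL_+}),
\]
the $(-1)$-shifted polyvector fields on $\scL_+$, under which the Poisson bracket induced by $\omega^\del$ goes to the Schouten bracket $\{\cdot,\cdot\}$. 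That the complement appears with the shift $[-2]$ is exactly what produces the shift $-1$ here, matching the $P_0$, i.e. $(-1)$-shifted Poisson, structure asserted.

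Because $\scg$ is a symplectic local $L_\infty$ algebra, its entire structure is encoded by a symplectic potential $S\in\calO(\X^\del)$, a local functional of degree $+1$ with $d_\scg=\{S,\cdot\}$ and $\{S,S\}=0$. I would define $\bar\Pi$ to be the image of the interacting part of $S$ under the identification above; its component of polyvector degree $\geq 2$ is a local bivector-and-higher, while the lower-order part reproduces the differential $d_\scl$. Decomposing the single master equation $\{S,S\}=0$ by polyvector degree then yields $d_\scl^2=0$, the compatibility $d_\scl\bar\Pi=0$, and the shifted Jacobi identity $\{\bar\Pi,\bar\Pi\}=0$, that is, exactly the statement that $\bar\Pi$ is a Maurer--Cartan element of the dg Lie algebra of local polyvector fields, hence a local $(-1)$-shifted homotopy Poisson structure on $\scL_+$. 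Locality of $\bar\Pi$ in each arity is inherited from the polydifferential nature of the $\ell_n^\scg$, of $\omega^\flat$, and of the chosen splitting, while strictness of $\iota$ guarantees $\bar\Pi$ carries no term obstructing the inclusion $\scl\to\scg$.

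For canonicity I would check that two admissible local splittings differ by a local self-map of the sequence, acting on $\bar\Pi$ by a gauge transformation of Maurer--Cartan elements, so the induced $(-1)$-shifted homotopy Poisson structure is well defined up to canonical equivalence; and that $\scL_+$ is recovered as a Lagrangian in the higher Poisson centre $Z_{\bar\Pi}(\scL_+)$, reconstructing $(\X^\del,\omega^\del)$ and tying the construction to the correspondence sketched in the introduction. I expect the principal obstacle to lie in the second paragraph together with the locality check of the third: producing the isomorphism $\calO(\X^\del)\cong\Pol(\scL_+,-1)$ via a genuinely polydifferential splitting rather than merely an algebraic one over global sections, and verifying that it intertwines the $\omega^\del$-Poisson bracket with the Schouten bracket with all signs, cohomological shifts and density twists correct. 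This is where the regularity hypothesis is essential, and where the infinite-dimensional, local nature of the theory makes the bookkeeping delicate.
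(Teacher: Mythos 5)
Your proposal follows essentially the same route as the paper's proof: choose a complementary Lagrangian splitting $\scg^\del\cong\scl_+\oplus\scl_+^![-2]$ (which the paper builds into the definition of regular embedded local boundary condition), use it to identify the local functionals $\calO_\loc(\X^\del)$ with the multilocal polyvector fields $\Gamma_{m\loc}(\scL_+,\widehat\Sym^\bullet\T_{\scL_+})$ intertwining the $\omega^\del$-bracket with the Schouten bracket, then decompose the degree-$1$ action functional $S^\del$ by polyvector degree so that the classical master equation $\{S^\del,S^\del\}=0$ becomes $[Q_{\scl_+}+\bar\Pi,Q_{\scl_+}+\bar\Pi]=0$. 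You also correctly identify where the real work lies, namely carrying out the identification with genuinely polydifferential operators (the paper does this via jets, formal adjoints, and the contraction $\Dens_N\otimes_{\calD_N}\Diff(\Dens_N,\scl_+)\cong\scl_+$) and verifying the bracket compatibility.
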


We define the boundary theory associated to $(\X,\omega)$ and $\scL_+$ as the resulting degenerate classical field theory on $\del M$, in the case where $\bar\Pi$ is in fact a strict Poisson structure. In general, the boundary theory will just be a formal moduli problem equipped with a local $(-1)$-shifted homotopy Poisson structure.

In the final subsection of Section $\ref{bt}$, we formulate the notion of local higher Poisson centre of a local moduli problem over a closed manifold $N$, equipped with a local $(-1)$-shifted homotopy Poisson structure, as another local moduli problem $Z_{\bar\Pi}(\scL)$ on $N$ together with a canonical local, $0$-shifted symplectic structure. Further we construct a (non-degenerate) classical field theory $\calU_{\bar\Pi}(\scL)$ on $N\times \R_{\geq 0}$ which has phase space on $N$ given by the local higher Poisson centre $\calZ_{\bar\Pi}(\scL)$, and a canonical boundary condition given by the natural map $\scL\into \calZ_{\bar\Pi}(\scL)$; by construction, this boundary condition yields the original degenerate classical field theory $\scL$ as its boundary theory, and moreover this theory is the universal such theory yielding $\scL$ as a boundary theory.

In Section $\ref{section:examples}$, we give a detailed discussion of several interesting examples to which we apply this formalism. We begin by discussing topological classical mechanics valued in a $0$-shifted symplectic formal moduli problem $X$; its phase space is just the target formal moduli problem $X$ itself, and its boundary conditions are simply strict derived Lagrangians $L_+\to X$ with underlying vector space map injective. More generally, using the framework discussed above for degenerate classical field theories, we can consider topological classical mechanics valued in a $0$-shifted Poisson formal moduli problem; the corresponding universal bulk theory yields the well-studied Poisson sigma model for the given target.

Next, we discuss 3-dimensional classical (complexified) Chern--Simons theory with gauge group $G$: the phase space of this theory on a manifold $M$, with boundary given by a compact Riemann surface $\Sigma$, is the local, formal analogue of $\mr{Flat}_G(\Sigma)$, which inherits its $0$-shifted symplectic structure as a twisted cotangent bundle of $\mr{Bun}_G(\Sigma)$. One natural choice of Lagrangian in this space is the cotangent fibre over the trivial bundle; the resulting boundary condition yields as a boundary theory the perturbative chiral WZW model, a degenerate classical field theory which has corresponding $P_0$ factorization algebra equivalent to the affine Kac--Moody Poisson vertex algebra.

Another natural Lagrangian in $\mr{Flat}_G(\Sigma)$ is the space $\mr{Op}_G(\Sigma)$ of $G$-opers on $\Sigma$, introduced by \cite{BDopers}. The perturbative incarnation of this Lagrangian gives a boundary condition for Chern--Simons with boundary theory given by chiral Toda theory, another interesting example of a degenerate field theory. The $P_0$ factorization algebra corresponding to classical chiral Toda theory recovers the classical BRST complex for Drinfeld--Sokolov reduction of the affine Kac--Moody Poisson vertex algebra \cite{DrinfeldSokolovI} \cite{DrinfeldSokolovII}, yielding the classical affine $\calW$-algebra, as a Poisson vertex algebra.

Next, we introduce the Kapustin--Witten twist of $\calN=4$ supersymmetric Yang-Mills theory, studied by \cite{KapustinWitten} to explain the geometric Langlands program in terms of quantum field theory, which is in fact the universal bulk theory of Chern--Simons. We describe the Kapustin--Witten theory itself in detail, and explain this claim in our framework. Further, we define in our context a twisted version of a Nahm pole boundary condition for Kapustin--Witten theory studied in \cite{GaiottoWittenBoundary} \cite{GaiottoWittenKnot}, which is S-dual to the boundary condition recovering Chern--Simons theory. We study the boundary theory corresponding to this boundary condition as a three dimensional degenerate classical field theory, which we call the Whittaker theory, and briefly describe its expected role in relation to geometric representation theory.

We wish to again acknowledge the related work of Cattaneo, Mnev, and Reshetikhin \cite{CattaneoMnevReshetikhin} \cite{CattaneoMnevReshetikhinII} on field theories on manifolds with boundary in the BV formalism. While this paper has some overlap with their work, and certainly shares a common general perspective, the technical results and examples presented here are complementary to those occurring in their papers: in \cite{CattaneoMnevReshetikhin} the study of boundary BFV theories, which we call phase spaces, is emphasized. In this paper, we introduce the framework of degenerate field theories, and using this describe the notion of boundary theory presented here.

\subsection{Acknowledgements}

We would like first to thank Kevin Costello, for all his help in developing our perspectives on classical and quantum field theory in general. Further, we would like to thank Ryan Grady and Brian Williams for several useful discussions on this material, as well as their efforts in ongoing collaboration with the authors on extensions of the present work. We also thank Pavel Safronov for useful discussions. We are grateful to Albert Schwarz for his interest, and to Pavel Mnev for helping clarify the relationship of this work with the existing literature.

\newpage

\subsection{Conventions}

Here is a collection of conventions we use throughout the work.

\begin{itemize}
\item A generic spacetime manifold is denoted by $M$. $C^\infty_M$ is the sheaf of smooth functions on $M$, $\calD_M$ is the sheaf of differential operators on $M$, and $\Dens_M$ is the sheaf of densities.
\item A generic space is denoted by $X$. In particular, by abuse of notation, we write a formal moduli problem as $X$. On the other hand, $ \g$ stands for a generic $L_\infty$ algebra. If the two notations appear in the same place, one should think of $\g= \T_o[-1]X$ and $X=B\g$ by the fundamental theorem of deformation theory.
\item For a formal moduli space $X$ and the corresponding $L_\infty$ algebra $\g$, one defines $\T_X $ to be the vector bundle corresponding to the $\g$-module $\g[1]$ and $\LL_X$ to be corresponding to the $\g$-module $\g^*[-1]$.
\item We use the script font to indicate sheafiness on $M$: $\X$ will denote a presheaf of formal moduli problems and $\scg$ a presheaf of $L_\infty$-algebras. For each open set $U\subset M$, we write the assignments as $\X_U$ and $\scg_U$, respectively.

\item For $\calE,\calE_i,\calF$ sheaves of sections of vector bundles, we let $\Diff(\calE,\calF)$ and $\Poly(\calE_1\otimes ...\otimes \calE_n,\calF)$ denote the sheaves of differential and polydifferential operators. The latter is defined by
$$\Poly(\calE_1\otimes ...\otimes \calE_n,\calF)=\Diff(\calE_1,C^\infty_M)\otimes_{C^\infty_M} ... \otimes_{C^\infty_M} \Diff(\calE_n,C^\infty_M)\otimes_{C^\infty_M} \calF$$

\item For $\scg$ the sheaf of sections of a vector bundle $L$, for example those underlying the presheaves of $L_\infty$ algebras above, we will introduce several notations:
\subitem $\bar\scg$ will denote the sheaf of distributional sections of $L$.
\subitem $\scg_c$ will denote the cosheaf of compactly supported sections of $L$.
\subitem $\scg^\vee$ will denote the sheaf of sections of the dual bundle $L^\vee$.
\subitem $\scg^!:=\scg^\vee\otimes_{C^\infty_M} \Dens_M$ will denote the sheaf of sections of the Verdier dual bundle $L^!=L^\vee\otimes \Dens_M$.
\subitem $\scg^*=\bar\scg^!_c$ will denote the cosheaf of compactly supported, distributional sections of the Verdier dual bundle $L^!$. Note that on each open set this gives the linear dual of $\scg$ in the category of nuclear Fr\'echet spaces.
\subitem $\calJ(\scg)$ will denote the sheaf of sections of the infinite jet bundle $J(L)$.

\item The notation $\otimes$ without subscript when applied to infinite dimensional vector spaces denotes the completed, projective tensor product in the category of nuclear Fr\'echet spaces.
\end{itemize}

\newpage

\section{Classical Field Theories \label{ClFT}}

In this section, we begin by reviewing the work of Costello \cite{CostelloBook} and Costello and Gwilliam \cite{CostelloGwilliam} on classical field theories: We first define local $L_\infty$ algebras and their corresponding local moduli problems, the primary objects of study in their formulation of classical field theory. Then, we define local $L_\infty$ modules and vector bundles on local moduli problems, their geometric counterpart. Further, we recall the notion of local sections of a vector bundle on a local moduli problem, and explain that the defining data of a local $L_\infty$ algebra can be understood in terms of local vector fields on a formal moduli problem. Next, we recall the definition of a strictly local $n$-shifted symplectic structure, and of a classical field theory in the BV formalism in these terms, as well as the main result of \cite{CostelloGwilliam} on classical field theories, which constructs a $P_0$ factorization algebra of local observables from the differential geometric input data of a classical field theory. Finally, we state the expected equivalence between classical field theories in this sense and local action functionals satisfying the classical master equation.

In the final subsection of this section, we begin our original work by defining local (homotopy) $(-1)$-shifted Poisson structures on local moduli problems and then defining degenerate classical field theories in these terms. Our first main result is Theorem $\ref{dftf}$, which states that the local observables of degenerate classical field theory also form a $P_0$ factorization algebra.

\subsection{Local $L_\infty$ Algebras and Local Moduli Problems \label{lla}}

As we have discussed in the introduction, we will describe the space of solutions to the Euler--Lagrange equations of a classical field theory in terms of a presheaf of formal moduli problems. Further, it will be convenient to describe each of the formal moduli problems $\X_U$ in terms of its $(-1)$-shifted tangent complex $\scg_U=\T_o[-1]\X_U$. The $L_\infty$ algebra structure on $\scg_U$ encodes all the geometric information about the formal moduli problem $\X_U$, and our calculations will often be most easily understood in these terms. Moreover, the locality constraint on the presheaf $\X$ of formal moduli problems coming from the Euler--Lagrange equations of a classical field theory is easily phrased in terms of the corresponding presheaf of $(-1)$-shifted tangent complexes $\scg$ as follows:

\begin{defn} A \emph{local $L_\infty$ algebra} on $M$ is a smooth graded vector bundle $L$ on $M$, with sheaf of sections $\scg$, together with a collection of polydifferential operators 
$$\{l_n:\scg^{\otimes n} \to \scg[2-n]\}_{n\in\N}$$
making $\scg$ into a presheaf on $M$ of $L_\infty$ algebras.

A \emph{local moduli problem} $\X$ on $M$ is a presheaf of formal moduli problems with presheaf of $(-1)$-shifted tangent complexes $\T_o[-1]\X$ modeled by a local $L_\infty$ algebra $\scg$ on $M$.

A local $L_\infty$ algebra $\scg$ is called \emph{abelian} if $l_n=0$ for $n\geq 2$ and \emph{trivial} if $l_n=0$ for $n\geq 1$.
\end{defn}

For each local moduli problem $\X$ with corresponding local $L_\infty$ algebra $\T_o[-1]\X=\scg$, we let $\calO(\X)\equiv C^\bullet(\scg)$, the Chevalley--Eilenberg cochains on $\scg$, be the precosheaf of cdgas defined by:
$$\calO(\X)_U=C^\bullet(\scg)_U= \widehat\Sym^{\bullet}(\scg_U^*[-1])$$
equipped with the Chevalley--Eilenberg differential, defined by precomposition with the polydifferential operators defining the $L_\infty$ structure maps of $\scg$.

Many explicit examples of local moduli problems of significant physical and mathematical interest will be discussed in Section $\ref{section:examples}$. Throughout the rest of the text we will consistently discuss only a few very simple examples, and some generalities on classes of examples.

\begin{eg} Let $M=\{*\}$ be a point. Then all of the differential geometric subtleties of the definitions above hold vacuously, and thus a local $L_\infty$ algebra over $M$ is just a finite type $L_\infty$ algebra and a local moduli problem over $M$ is just a finite type formal moduli problem.
\end{eg}

\begin{eg} Let $E$ be a vector bundle over $M$ with sheaf of sections $\calE$ and $L=E[-1]$ with sheaf of sections $\scg$. Then we can consider $\scg$ a trivial $L_\infty$ algebra with corresponding local moduli problem $\X$ satisfying
$$\calO(\X)_U= \widehat\Sym^\bullet( \calE_U^*)$$
which is simply the space of formal power series on $\calE_U$. Thus, we can identify $\X_U$ with the affine space $\calE_U$, and thus think of $\X$ as the sheaf of infinite dimensional affine spaces given by $\calE$ itself.

Similarly, given an abelian $L_\infty$ algebra $\scg=\calE[-1]$, we can identify the corresponding local moduli problem $\X$ with the sheaf of dg affine spaces $\calE$.
\end{eg}

\begin{eg} Let $M$ be a smooth manifold, $X$ be a complex manifold, and $\g$ be a finite type $L_\infty$ algebra with $L_\infty$ structure maps $\{l_n\}_{n\in\N}$. We define local $L_\infty$ algebras over $M$ and $X$, respectively, by
\begin{align*} \T_o[-1]\underline{\mr{Map}}(M_\dR,B\g):= \Omega^\bullet_M\otimes \g  && l_1=d_M\otimes \id_\g + \id_{\Omega^\bullet_M}\otimes l_1^\g\ , \quad l_n= \mu^n\otimes l_n^\g \quad\text{for $n\geq 2$}  \\
\T_o[-1]\underline{\mr{Map}}(X_\delbar,B\g):= \Omega^{0,\bullet}_X\otimes \g && l_1=\delbar_X \otimes \id_\g + \id_{\Omega^{0,\bullet}_X}\otimes l_1^\g\ , \quad l_n= \mu^n\otimes l_n^\g \quad\text{for $n\geq 2$} 
\end{align*}
where $\mu^n:(\Omega^\bullet_M)^{\otimes n}\to \Omega^\bullet_M$ denotes the $n$-ary algebra multiplication, and similarly for $\Omega_X^{0,\bullet}$. We denote the corresponding local moduli problems by
$$\widehat{\underline{\mr{Map}}}_o(M_\dR,B\g) \quad\quad\text{and}\quad\quad \widehat {\underline{\mr{Map}}}_o(X_\delbar,B\g) $$ 
respectively. We call such local moduli problems \emph{formal mapping spaces}; these include the local moduli problems underlying AKSZ type classical field theories $\cite{AKSZ}$.

More generally, let $\Y$ be a local moduli problem over a manifold $N$ and $\scg=\T_o[-1]\Y$ the corresponding local $L_\infty$ algebra. We define the local $L_\infty$ algebras over $M\times N$ and $X\times N$, respectively, by
\begin{align*} \T_o[-1]\underline{\mr{Map}}(M_\dR,\Y):= \Omega^\bullet_M\otimes \scg  && l_1=d_M\otimes \id_\scg + \id_{\Omega^\bullet_M}\otimes l_1^\scg\ , \quad l_n= \mu^n\otimes l_n^\scg \quad\text{for $n\geq 2$}  \\
\T_o[-1]\underline{\mr{Map}}(X_\delbar,\Y):= \Omega^{0,\bullet}_X\otimes \scg && l_1=\delbar_X \otimes \id_\scg + \id_{\Omega^{0,\bullet}_X}\otimes l_1^\scg\ , \quad l_n= \mu^n\otimes l_n^\scg \quad\text{for $n\geq 2$} 
\end{align*}
We denote the corresponding local moduli problems on $M\times N$ and $X\times N$ by
$$\widehat {\underline{\mr{Map}}}_o(M_\dR,\Y) \quad\quad\text{and}\quad\quad \widehat {\underline{\mr{Map}}}_o(X_\delbar,\Y) ,$$
respectively. Note that we are, by abuse of notation, using $\Omega^\bullet_M\otimes \scg$ to denote the sheaf of sections of the vector bundle $\wedge^\bullet T^*_M\boxtimes L$ over $M\times N$, and that the local $L_\infty$ structure maps as defined are indeed polydifferential operators on this bundle, as required; this holds for $\Omega^{0,\bullet}_X\otimes \scg$ as well.

In Subsection $\ref{slss}$ we will discuss the AKSZ construction of shifted symplectic structures on such formal moduli problems.
\end{eg}

We now give a remark on the comparison between the notion of formal mapping spaces in terms of local moduli problems given above in terms of smooth differential geometry, and the global derived algebraic geometry approach to mapping stacks and various spaces defined in terms of them. In particular, this comparison will be important for understanding the global motivation for many of the examples discussed in Section $\ref{section:examples}$.

\begin{rmk}\label{rmk:algebraic vs smooth}
Although we work primarily in categories defined in terms of smooth differential geometry, there is also a global algebraic framework for discussing field theories; see for example $\cite{PTVV}$. This framework can be used in special situations to capture the global algebraic structure of the moduli spaces of the solutions to the equations of motion, when such a subtle algebraic structure needs to be considered, for instance, in the context of the geometric Langlands program \cite{EY}. Of course, a generic field theory does not have any inherent global algebraic structure, but such information, when it does exist, is often crucial for understanding relationships between field theory and pure mathematics.

The underlying spaces of interest for a field theory are usually a real manifold $M$ or a complex manifold $X$, which we assume is the complex points of a smooth variety $X$, to facilitate comparison with the algebraic framework. In derived algebraic geometry, one defines the Betti stack $M_\mr{B}$, the Dolbeault stack $X_{\mr{Dol}}$ and the de Rham stack $X_{\mr{DR}}$. On the other hand, in the smooth category, one works with the ringed spaces $M_{\mr{sm}}  = (M ,C^\infty_M) $, $M_{\mr{dR}} = (M, (\Omega^\bullet_M , d_{M}))$, and $X_{\delbar } = (X, (\Omega^{0,\bullet}_X , \delbar_X ))$; see for example \cite{CostelloWittenGenusII}, \cite{GradyGwilliamLinfinity}. We list here the analogous pairs of spaces in the two frameworks, along with several moduli spaces of $G$-bundles one can consider on them: \\

\begin{tabular}{|c|c|}
\hline
 global algebraic & smooth formal \\
\hline
$M_{\mr{B}}$ & $M_{\mr{dR}}$\\[0.5ex]
\hline
 $X$ & $X_{\delbar }$  \\[0.5ex]
\hline
  $X _{\mr{Dol}} $ & $(X,  (\Omega^{\bullet,\bullet}_X , \delbar_X ) )$ \\[0.5ex]
  \hline
  $X _{\mr{DR}} $ & $ (X,  (\Omega^{\bullet,\bullet}_X , d_X ) )$\\[0.5ex]
\hline
\end{tabular} \qquad
\begin{tabular}{|c|c|}
\hline
 global algebraic & smooth formal \\
\hline
 $ \mr{Loc}_G(M)= \underline{\mr{Map}}(M_{\mr{B}}, \mr{BG} )$ & $\scg=(\Omega^{\bullet}_M\otimes \g ,  d_M )$  \\ [0.5ex]
\hline
  $ \mr{Bun}_G(X) =  \underline{\mr{Map}}(X, \mr{BG} )$ & $\scg=(\Omega^{0,\bullet}_X\otimes \g , \delbar_X )$  \\[0.5ex]
\hline
  $\mr{Higgs}_G(X) =  \underline{\mr{Map}}(X_{\mr{Dol}}, \mr{BG} )$ & $\scg=(\Omega^{\bullet,\bullet}_X\otimes \g , \delbar_X) $\\[0.5ex]
  \hline
$\mr{Flat}_G(X) =  \underline{\mr{Map}}(X_{\mr{DR}}, \mr{BG} )$ & $\scg=(\Omega^{\bullet,\bullet}_X\otimes \g ,  d_X )$ \\[0.5ex]
\hline
\end{tabular}\\

Note that in the smooth formal description, $\mr{Loc}_G(X)$ and $\mr{Flat}_G(X)$ have the same presentation, because $(X,  (\Omega^{\bullet,\bullet}_X , d_X ) )$ can be written as $X_{\mr{dR}}$ if $X$ is regarded as a real manifold: this is the reason why we might want to use the algebraic language when describing the global moduli space and why one can freely decide the global algebraic model while working with a fixed smooth formal model. On the other hand, $M_{\mr{dR}}$ in the smooth category clearly has more information than the one of $M_{\mr{B}}$ from the homotopy type of $M$, which is crucial for our discussion throughout. Also, $X_{\mr{dR}}$ has much better flexibility than $X_{\mr{DR}}$ which is purely of an algebraic nature.

In our text, we use advantages of both presentations in the following way. First of all, from the global algebraic description of the moduli space, using the adjunction of the mapping stack $\underline{\mr{Map}}(X \times Y,Z) = \underline{ \mr{Map} }( X, \underline{\mr{Map}}(Y,Z)  )$, one can always move the algebraic dependence to the target. Often times the remaining dependence on the domain is given by either $X_{\mr{B}}$ or $X_{\mr{DR}}$, in which case we read it as $X_{\mr{dR}}$. In this way, we can simultaneously capture the algebraic dependence of interest and retain the flexibility we need to work in the smooth formal description of field theory. Many explicit examples will be discussed in Section \ref{section:examples}.
\end{rmk}

Finally, we define the notion of maps of local $L_\infty$ algebras and correspondingly of local moduli problems:

\begin{defn} Let $\scg,\sch$ be local $L_\infty$ algebras on $M$ with underlying vector bundles $L,H$. A \emph{homotopically strict, strictly local map} $f:\scg\to\sch$ of local $L_\infty$ algebras is a map $L\to H$ of vector bundles on $M$ such that the induced map on sections $f:\scg\to \sch$ satisfies
$$l^\sch_n\circ f^{\otimes n}= f\circ l^\scg_n$$
for each $n\in\N$, where $\{l^\scg_n\}_{n\in\N},\{l^\sch_n\}_{n\in\N}$ are the $L_\infty$ brackets of $\scg,\sch$, respectively.

Let $\X,\mathscr{Y}$ be local moduli problems. A \emph{homotopically strict, strictly local map of local moduli problems} $F:\X\to \Y$ is a homotopically strict, strictly local map of local $L_\infty$ algebras $f:\T_o[-1]\X\to \T_o[-1]\Y$.
\end{defn}
Note that the term homotopically strict refers to the fact that the maps of $L_\infty$ algebras are strict maps in the homotopical sense, and the term strictly local refers to the fact that the maps are required to be built from bundle maps rather than from arbitrary differential operators. In subsequent work, as well as perhaps in later updates of the present paper, more general notions of maps of local $L_\infty$ algebras, and their corresponding maps of formal moduli problems, will appear.

\subsection{Local $L_\infty$ Modules and Vector Bundles on Local Moduli Problems}

In this subsection, we introduce the notions of local $L_\infty$ modules and vector bundles on local moduli problems, discuss various spaces of sections of these vector bundles, define the tangent and cotangent bundles to a local moduli problem and discuss their natural geometric features.

A vector bundle over a local moduli problem should give, for each open set $U$ of $M$, a derived space of sections $\Gamma(\X_U,\V_U)$ which is a dg-module of $\calO(\X)_U$ in a local way. The correct model for this structure is the following:

\begin{defn} Let $\scg$ be a local $L_\infty$ algebra. A \emph{local $L_\infty$ module} for $\scg$ is a smooth graded vector bundle $V$ on $M$ with sheaf of sections $\scv$, together with a differential operator $d:\scv\to \scv[1]$ satisfying $d^2=0$, and a collection of polydifferential operators
$$\{m_n:\scg^{\otimes n}\otimes \scv\to \scv[1-n]\}_{n\geq 0}$$
making $\scv$ into a presheaf on $M$ of $L_\infty$ algebra modules for the presheaf of $L_\infty$ algebras $\scg$.

Let $\X$ be a local moduli problem and $\scg=\T_o[-1]\X$. We define a \emph{vector bundle} $\V$ on $\X$ as a local $L_\infty$ module $\scv$ for $\scg$.
\end{defn}

Given a local $L_\infty$ algebra $\scg$ and a local $L_\infty$ module $\scv$ for $\scg$, we define $C^\bullet(\scg;\scv)$ and $\bar C_c^\bullet(\scg;\scv)$, the spaces of mollified and general Chevalley--Eilenberg cochains on $\scg$ with coefficients in $\scv$, to be the assignments to each $U\subset M$ the $\calO(\X)_U$ dg-modules defined by
$$C^\bullet(\scg; \scv)_U=\widehat\Sym^\bullet(\scg_U^*[-1])\otimes \scv_U \quad\quad \text{and}\quad\quad \bar C_c^\bullet(\scg;\scv)_U=\widehat\Sym^\bullet(\scg_U^*[-1])\otimes  (\bar \scv_c)_U$$
respectively, each equipped with their Chevalley--Eilenberg differential, defined by precomposition with the polydifferential operators defining the structure maps of the local $L_\infty$ algebra and postcomposition from the structure maps of the local $L_\infty$ module.

\begin{defn} Let $\X$ be a local moduli problem, $\scg=\T_o[-1]\X$, and $\V$ a local vector bundle on $\X$ corresponding to a local $L_\infty$ module $\scv$ for $\scg$.

The families $\Gamma(\X,\V)$ and $\bar\Gamma_c(\X,\V)$ of spaces of \emph{mollified} and \emph{general sections} of the local vector bundle $\V$ on $\X$ assign to each $U\subset M$ the $\calO(\X)_U$ dg-modules
$$\Gamma(\X,\V)_U=C^\bullet(\scg;\scv)_U\quad\quad\text{and}\quad\quad \bar\Gamma_c(\X,\V)_U= \bar C_c^\bullet(\scg;\scv)_U$$
each equipped with their Chevalley--Eilenberg differentials.
\end{defn}

Note that $\Gamma(\X,\V)$ does not define a presheaf or precosheaf on $M$ in general, and indeed this is not our expectation. We will see below that the total space of $\V$ will itself define a local moduli problem, and the corresponding presheaf of spaces on $M$ can be understood as a presheaf valued in the category of vector bundles over formal moduli problems, covering the presheaf $\X$ of formal moduli problems; however, the structure maps for this presheaf are bundle maps covering non-trivial maps of spaces, which needn't induce maps on spaces of sections.

As usual, applying functorial vector space operations to a vector bundle $\V$ on a local moduli problem $\X$ yields new vector bundles: the categories of smooth graded vector bundles on $M$ and usual $L_\infty$ modules both admit direct sums, degree shifts, and tensor, symmetric and alternating products, and each in a compatible way, so that these operations extend naturally to vector bundles over local moduli problems; these operations are denoted as usual.

\begin{eg}
Let $\X$ a local moduli problem with $\scg=\T_o[-1]\X$, and let $L$ denote the smooth graded vector bundle underlying the local $L_\infty$ algebra $\scg$. Then the $L_\infty$ structure maps for $\scg$ define a $\scg$-module structure on $\scg[1]$. We define the corresponding vector bundle on $\X$ to be the tangent bundle and denote it by $\T_\X$. Thus, we have
$$\Gamma(\X,\T_\X) = C^\bullet(\scg;\scg[1]). $$
Further, for each $U\subset M$ the vector fields on $\X_U$ yield infinitesimal automorphisms of $\X_U$: we have a map
$$\Gamma(\X,\T_\X[k])_U\to \Der^k(\calO(\X)_U) \quad\quad \text{defined by}\quad\quad  X(f)=\langle X,d_\dR f\rangle_{\T_\X}$$ 
for $f\in \calO(\X)_U$ and $X\in\Gamma(\X,\T_\X[k])_U$, where $\Der^k$ denotes the cohomological degree $k$ derivations of the cdga $\calO(\X)_U$, and where $d_\dR$ and $\langle\ ,\ \rangle_{\T_\X}$ are defined below, in this subsection.
\end{eg}

Since the relevant categories of vector spaces are infinite dimensional, one must be a bit more careful about defining the dual bundle of a bundle $\V$ over a local moduli problem $\X$. We define the local Verdier dual vector bundle $\V^!$ on $\X$ as corresponding to the $L_\infty$ module $\scv^!$; this is defined as having underlying vector bundle $V^!=V^\vee\otimes \Dens_M$ together with
$$\{m_n^!:\scg^{\otimes n}\otimes \scv^!\to \scv^![1-n]\}_{n\geq 0}$$
which are defined by taking the formal adjoint, in the $\scv$ variables, of the polydifferential operators defining the local $L_\infty$ module structure maps for $\V$. One sense in which this is an appropriate notion of dual bundle is that there exists a pairing
$$\langle\cdot,\cdot\rangle_\V:\Gamma(\X,\V)\otimes\bar\Gamma_c(\X,\V^!)\to \calO(\X) $$
which is non-degenerate on each $U\subset M$, defined by the duality pairing $\scv\otimes \bar\scv_c^!\to k$.

\begin{eg} We define the cotangent bundle to a local moduli problem $\X$ by $\LL_\X=\T_\X^!$. In particular, we have $\Gamma(\X,\LL_\X)=C^\bullet(\scg;\scg^![-1])$. Further, there exists a map of families of $\calO(\X)$-dg-modules
$$d_\textup{dR}:\calO(\X)\to \bar\Gamma_c(\X, \LL_\X)$$
giving the family of de Rham differentials on the local moduli problem $\X$ over $M$, defined degree-wise by the inclusion $\Sym^n(\scg^*[-1])\into \Sym^{n-1}(\scg^*[-1])\otimes \scg^*[-1]$.
\end{eg}

Next, we introduce another sub precosheaf of cdgas $\calO_{md}(\X)$ of $\calO(\X)$, which is defined by the condition that covector component of the de Rham differential of such a function is mollified. Formally, we define $\calO_{md}(\X)$ by the pullback square:
\[\xymatrix{
\calO_{md}(\X)  \ar[r] \ar[d] & \widehat \Sym^\bullet(\scg^*[-1])\otimes \scg^!_c[-1] \ar[d]  \\
 \calO(\X) \ar[r]^{{d_\dR}}   & \bar \Gamma_c(\X,\LL_\X) 
}\]
Note that this subspace is closed under precomposition with the polydifferential operators defining the Chevalley--Eilenberg differential, so is indeed a sub cdga.

Finally, we define the total space of a vector bundle:
\begin{defn} Let $\X$ be a local moduli problem with corresponding local $L_\infty$-algebra $\scg=\T_o[-1]\X$, and let $\V$ a vector bundle on $\X$ corresponding to a local $L_\infty$ module $\scv$ for $\scg$. We define the \emph{total space} $|\V|$ as the local moduli problem defined by
$$\T_o[-1]|\V|=\scg\ltimes \scv[-1].$$
\end{defn}
Note that in particular, we have an isomorphism of precosheaves of $\calO(\X)$ dg-modules
$$\calO(|\V|)=\bar \Gamma_c(\X,\widehat\Sym^\bullet(\V^!)).$$

\begin{eg} Let $\X$ be a formal moduli problem. We define the local moduli problem $T^*[n]\X:=|\LL_\X[n]|$ as the total space of the cotangent bundle of $\X$. We will see in Subsection $\ref{slss}$ that such formal moduli problems admit a natural constant coefficient, strictly local $n$-shifted symplectic structure.
\end{eg}

\subsection{Chevalley--Eilenberg Differentials and Local Vector Fields}\label{lvf}

In this subsection, we introduce a notion of local sections of a vector bundle over a local moduli problem. The primary motivation for this notion is Proposition $\ref{dtla}$, which states that the structure maps of a local $L_\infty$ algebra $\scg=\T_o[-1]\X$ can be interpreted as cohomological local vector fields on the local moduli problem $\tilde \X$ corresponding to $\scg$ equipped with the trivial local $L_\infty$ structure. In the next subsection, we will also use this description together with the notion of strictly local $(-1)$-shifted symplectic structure to explain how local action functionals give rise to local $L_\infty$ algebras describing the spaces of solutions to the Euler--Lagrange equations corresponding to the actions.

In general, we define the notion of local sections of a vector bundle on a local moduli problem as follows:
\begin{defn}
Let $\scg$ a local $L_\infty$ algebra and $\scv$ a $\scg$-module. We define the \emph{local Chevalley--Eilenberg cochains $C^\bullet_\loc(\scg;\scv)$ on $\scg$ with coefficients in $\scv$} by
$$C^\bullet_\loc(\scg;\scv)=\prod_{n\geq 0} \Poly(\scg[1]^{\otimes n},\scv)_{S_n}$$
equipped with the Chevalley--Eilenberg differential.

Let $\X$ a local moduli problem with $\T_o[-1]\X=\scg$ and $\V$ a vector bundle over $\X$ corresponding to the $\scg$-module $\scv$. The \emph{sheaf $\Gamma_\loc(\X,\V)$ of spaces of local sections of $\V$ over $\X$} is defined as
$$\Gamma_\loc(\X,\V)=C^\bullet_\loc(\scg;\scv).$$
\end{defn}
Note that there is a natural inclusion $\Gamma_\loc(\X,\V)\into \Gamma(\X,\V)$, which is closed under the Chevalley--Eilenberg differential on the latter, since compositions of polydifferential operators are again polydifferential operators; thus, the Chevalley--Eilenberg differential above is indeed well-defined. Further, note that spaces of polydifferential operators are by definition given by sheaves of $C^\infty_M$ modules.

We now restrict our attention to the spaces of local vector fields. Under the sequence of inclusions
$$\Gamma_\loc(\X,\T_\X[k])\into \Gamma(\X,\T_\X[k])\into \Der^k(\calO(\X)),$$
the space of local, cohomological degree $k$ vector fields is closed under the $k$-shifted Lie algebra structure on $\Der^k(\calO(\X)_U)$, that is, the Lie algebra structure on $\Der^k(\calO(\X)_U)[-k]$, given by commutator of derivations; this also follows from the fact that compositions of polydifferential operators are again polydifferential operators. In particular, with the Chevalley--Eilenberg differential given above, this implies that $\Gamma_\loc(\X,\T_\X[k])$ defines a sheaf of $k$-shifted dg Lie algebras.

We now have the following description of local $L_\infty$ algebra structures:
\begin{prop}\label{dtla}
Let $L$ be a smooth vector bundle on $M$ with sheaf of sections $\scg$, and let $\tilde \X$ be the sheaf of affine spaces corresponding to $\scg$ thought of as a trivial local $L_\infty$ algebra. The following are equivalent:
\begin{itemize}
\item A collection of polydifferential operators
$$\{l_n:\scg^{\otimes n}\to \scg[2-n]\}_{n\in\N}$$
making $\scg$ into a local $L_\infty$ algebra.
\item A vector field
$$Q_\scg\in \Gamma_{\loc}(\tilde \X,\T_{\tilde \X}[1])_M$$
with vanishing polynomial degree $0$ term, satisfying $[Q_\scg,Q_\scg]=0$. 
\end{itemize}
\end{prop}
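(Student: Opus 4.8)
The plan is to reduce the statement to the standard algebraic equivalence between $L_\infty$ structures on a graded vector space and square-zero, degree $+1$ vector fields on the associated formal space (equivalently, coderivations of the symmetric coalgebra, or the Chevalley--Eilenberg differential), the only genuinely new content being the bookkeeping needed to keep everything within polydifferential operators. First I would unwind both sides. By definition $\Gamma_\loc(\tilde\X,\T_{\tilde\X})=C^\bullet_\loc(\scg;\scg[1])=\prod_{n\geq 0}\Poly(\scg[1]^{\otimes n},\scg[1])_{S_n}$, and because $\tilde\X$ carries the \emph{trivial} local $L_\infty$ structure, the module structure on $\scg[1]$ is trivial and the Chevalley--Eilenberg differential on $\Gamma_\loc(\tilde\X,\T_{\tilde\X})$ vanishes. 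Thus a cohomological degree $+1$ local vector field $Q_\scg\in\Gamma_\loc(\tilde\X,\T_{\tilde\X}[1])_M$ is precisely a collection $Q_\scg=\sum_{n\geq 0}Q_n$ of graded-symmetric polydifferential operators $Q_n\colon\scg[1]^{\otimes n}\to\scg[1]$ of degree $+1$, and the hypothesis that the polynomial degree $0$ term vanishes is exactly the condition $Q_0=0$, i.e. that $Q_\scg$ preserves the basepoint.

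Next I would establish the bijection of data. The décalage isomorphism identifies a graded-symmetric degree $+1$ map $\scg[1]^{\otimes n}\to\scg[1]$ with a graded-antisymmetric degree $2-n$ map $\scg^{\otimes n}\to\scg$, that is, with an operator $l_n\colon\scg^{\otimes n}\to\scg[2-n]$; conversely each $l_n$ produces a unique $Q_n$. Since décalage only permutes tensor factors and inserts Koszul signs, it manifestly preserves the property of being a polydifferential operator as well as the $S_n$-(anti)symmetry, so the collections $\{l_n\}_{n\in\N}$ and $\{Q_n\}_{n\geq 1}$ determine one another. This matches the underlying data of the two bullet points, with $Q_0=0$ on the vector-field side corresponding to the absence of an $l_0$ term, so that $\scg$ is an honest, rather than curved, $L_\infty$ algebra.

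It then remains to match the two integrability conditions. As noted in the text, the shifted Lie bracket on $\Gamma_\loc(\tilde\X,\T_{\tilde\X}[1])$ given by commutator of derivations preserves local sections, because compositions of polydifferential operators are again polydifferential; since the Chevalley--Eilenberg differential vanishes here, this bracket is the entire structure. I would compute $\tfrac12[Q_\scg,Q_\scg]$ and organize it by polynomial degree: the polynomial degree $m$ component is exactly the $m$-th generalized Jacobi identity $\sum_{i+j=m+1}\pm\,l_j\circ(l_i\otimes\id)=0$ expressing the $L_\infty$ relations, under the same décalage identification. Hence $[Q_\scg,Q_\scg]=0$ holds if and only if all the $L_\infty$ relations hold, which is precisely the condition that $\{l_n\}_{n\in\N}$ makes $\scg$ into a local $L_\infty$ algebra.

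The substantive algebraic content --- the décalage dictionary and the identification of $[Q_\scg,Q_\scg]=0$ with the $L_\infty$ axioms --- is entirely standard for finite type $L_\infty$ algebras, so I expect the main point requiring care to be purely the locality bookkeeping: verifying that each step stays inside the subspaces of polydifferential operators. This, however, is essentially free, since it rests only on the stability of polydifferential operators under composition, already used above to define the shifted dg Lie structure on $\Gamma_\loc$, together with the fact that the décalage signs are applied fiberwise and do not interact with the differential-operator structure. The only remaining work is the combinatorial sign-tracking in the décalage and in the expansion of $[Q_\scg,Q_\scg]$, which I would not carry out in detail.
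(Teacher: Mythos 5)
Your proof is correct and follows essentially the same route the paper relies on: the paper states Proposition \ref{dtla} without proof, treating it as the standard dictionary between $L_\infty$ structures and square-zero degree $1$ vector fields (made explicit for the finite-dimensional case in the appendix remark on $d_{\CE}$ as a vector field on $\widetilde{B\g}$), with the locality bookkeeping handled exactly as you do, via the closure of polydifferential operators under composition already invoked in Subsection \ref{lvf}. No gaps; your décalage and $[Q_\scg,Q_\scg]=0$ analysis is precisely the intended argument.
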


\subsection{Strictly Local Symplectic Structures and Classical Field Theories}\label{slss}

In this subsection we define the notion of a strictly local $n$-shifted symplectic structure $\omega$ on local moduli problem $\X$, and define a (non-degenerate) classical field theory as a local moduli problem equipped with such a structure. Further, we recall the main result of \cite{CostelloGwilliam} on classical field theories, which states that the observables of a classical field theory form a $P_0$ factorization algebra on the underlying manifold.

\begin{defn}
Let $\X$ be a local moduli problem with $\T_o[-1]\X=\scg$ the corresponding local $L_\infty$ algebra with underlying vector bundle $L$. A \emph{strictly local $n$-shifted symplectic structure on $\X$} is an element
$$\omega\in \scg^*\otimes \scg^![n-2] \subset \bar\Gamma (\X,\wedge^2 \LL_\X[n])$$
defined by a bundle map $\omega:L \to L^![n-2]$ which is an isomorphism on each fibre, symmetric in $L$, and satisfies
$$ \calL_{Q_\scg} \omega=0,$$
where $Q_\scg\in \Gamma_\loc(\tilde \X,\T_{\tilde \X}[1])_M$ is the cohomological local vector field defining the $L_\infty$ structure on $\scg$.
\end{defn}

We let $\Pi_\omega:L^!\to L[2-n]$ denote the inverse of $\omega$, which is interpreted as a non-degenerate shifted Poisson tensor on $\X$; we will see that such $\Pi_\omega$ can be interpreted as a closed element of an appropriate space of bivector fields. The equation $\calL_{Q_\scg}\omega=0$ is simply the statement that $Q_\scg$ is a symplectic vector field for $\omega$, which we interpret equivalently in terms of $\Pi_\omega$, and for which the precise definition can be given from Proposition $\ref{SB}$.

\begin{rmk}
This definition deserves a few unpacking remarks.
\begin{itemize}
\item We use the term strictly local to emphasize that $\omega$ and correspondingly $\Pi_\omega$ are required to be bundle maps; when we generalize this notion to that of local $(-1)$-shifted Poisson structures, we will allow for arbitrary differential operators between the sheaves of sections of these bundles.
\item Note that in the above definition $\omega$ is constant coefficient as a section over $\X$, and thus is homotopically strictly symplectic, in the sense that it is closed on the nose for the de Rham differential. It is shown in \cite{CostelloGwilliam} that the space of $n$-shifted symplectic structures on a formal moduli problem is equivalent to the space of those which are constant coefficient, so that we can work with this as a model for all symplectic structures without loss of generality for our purposes here; thus, we have defined a general strictly local shifted symplectic structure as such. The corresponding result is not true in the Poisson case, and accordingly our definition of local $(-1)$-shifted Poisson structures will not require being constant or homotopically strict.
\item Being constant coefficient, the form $\omega$ is closed if and only if it is closed under the internal differential, which is realized by the condition $\calL_{Q_\scg}\omega=0$ in the formal setting.
\end{itemize}
\end{rmk}

\begin{eg} Let $M=\{*\}$ be a point and $\X$ a local moduli problem over $M$, or equivalently a formal moduli problem $X$. Then the differential geometric aspects of the definition hold vacuously, and thus the definition of strictly local $n$-shifted symplectic structure on $\X$ reduces to the definition of a strict, constant coefficient $n$-shifted symplectic structure on the formal moduli problem $X$. In the formal case, such a structure $\omega \in \Gamma(X,\wedge^2 \LL_X[n])$ is simply a symmetric, non-degenerate linear map $\omega: \g \to \g^*[n-2]$ satisfying $\calL_{Q_\g}\omega=0$.
\end{eg}

\begin{eg}\label{csf} Let $\X$ be a formal moduli problem and $T^*[n]\X=|\LL_\X[n]|$ be the total space of the $n$-shifted cotangent bundle to $\X$. Then $T^*[n]\X = |\LL_\X[n]|$ has a canonical strictly local $n$-shifted symplectic structure, defined by the pairing $\langle \ ,\ \rangle_{\LL_\X}$ in the usual way.
\end{eg}

\begin{eg}
Let $X$ be a formal moduli problem over $k$ with corresponding $L_\infty$ algebra $\g=\T_o[-1]X$ finite dimensional as a vector space. Further, let $\eta\in \Gamma(X,\wedge^2 \LL_X[n])$ be a strict, constant coefficient, $n$-shifted symplectic structure on $X$. Now, let $M$ be a $d$-dimensional smooth, oriented manifold: from the orientation there is a natural isomorphism corresponding to integration $I_M:\wedge^\bullet T^*_M \to (\wedge^\bullet T^*_M)^![-d]$, and moreover
$$\omega:=I_M\otimes \eta: \wedge^\bullet T^*_M \otimes \g \to (\wedge^\bullet T^*_M \otimes \g)^![n-d-2].$$
defines a strictly local $(n-d)$-shifted symplectic structure on the local moduli problem $\widehat{\underline{\mr{Map}}}_o(M_\dR,B\g)$ over $M$. This is the AKSZ symplectic structure defined in $\cite{AKSZ}$.

More generally, let $\Y$ be a formal moduli problem over a manifold $N$ with $\scg=\T_o[-1]\Y$ the corresponding local $L_\infty$ algebra with underlying vector bundle $L$, and let $\eta$ be a strictly local $n$-shifted symplectic structure on $\Y$. Then
$$\omega:=I_M\boxtimes \eta: \wedge^\bullet T^*_M \boxtimes L \to  (\wedge^\bullet T^*_M \boxtimes L)^![n-d-2]$$
defines a strictly local $(n-d)$-shifted symplectic structure on the local moduli problem $\widehat{\mr{Map}}_o(M_\dR,\Y)$ over $M\times N$.
\end{eg}

Given a strictly local $n$-shifted symplectic structure on $\X$, we obtain an isomorphism of families of $\calO(\X)$ dg-modules between the spaces of mollified sections
$$\omega:\Gamma(\X,\T_\X)\leftrightarrows \Gamma(\X,\LL_\X[n]):\Pi_\omega$$
and analogously between the spaces of general and local sections. Note that this claim for local sections uses crucially that there is a locality condition on $\omega$.

We now make the main definition of this section:
\begin{defn}
A \emph{classical field theory} on $M$ is a local moduli problem $\X$ over $M$ equipped with a strictly local $(-1)$-shifted symplectic structure $\omega$ on $\X$.
\end{defn}

\begin{rmk}
Note that for quantizing classical field theories in the formalism of Costello, one also needs to require an ellipticity hypothesis on the differential operators defining the local moduli problem; we will not discuss this subtlety here. Further, in the work of Costello and Gwilliam, in order to ensure that $\calO(\X)$ and $\calO_{md}(\X)$ define homotopy equivalent commutative factorization algebras, and to homotopy transfer the $P_0$ structure on $\calO_{md}(\X)$, defined in the remainder of this subsection, to $\calO(\X)$, one needs an ellipticity hypothesis to employ the key lemma of Atiyah and Bott $\cite{AtiyahBott}$; although outside of this situation there is some uncertainty about which are the philosophically correct spaces of observables, we state our results without the ellipticity hypothesis.
\end{rmk}

The motivation for this definition is the equivalence in Proposition $\ref{slla}$ given in the following subsection, which states that classical field theories in this sense are equivalent to the presheaves of formal moduli spaces of solutions to the Euler--Lagrange equations of a local action functional.

Further, the precosheaves of spaces of functions $\calO_{md}(\X)$ corresponding to a classical field theory inherit a non-degenerate Poisson bracket from the local symplectic form. The spaces of functions $\calO_{md}(\X)_U$ are interpreted as the observables of the classical field theory supported on the open set $U\subset M$. The main result about the algebraic structure of these spaces of observables is:

\begin{prop} Let $\X$ be a local moduli problem over $M$ and $\omega$ a strictly local $(-1)$-shifted symplectic structure on $\X$. Then $\calO_{md}(\X)$ defines a $P_0$ factorization algebra on $M$, with Poisson bracket
$$\{\cdot,\cdot\}:\calO_{md}(\X)^{\otimes 2}\to \calO_{md}(\X)\quad\quad\text{defined by}\quad\quad \{f,g\}_U=\langle \Pi_\omega(d^\textup{dR}f), d^\textup{dR}g \rangle_{\T_\X}$$
for $f,g\in\calO_{md}(\X)_U$.
\end{prop}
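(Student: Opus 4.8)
The plan is to establish two things: that the precosheaf $\calO_{md}(\X)$ underlies a commutative factorization algebra, and that the given formula for $\{\cdot,\cdot\}$ equips it with a compatible degree $+1$ Poisson bracket, making it a $P_0$ factorization algebra. For the commutative factorization algebra structure I would first recall the reviewed results of Costello--Gwilliam: $\calO(\X)$, with $\calO(\X)_U = \widehat\Sym^\bullet(\scg_U^*[-1])$, is a commutative factorization algebra, where the precosheaf maps and the factorization product for disjoint opens $U_1,\ldots,U_k \subset U$ are induced by the cosheaf structure of $\scg^* = \bar\scg^!_c$ (extension by zero of compactly supported distributional sections) together with the commutative multiplication on $\widehat\Sym$. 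Since $\calO_{md}(\X)$ is a sub-cdga of $\calO(\X)$ (as noted after its definition via the pullback square), it then suffices to check that these structure maps restrict to $\calO_{md}$; I would verify this directly from the defining pullback square, using that both extension-by-zero and multiplication preserve the condition that the covector component of $d_\dR$ be mollified.

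The heart of the argument is to show that $\{f,g\}_U = \langle \Pi_\omega(d_\dR f), d_\dR g\rangle_{\T_\X}$ is a well-defined map $\calO_{md}(\X)^{\otimes 2} \to \calO_{md}(\X)$, and I would do this in two stages. First, for $f \in \calO_{md}(\X)_U$ the defining property of $\calO_{md}$ is precisely that the covector component of $d_\dR f$ is mollified, i.e. lies in $\widehat\Sym^\bullet(\scg^*[-1]) \otimes \scg^!_c[-1]$. Because $\omega$ is strictly local, $\Pi_\omega$ is a fibrewise bundle isomorphism $L^! \to L[2-n]$, so $\Pi_\omega(d_\dR f)$ is a genuine (in fact compactly supported) mollified section in $\Gamma(\X,\T_\X[n])$; the pairing $\langle\cdot,\cdot\rangle_{\T_\X}$ of such a smooth section against the general compactly supported section $d_\dR g \in \bar\Gamma_c(\X,\LL_\X)$ then lands in $\calO(\X)_U$ via the nondegenerate duality $\scv \otimes \bar\scv_c^! \to k$. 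This is exactly where the mollified hypothesis is essential: without it one would be contracting two distributional covector components, which is ill-defined. Second, I would check that $\{f,g\}$ is itself mollified; this follows by expressing $d_\dR\{f,g\}$ through the Leibniz rule and the invariance of $\Pi_\omega$ in terms of $d_\dR f$, $d_\dR g$ and the bundle map $\Pi_\omega$, each of which carries the mollified property that $\Pi_\omega$ preserves.

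I would then verify the $P_0$ axioms. Graded antisymmetry of the degree-$1$ bracket follows from the symmetry of $\omega$, hence of $\Pi_\omega$, in the two $L$-variables, giving $\langle \Pi_\omega(d_\dR f), d_\dR g\rangle_{\T_\X} = \pm \langle \Pi_\omega(d_\dR g), d_\dR f\rangle_{\T_\X}$. The graded Leibniz rule is immediate from $d_\dR$ being a derivation and the contraction pairing being $\calO(\X)$-linear in each slot. The two substantive axioms, the graded Jacobi identity and the compatibility of the bracket with the Chevalley--Eilenberg differential (equivalently, that $d_\CE$ is a derivation of $\{\cdot,\cdot\}$), are the standard consequences of $\omega$ being closed and $Q_\scg$-invariant: both follow from $\calL_{Q_\scg}\omega = 0$, equivalently the corresponding invariance statement for $\Pi_\omega$, which encodes that $\Pi_\omega$ is a Poisson bivector. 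Finally, compatibility of the bracket with the factorization product, namely that the bracket of observables supported on disjoint opens vanishes in the containing open, follows from the locality of $\Pi_\omega$ as a bundle map together with the extension-by-zero description of the factorization product.

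The main obstacle I expect is the well-definedness step, specifically proving that the bracket closes on $\calO_{md}(\X)$ rather than merely landing in $\calO(\X)$. This is the whole reason $\calO_{md}$, and not $\calO$, is the correct space of observables: one must track carefully which distributional components are smoothed by $\Pi_\omega$ and confirm that the mollified condition is stable under the bracket and its iterated de Rham differentials, which is the delicate point in the infinite-dimensional, local setting. By contrast, the $P_0$ axioms themselves reduce to the finite-dimensional, formal identities governing a Poisson bivector inverse to a closed symplectic form, transported to the local setting using that every operation in sight is built from polydifferential operators.
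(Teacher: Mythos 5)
The paper does not actually prove this proposition: it is stated in the review subsection as the recalled main result of Costello--Gwilliam, with the proof living in their book. So the comparison is against that standard argument, and your outline reproduces its structure in the right order and locates the content in the right places: the commutative factorization structure restricts from $\calO(\X)$ to $\calO_{md}(\X)$; the mollified condition is exactly what makes $\langle \Pi_\omega(d_\dR f), d_\dR g\rangle_{\T_\X}$ well-defined, since $\Pi_\omega(d_\dR f)$ is a smooth (indeed compactly supported) section which may be paired against the a priori distributional $d_\dR g$ via $\Gamma(\X,\T_\X)\otimes\bar\Gamma_c(\X,\LL_\X)\to\calO(\X)$; antisymmetry and Leibniz are formal; and the vanishing of brackets of disjointly supported observables holds because $\Pi_\omega$ is a bundle map and hence does not spread support.

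The one step that would fail as literally written is the closure of $\calO_{md}(\X)$ under the bracket. You assert that $d_\dR\{f,g\}$ can be expressed ``in terms of $d_\dR f$, $d_\dR g$ and the bundle map $\Pi_\omega$, each of which carries the mollified property''; it cannot. The Leibniz expansion unavoidably produces the \emph{second} derivatives $d_\dR d_\dR f$ and $d_\dR d_\dR g$ (in finite-dimensional coordinates, $\partial_k\{f,g\} = \pm\Pi^{ij}\,\partial_k\partial_i f\,\partial_j g \pm \Pi^{ij}\,\partial_i f\,\partial_k\partial_j g$), and the defining condition of $\calO_{md}$ does \emph{not} make these mollified in both distinguished slots --- it only guarantees smoothness in the slot pulled out last. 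The missing idea, which is the actual content of the corresponding lemma in Costello--Gwilliam, is to exploit the symmetry of the Taylor coefficients: the kernel of $d_\dR d_\dR f$ is smooth in each of its two distinguished slots \emph{separately} (though not jointly), so in the term $\langle \Pi_\omega(d_\dR d_\dR f), d_\dR g\rangle_{\T_\X}$ one takes the retained output slot to be the smooth one and contracts the remaining, distributional slot against the smooth compactly supported section $d_\dR g$; pairing a distributional slot against a smooth compactly supported section is well-defined and returns something smooth in the retained slot, which is precisely the mollified condition for $\{f,g\}$. With that replacement your plan goes through. A smaller correction: the graded Jacobi identity does not follow from $\calL_{Q_\scg}\omega=0$, but from strict locality, i.e.\ constancy, of $\Pi_\omega$ (a constant-coefficient bivector has vanishing self-Schouten bracket); $\calL_{Q_\scg}\omega=0$ is what gives compatibility of the bracket with the Chevalley--Eilenberg differential.
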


In particular, this implies that the Poisson bracket on $\calO_{md}(\X)$ satisfies a physical locality condition which is crucial for quantization: for $U,V\subset W$ disjoint open subsets of $M$, the cosheaf $\calO_{md}(\X)$ of $P_0$ algebras satisfies
$$\{ \iota_U f,\iota_V g\}_W=0$$
for arbitrary $f\in \calO_{md}(\X)_U,g\in \calO_{md}(\X)_V$, where $\iota_U:\calO_{md}(\X)_U\to \calO_{md}(\X)_W$, and similarly for $V$, are the cosheaf structure maps.

\subsection{Local Action Functionals \label{laf}}

In this section we recall the equivalence between classical field theories as defined above and local action functionals satisfying the classical master equation, which are the usual defining data for a classical field theory in the BV formalism.

Let $E$ be a smooth graded vector bundle on $M$ with sheaf of sections $\calE$, which we think of as the space of fields in the BV formalism, in particular including the anti-fields, and let $\calJ(\calE)$ denote the sheaf of sections of the infinite jet bundle $J(E)$ of $E$. Recall that there is a natural flat connection on $J(E)$ making $\calJ(\calE)$ into a sheaf of left modules for the sheaf of differential operators $\calD_M$ on $M$, and that the sheaf of densities $\Dens_M$ on $M$ is naturally a right $\calD_M$ module.

Further, letting $L=E[-1]$ and $\scg$ the sheaf of sections of $L$, recall that the formal moduli problem $\tilde \X$ corresponding to $\scg$, considered as a trivial $L_\infty$ algebra, describes $\calE$ as a sheaf of affine spaces, on which local action functionals should define functions.

The sheaf of local action functionals on $\calE$ is defined by
$$\calO_\loc(\tilde \X)= \Dens_M \otimes_{\calD_M} \widehat \Sym^\bullet_{C^\infty_M}(\calJ(\calE)^\vee)$$
where $(\cdot)^\vee$ denotes dual in the category of sheaves of $C^\infty_M$ modules. Note that $\widehat \Sym^\bullet_{C^\infty_M}(\calJ(\calE)^\vee)$ is the sheaf of sections of a vector bundle whose fibre over each point $x\in M$ is given by the space of formal power series in the values of the field and its derivatives at the point $x$; this is essentially the usual space of Lagrangian functions without a fixed density to integrate against. Taking the tensor product over $\calD_M$ with the sheaf of densities gives the space of Lagrangian densities modulo the natural relations imposed from integration by parts, which is precisely the desired space of local action functionals. 

Note that elements of $\calO_\loc(\tilde \X)$ do not precisely define functions on $\calE_U$ for each $U\subset M$ because there is no condition to guarantee the relevant integral converges. However, the Lagrangian densities do define functions on the spaces of compactly supported sections, and as such, their de Rham differentials can be understood as vector fields. In fact, we have:

\begin{prop}\label{ldrd} There exists a natural de Rham differential map
$$d_\textup{dR}:\calO_\loc(\tilde \X)\to \Gamma_\loc(\tilde \X, \LL_{\tilde \X})$$
\end{prop}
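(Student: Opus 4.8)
The plan is to realize $d_\dR$ as the first-variation, or Euler--Lagrange, operator on local action functionals, and then to verify that its target is precisely the space of local $1$-forms. First I would unwind the target: since $\tilde\X$ carries the trivial $L_\infty$ structure, the module underlying $\LL_{\tilde\X}=\T_{\tilde\X}^!$ is $\scg^![-1]$ with trivial brackets, so
$$\Gamma_\loc(\tilde\X,\LL_{\tilde\X})=C^\bullet_\loc(\scg;\scg^![-1])=\prod_{n\geq 0}\Poly(\scg[1]^{\otimes n},\scg^![-1])_{S_n}.$$
Thus an element of the target is a symmetric family of polydifferential operators eating $n$ copies of the field and producing a density-valued section of $L^\vee$. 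The source $\calO_\loc(\tilde\X)=\Dens_M\otimes_{\calD_M}\widehat\Sym^\bullet_{C^\infty_M}(\calJ(\calE)^\vee)$ is the space of Lagrangian densities modulo total derivatives, where the $\calD_M$-coinvariants encode exactly the integration-by-parts relations.

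Next I would construct the map before passing to coinvariants. On the jet-level factor, I would mimic the ordinary de Rham differential already defined on $\calO(\tilde\X)$, namely the degreewise operation pulling out a single cofactor $\widehat\Sym^{n}(\calJ(\calE)^\vee)\to \calJ(\calE)^\vee\otimes_{C^\infty_M}\widehat\Sym^{n-1}(\calJ(\calE)^\vee)$. The distinguished slot $\calJ(\calE)^\vee$, once tensored against $\Dens_M$ over $\calD_M$, is precisely where integration by parts acts: using the jet--differential-operator adjunction defining $\Poly(-,C^\infty_M)$ together with the formal adjoint appearing in the definition of the Verdier-dual module $\scv^!$, I would transfer all derivatives off the variational slot, converting the pair $(\Dens_M,\calJ(\calE)^\vee)$ into a polydifferential operator valued in $\scg^!=\scg^\vee\otimes_{C^\infty_M}\Dens_M$ acting on the remaining $n-1$ jet cofactors. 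Symmetrizing gives an element of $\Poly(\scg[1]^{\otimes(n-1)},\scg^![-1])_{S_{n-1}}$, and assembling over $n$ produces the candidate map into $\Gamma_\loc(\tilde\X,\LL_{\tilde\X})$; along the way I would check that the construction commutes with restriction to opens, so that it defines a map of sheaves, and that it agrees with the already-defined $d_\dR$ on functions of compactly supported sections, justifying the name.

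The main obstacle, and the heart of the argument, is well-definedness over the $\calD_M$-coinvariants: I must show that the construction descends through $\Dens_M\otimes_{\calD_M}(-)$, i.e.\ that altering a Lagrangian density by a total derivative leaves the resulting local $1$-form unchanged. This is the statement that the first-variation operator annihilates total derivatives, the classical fact underlying the variational bicomplex that the Euler--Lagrange expression of a total divergence vanishes. I expect this to follow from the same formal-adjoint bookkeeping used to build the map: the relation $\mu\otimes(D\cdot P)\sim (\mu D)\otimes P$ imposed by the $\calD_M$-tensor product is exactly compensated by the adjoint produced when derivatives are moved off the variational slot. The delicate point is to confirm that after this transfer the output is genuinely a polydifferential operator valued in the Verdier-dual bundle $\scg^!$ --- a \emph{local} section --- rather than merely a distributional or non-local one; here the locality built into the definitions of $\calO_\loc$ and $\Gamma_\loc$, together with the closure of polydifferential operators and their formal adjoints under composition, must be used carefully.
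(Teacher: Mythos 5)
Your proposal is correct and follows essentially the same route as the paper's proof: single out one tensor factor $\calJ(\calE)^\vee\cong\Diff(\calE,C^\infty_M)$, take its formal adjoint to land in $\Diff(\Dens_M,\calE^!)$, and symmetrize. Your ``descent through $\calD_M$-coinvariants'' concern is exactly what the paper packages into the isomorphism $\Dens_M\otimes_{\calD_M}\Diff(\Dens_M,\calE^!)\cong\calE^!$, so the two arguments coincide.
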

\begin{proof}
Taking the formal adjoint of a tensor factor $\calJ(\calE)^\vee\cong \Diff(\calE,C^\infty_M)$ yields an element of $\Diff(\Dens_M,\calE^!)$, to which we apply the isomorphism $\Dens_M\otimes_{\calD_M} \Diff(\Dens_M,\calE^!)\cong \calE^!$. Taking a symmetrizing sum over the tensor factors, this identifies the space of polynomial degree $j$ local action functionals modulo constants with the space of polynomial degree $j-1$ local $1$-forms.
\end{proof}

Now, given a strictly local $(-1)$-shifted symplectic pairing $\omega$ on $\tilde \X$, corresponding physically to the duality pairing on $\calE$ between fields and anti-fields, we in particular obtain an isomorphism
$$\Pi_\omega:\Gamma_\loc(\tilde \X, \LL_{\tilde \X})\to\Gamma_\loc(\tilde \X, \T_{\tilde \X}[1]).$$
From this, we define the Hamiltonian vector field corresponding to $S\in \calO_\loc(\tilde \X)$ as
$$Q_S=\Pi_\omega\circ d_\dR (S).$$
From Proposition $\ref{dtla}$, we have that such a vector field is equivalent to a local $L_\infty$ algebra structure on $\scg:=\calE[-1]$ given that it satisfies $Q_S^2=0$. Toward stating an equivalent condition on action functionals to ensure their Hamiltonian vector fields are Hamiltonian, we would like to introduce a $P_0$ bracket on the space of local action functionals. However, the space of local action functionals is not closed under multiplication as functions, so we only obtain a Lie algebra structure:

\begin{prop} There exists a differentiation pairing
$$\Gamma_\loc(\tilde \X, \T_{\tilde \X}[1])\otimes \calO_\loc(\tilde \X)\to  \calO_\loc(\tilde \X)[1]\quad\quad\text{denoted by}\quad\quad Q\otimes f \to Q(f)$$
such that the bracket $\{\cdot,\cdot\}_\omega: \calO_\loc(\tilde \X)^{\otimes 2}\to\calO_\loc(\tilde \X)[1]$ defined by
$$\{f,g\}_\omega=Q_f(g)$$
defines a $1$-shifted Lie algebra structure on $\calO_\loc(\tilde \X)$. Moreover, the map
$$\Pi_\omega\circ d_\textup{dR}: \calO_\loc(\tilde \X)\to \Gamma_\loc(\tilde \X,\T_{\tilde \X}[1])$$
is a map of $1$-shifted Lie algebras.
\end{prop}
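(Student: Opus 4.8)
The plan is to realize the entire structure as Cartan calculus on the affine local moduli problem $\tilde\X$, taking care throughout that $\calO_\loc(\tilde\X)$ consists of Lagrangian densities modulo total derivatives (the $\calD_M$-quotient) rather than honest functions on $\tilde\X$. First I would construct the differentiation pairing: for a local vector field $Q\in\Gamma_\loc(\tilde\X,\T_{\tilde\X}[1])$ and $f\in\calO_\loc(\tilde\X)$, set $Q(f):=\iota_Q d_\dR f$, where $d_\dR f\in\Gamma_\loc(\tilde\X,\LL_{\tilde\X})$ is the local $1$-form produced by Proposition $\ref{ldrd}$ and the interior product contracts the $\scg$-valued output of $Q$ against the $\scg^!$-valued output of $d_\dR f$ using the fibrewise, density-valued pairing $\scg\otimes\scg^!\to\Dens_M$. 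Because $Q$ and $d_\dR f$ are both polydifferential, this contraction is fibrewise and yields a Lagrangian density rather than a number; regarding it in $\Dens_M\otimes_{\calD_M}(\cdots)=\calO_\loc(\tilde\X)[1]$ gives a well-defined element, and independence of the $\calD_M$-relations is inherited from that already established for $d_\dR f$.

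Next I would establish the two Lie-theoretic inputs. Writing $Q_f=\Pi_\omega d_\dR f$, the bracket is $\{f,g\}_\omega=Q_f(g)=\iota_{Q_f}d_\dR g=\Pi_\omega(d_\dR f,d_\dR g)$, the contraction of the bivector $\Pi_\omega$ against $d_\dR f$ and $d_\dR g$; the graded antisymmetry appropriate to a degree $+1$ bracket then follows directly from the symmetry of $\omega$, hence of its inverse $\Pi_\omega$, as a bundle map, once the degree shifts built into $\LL_{\tilde\X}$ are accounted for. The heart of the argument — which is precisely the ``moreover'' claim that $\Pi_\omega\circ d_\dR$ is a map of $1$-shifted Lie algebras — is the identity $Q_{\{f,g\}_\omega}=[Q_f,Q_g]$. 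I would prove it by the standard symplectic computation transported to this setting: from $d_\dR^2=0$ and the Cartan formula $\calL_{Q_f}=d_\dR\iota_{Q_f}+\iota_{Q_f}d_\dR$ one obtains $d_\dR\{f,g\}_\omega=\calL_{Q_f}d_\dR g$, whence $Q_{\{f,g\}_\omega}=\Pi_\omega\calL_{Q_f}d_\dR g$, while $[Q_f,Q_g]=\calL_{Q_f}Q_g=\calL_{Q_f}(\Pi_\omega d_\dR g)$, and the two coincide provided $\calL_{Q_f}\Pi_\omega=0$. This last vanishing follows from $\calL_{Q_f}\omega=0$ and non-degeneracy, and $\calL_{Q_f}\omega=d_\dR\iota_{Q_f}\omega+\iota_{Q_f}d_\dR\omega=d_\dR d_\dR f+0=0$, since $\iota_{Q_f}\omega=d_\dR f$ by definition of $Q_f$ and $\omega$ is $d_\dR$-closed, being constant coefficient on $\tilde\X$.

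Finally, the $1$-shifted Jacobi identity is deduced formally from the morphism property. Once I verify that $Q\mapsto(f\mapsto Q(f))$ makes $\Gamma_\loc(\tilde\X,\T_{\tilde\X}[1])$ act on $\calO_\loc(\tilde\X)$ by the graded commutator — again Cartan calculus, namely $[\calL_Q,\iota_{Q'}]=\iota_{[Q,Q']}$ applied to $d_\dR f$ — the computation $\{\{f,g\}_\omega,h\}_\omega=Q_{\{f,g\}_\omega}(h)=[Q_f,Q_g](h)=Q_f(Q_g(h))\mp Q_g(Q_f(h))$ is exactly the graded Jacobi identity for $\{\cdot,\cdot\}_\omega$, needing no separate injectivity argument. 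I expect the main obstacle to lie entirely in the differential-geometric bookkeeping: checking that each Cartan-calculus identity, transparent for honest functions and forms, persists at the level of local action functionals, where $\calO_\loc(\tilde\X)$ fails to be closed under multiplication, every expression must be read modulo total derivatives through the $\calD_M$-module structure, and the degree $+1$ shift must be carried with the correct Koszul signs.
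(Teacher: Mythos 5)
The paper states this proposition without proof, so there is no argument of record to compare against; judged on its own terms, your skeleton is the standard symplectic one and is organized correctly. The pairing $Q(f):=\iota_Q d_\dR f$ is well defined exactly as you say (Proposition \ref{ldrd} already handles the $\calD_M$-relations, and contraction of polydifferential operators produces a Lagrangian density, hence a class in $\calO_\loc(\tilde\X)$), the shifted antisymmetry does follow from the symmetry of $\omega$, and once one has $Q_{\{f,g\}_\omega}=[Q_f,Q_g]$ together with the action property $[Q,Q'](f)=Q(Q'(f))\mp Q'(Q(f))$, the Jacobi identity follows formally as you describe.

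The genuine gap is the step you set aside as ``differential-geometric bookkeeping'': your proof of the key identity $Q_{\{f,g\}_\omega}=[Q_f,Q_g]$ runs entirely through a Cartan calculus on local forms --- a de Rham differential on local one- and two-forms satisfying $d_\dR^2=0$, the Cartan formula $\calL_Q=[d_\dR,\iota_Q]$, the identity $[\calL_Q,\iota_{Q'}]=\iota_{[Q,Q']}$, and the derivation property of $\calL_{Q_f}$ over contraction with $\Pi_\omega$ --- none of which exists in the paper's framework. Proposition \ref{ldrd} defines $d_\dR$ only on $\calO_\loc(\tilde\X)$, with target the ``source forms'' $\Gamma_\loc(\tilde\X,\LL_{\tilde\X})$, in which integration by parts has already been pushed into the output leg; no differential, interior product, or Lie derivative on these spaces is ever defined. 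Constructing them is not a formal transport of the finite-dimensional identities: for instance $d_\dR d_\dR f=0$, which you invoke twice, is in this setting the statement that Euler--Lagrange expressions satisfy the Helmholtz self-adjointness conditions, i.e.\ that the Euler--Lagrange sequence of the variational bicomplex is a complex, and the correct differential on source forms requires the interior Euler operator. So to complete the argument you must either import that variational-bicomplex machinery wholesale, or --- closer to what the authors' own toolkit suggests --- define $Q(f)$ by contraction as you do and then verify $Q_{\{f,g\}_\omega}=[Q_f,Q_g]$ directly as an identity of polydifferential operators, using the explicit formal-adjoint description of $d_\dR$ from Proposition \ref{ldrd} and the Schouten bracket of Proposition \ref{SB}. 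This is precisely the mechanism the authors use later in the proof of Theorem \ref{BTC}, where the bracket $\langle \Pi_\omega(d_\dR f),d_\dR g\rangle$ is identified with the contraction of differential operators defining the Schouten bracket of the corresponding Hamiltonian vector fields; that route stays inside the paper's definitions and avoids higher local forms altogether.
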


In terms of this bracket, we have
$$ Q_S^2(f)= \{S,\{S,f\}_\omega\}_\omega = \frac{1}{2}\{\{S,S\}_\omega,f\}_\omega$$
which vanishes for all $U\subset M$ and $f\in\calO(\X)_U$ if and only if $\{S,S\}_\omega=0$. Thus, we have

\begin{prop}\label{slla} Let $L$ be a smooth vector bundle on $M$ with sheaf of sections $\scg$, $\tilde \X$ be the sheaf of affine spaces corresponding to $\scg$ thought of as a trivial local $L_\infty$ algebra, and $\omega$ a strictly local $(-1)$-shifted symplectic structure on $\tilde \X$. The following are equivalent:
\begin{itemize}
\item A collection of polydifferential operators
$$\{l_n:\scg^{\hat\otimes n}\to \scg[2-n]\}_{n\in\N}$$
making $\scg$ into a local $L_\infty$ algebra such that $\omega$ defines a strictly local symplectic structure on the local moduli problem $\X$ corresponding to $\scg$.
\item A vector field
$$Q_\scg\in \Gamma_{\loc}(\tilde \X,\T_{\tilde \X}[1])_M$$
with vanishing polynomial degree $0$ term, and satisfying $[Q_\scg,Q_\scg]=0$ and $\calL_{Q_\scg}\omega=0$.

\item An action functional $S\in\calO_\loc(\tilde \X)_M$, with vanishing polynomial degree $0$ and $1$ terms, and satisfying $\{S,S\}_\omega=0$.
\end{itemize}
\end{prop}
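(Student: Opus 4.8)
The plan is to prove the equivalence in Proposition~\ref{slla} by chaining together the two equivalences established earlier in this subsection and in Subsection~\ref{lvf}, rather than reproving anything from scratch. The equivalence of the second and third bullet points is essentially already packaged in the preceding discussion: we have the map $\Pi_\omega\circ d_\dR\colon \calO_\loc(\tilde\X)\to \Gamma_\loc(\tilde\X,\T_{\tilde\X}[1])$ sending an action functional $S$ to its Hamiltonian vector field $Q_S$, and the preceding Proposition tells us this is a map of $1$-shifted Lie algebras. Meanwhile the equivalence of the first and second bullet points is exactly the content of Proposition~\ref{dtla}, together with the additional condition $\calL_{Q_\scg}\omega=0$ that characterizes when $\omega$ descends to a strictly local symplectic structure on the resulting local moduli problem $\X$ (as opposed to merely on the trivial $\tilde\X$).

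First I would recall that Proposition~\ref{dtla} already gives a bijection between collections of polydifferential operators $\{l_n\}$ making $\scg$ into a local $L_\infty$ algebra and cohomological local vector fields $Q_\scg\in\Gamma_\loc(\tilde\X,\T_{\tilde\X}[1])_M$ with vanishing polynomial degree $0$ term and satisfying $[Q_\scg,Q_\scg]=0$. To upgrade this to the equivalence of the first two bullets as stated, I would add the observation that the condition ``$\omega$ defines a strictly local symplectic structure on the local moduli problem $\X$ corresponding to $\scg$'' is, by the definition of strictly local symplectic structure, precisely the closedness condition $\calL_{Q_\scg}\omega=0$, since $\omega$ is assumed to already be a non-degenerate symmetric bundle map of the correct degree and these are the only properties not automatically inherited from its being a symplectic structure on $\tilde\X$.

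Next I would establish the equivalence of the second and third bullets. The key point is that the de Rham differential map $d_\dR\colon \calO_\loc(\tilde\X)\to\Gamma_\loc(\tilde\X,\LL_{\tilde\X})$ of Proposition~\ref{ldrd}, composed with the isomorphism $\Pi_\omega\colon\Gamma_\loc(\tilde\X,\LL_{\tilde\X})\to\Gamma_\loc(\tilde\X,\T_{\tilde\X}[1])$, identifies action functionals with vanishing polynomial degree $0$ and $1$ terms with local vector fields $Q_\scg$ having vanishing polynomial degree $0$ term. Here the shift in where the degree truncation occurs reflects the degree shift induced by $d_\dR$, which raises polynomial degree interpretation as described in the proof of Proposition~\ref{ldrd}. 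I would then invoke the identity already displayed just before the proposition statement, namely
$$Q_S^2(f)=\tfrac12\{\{S,S\}_\omega,f\}_\omega,$$
together with the fact that $\Pi_\omega\circ d_\dR$ is a map of $1$-shifted Lie algebras, to conclude that the bracket $[Q_\scg,Q_\scg]$ corresponds under this identification to $\{S,S\}_\omega$ (up to the usual factor). Hence $[Q_\scg,Q_\scg]=0$ holds if and only if $\{S,S\}_\omega=0$, and since $\Pi_\omega$ is an isomorphism, the condition $\calL_{Q_\scg}\omega=0$ is automatic once $Q_\scg$ is in the image of $\Pi_\omega\circ d_\dR$; indeed $Q_S$ is by construction a Hamiltonian (hence symplectic) vector field.

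The main obstacle I anticipate is bookkeeping rather than conceptual: carefully matching the polynomial degree truncations across the map $\Pi_\omega\circ d_\dR$, and verifying that the correspondence $Q_\scg\leftrightarrow S$ is genuinely a bijection onto vector fields satisfying $\calL_{Q_\scg}\omega=0$ and not merely an injection. In the finite-dimensional formal setting this surjectivity is the statement that every symplectic vector field on a formal $(-1)$-shifted symplectic space is Hamiltonian, which follows from the formal (equivariant) Poincar\'e lemma; the real work is to check that this argument survives the locality constraints, i.e.\ that the Hamiltonian potential $S$ can be chosen to be a \emph{local} action functional and not merely a formal function. I expect this to follow from the strict locality of $\omega$ and the fact that the relevant homotopies can be built from polydifferential operators, but it is the step most deserving of care.
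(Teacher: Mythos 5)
Your proposal is correct and takes essentially the same route as the paper: there Proposition \ref{slla} carries no standalone proof but is stated as the direct outcome (``Thus, we have'') of chaining Proposition \ref{dtla} with the map $\Pi_\omega\circ d_\dR$ and the identity $Q_S^2(f)=\tfrac{1}{2}\{\{S,S\}_\omega,f\}_\omega$, exactly as you do. The surjectivity step you flag as most deserving of care --- that a symplectic local vector field admits a \emph{local} Hamiltonian potential --- is precisely what the proof of Proposition \ref{ldrd} already supplies, since it identifies polynomial degree $j$ local action functionals modulo constants with polynomial degree $j-1$ local one-forms degree by degree, so nothing beyond the results you cite is actually missing.
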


Again, the precise definitions needed to make sense of the expression $\calL_{Q_\scg}\omega$ are given in terms of the corresponding Poisson tensor $\Pi_\omega$ in Proposition $\ref{SB}$ below.

\begin{rmk}\label{sslla}
Note that there is a natural extension of this proposition allowing for an arbitrary shift of cohomological degree of the polydifferential operators, vector fields and action functionals which are shown to be equivalent.
\end{rmk}

Motivated by the above proposition, we recover the classical definition:
\begin{defn} An action functional $S\in\calO_\loc(\tilde \X)$ is said to satisfy the \emph{classical master equation} if
$$\{S,S\}_\omega=0.$$
\end{defn}

\subsection{Local Homotopy $(-1)$-Shifted Poisson Structures and Degenerate Classical Field Theories \label{dcft}}

This subsection marks the beginning of the original work of this paper. We formulate the definition of spaces of multilocal $(-1)$-shifted polyvector fields, describe their basic geometric properties, and in terms of these give the key definitions of general local $(-1)$-shifted Poisson structures on local moduli problems and degenerate classical field theories. Further, this section includes the first main theorem of this paper, which states that the observables of a degenerate classical field theory yield a $P_0$ factorization algebra on the underlying space.

\begin{defn} Let $\X$ be a local moduli problem and $\scg=\T_o[-1]\X$ the corresponding local $L_\infty$ algebra.

Define the \emph{sheaf of $(-1)$-shifted, multilocal $j$-polyvector fields} by
$$\Gamma_{m\loc}(\X,\Sym^j \T_\X) = \prod_{k\in\N} \Poly(\scg[1]^{\otimes k}\otimes \scg^![-1]^{\otimes (j-1)},\scg[1])_{S_k\times S_j}$$
equipped with the Chevalley--Eilenberg differential defined below, where the subscript $S_k\times S_j$ denotes the coinvariants with respect to the natural symmetric group actions.
\end{defn}

We denote the factors in the direct product defining $\Gamma_{m\loc}(\X,\Sym^j \T_\X)$ by $\Gamma_{m\loc}^k(\X,\Sym^j \T_\X)$. These are the polynomial degree $k$ components of the space of $(-1)$-shifted multilocal $j$-polyvector fields on $\X$: informally, thinking of $\scg^!$ as dual to $\scg$ and applying hom-tensor adjunction, an element of $\Gamma_{m\loc}^k(\X,\Sym^j \T_\X)$ is a degree $k$ polynomial function of $\scg[1]$ valued in $\Sym^j(\scg[1])$. Note that these spaces define sheaves on $M$, as with the local spaces of sections defined previously.

We denote the space of all $(-1)$-shifted, multilocal polyvector fields, and its completion, by
$$\Gamma_{m\loc}(\X,\Sym^\bullet \T_\X)=  \bigoplus_{j\geq 0} \Gamma_{m\loc}(\X,\Sym^j \T_\X) \quad\quad\text{and}\quad\quad \Gamma_{m\loc}(\X,\widehat {\Sym}^{\bullet}\T_\X)=\prod_{j\geq 0}\Gamma_{m\loc}(\X,\Sym^j \T_\X).$$

In order to define the notion of multilocal Poisson bivector, and to introduce the Chevalley--Eilenberg differential on all the spaces of multilocal polyvector fields, we will need to discuss the Schouten bracket of multilocal vector fields.

\begin{prop}\label{SB} There exists a map, called the \emph{Schouten bracket}, of sheaves of vector spaces on $M$
$$[\cdot,\cdot]:\Gamma_{m\loc}(\X,\Sym^j\T_\X)\otimes\Gamma_{m\loc}(\X,\Sym^l\T_\X)\to \Gamma_{m\loc}(\X,\Sym^{j+l-1}\T_\X)$$
making $\Gamma_{m\loc}(\X,\Sym^{\bullet}\T_\X)$ a sheaf of graded Lie algebras.
\end{prop}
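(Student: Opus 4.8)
The plan is to define the Schouten bracket by extending, via the graded Leibniz rule, the two pieces of structure already in hand: the shifted commutator bracket on local vector fields $\Gamma_\loc(\X,\T_\X)$ established in the discussion preceding Proposition \ref{dtla}, and the differentiation action of vector fields on functions. Observe first that for $j=1$ the definition specializes to $\Gamma_{m\loc}(\X,\Sym^1\T_\X)=\Gamma_\loc(\X,\T_\X)$, which already carries the commutator Lie bracket, while for $j=0$ it recovers the multilocal functions, on which vector fields act by differentiation. The Schouten--Nijenhuis formula then prescribes, on decomposables $X_1\cdots X_j$ and $Y_1\cdots Y_l$ with the $X_a,Y_b$ vector fields,
$$[X_1\cdots X_j,\, Y_1\cdots Y_l]=\sum_{a,b}\pm\, [X_a,Y_b]\, X_1\cdots\widehat{X_a}\cdots X_j\, Y_1\cdots\widehat{Y_b}\cdots Y_l,$$
together with the analogous terms in which a vector field differentiates a function factor, the signs being dictated by the Koszul rule for the combined cohomological and polyvector degrees. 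I would take this as the definition of $[\cdot,\cdot]$ and then verify the required properties.

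First I would check that the bracket is well defined as a map of the stated multilocal spaces, i.e. that it preserves the polydifferential condition defining $\Gamma_{m\loc}(\X,\Sym^\bullet\T_\X)$. This is where the recurring observation that compositions of polydifferential operators are again polydifferential operators does the work: both $[X_a,Y_b]$ and the differentiation terms are built from composites of the defining polydifferential operators, so after the appropriate symmetrization they again lie in $\Poly(\scg[1]^{\otimes k}\otimes\scg^![-1]^{\otimes(j+l-2)},\scg[1])$, landing in $\Gamma_{m\loc}(\X,\Sym^{j+l-1}\T_\X)$. One must also confirm compatibility with the $S_k\times S_j$ coinvariants and that the result lies in the product over polynomial degree $k$, which is automatic since the bracket does not increase $k$ beyond the sum of the inputs' degrees.

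The conceptually cleanest route to graded antisymmetry and the graded Jacobi identity is to identify $\bigoplus_j\Gamma_{m\loc}(\X,\Sym^j\T_\X)$ with a space of multilocal functions on the shifted cotangent bundle $T^*[n]\X$, for the appropriate shift $n$. Using the Verdier-duality pairing $\scg[1]\otimes\scg^![-1]\to\Dens_M$ to move the output $\scg[1]$ into an additional $\scg^![-1]$ input slot --- the same formal-adjoint manoeuvre used to define $d_\dR$ on $\calO_\loc(\tilde\X)$ in Proposition \ref{ldrd} --- an element of $\Gamma_{m\loc}(\X,\Sym^j\T_\X)$ of polynomial degree $k$ becomes a multilocal function on $T^*[n]\X$ of degree $k$ in the base and $j$ in the fibre. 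Under this identification the Schouten bracket is precisely the Poisson bracket determined by the canonical strictly local symplectic structure on $T^*[n]\X$ of Example \ref{csf}. Graded antisymmetry and graded Jacobi then follow from the corresponding identities for that Poisson bracket, which hold because the symplectic form is constant coefficient, so the bracket is just contraction of the constant Poisson tensor with the two de Rham differentials, exactly as in the $P_0$ structure on $\calO_{md}(\X)$.

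I expect the main obstacle to be the functional-analytic and sign bookkeeping in setting up this dictionary: verifying that passing the output $\scg[1]$ to a $\scg^![-1]$ input via the formal adjoint respects the polydifferential (multilocality) condition and the $S_k\times S_j$-coinvariants, that the resulting pairing is well defined on the completed products over $k$, and that the Koszul signs coming from the shift $[-1]$ on $\scg^!$ and the shift $[1]$ on $\scg$ agree on both sides. Once this identification is fixed, the graded Jacobi identity is formal, inherited from the canonical symplectic structure; the genuine content of the proposition is the locality statement that the Schouten bracket of two multilocal polyvector fields is again multilocal.
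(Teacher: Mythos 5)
Your proposal is correct, and its core coincides with the paper's (one-sentence) proof: the paper simply takes the usual algebraic Schouten--Nijenhuis formula and observes that it is well defined on the multilocal spaces because sums of compositions of polydifferential operators are again polydifferential operators, treating graded antisymmetry and the Jacobi identity as the standard formal identities that come with that definition. Where you genuinely diverge is in how you justify those axioms: you route through the identification of $\Gamma_{m\loc}(\X,\widehat\Sym^\bullet\T_\X)$ with multilocal functions on the shifted cotangent bundle via the formal-adjoint manoeuvre, and deduce antisymmetry and Jacobi from the canonical constant-coefficient Poisson bracket there. That dictionary is exactly the isomorphism the paper constructs \emph{later} (in the proof of Theorem \ref{BTC}, and in the discussion of local higher Poisson centres, where $\calO_\loc(T^*\scL)\cong\Gamma_{m\loc}(\scL,\widehat\Sym^\bullet\T_\scL)$ is asserted to intertwine the symplectic-side bracket with the Schouten bracket) --- but in the paper's logical order those statements presuppose the present proposition, so your argument must set up the identification and prove Jacobi for the cotangent-side bracket independently. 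Your plan does this admissibly, since for a constant-coefficient tensor the Jacobi identity is a direct computation; just be careful not to argue it via $[\Pi_\omega,\Pi_\omega]=0$, which is itself a Schouten-bracket statement and would be circular (note also that the paper never proves the shifted Lie structure on $\calO_\loc$, so you cannot simply cite it). In exchange for this extra functional-analytic bookkeeping, your route gives a conceptual explanation of the bracket and front-loads machinery the paper needs for Theorem \ref{BTC}; the paper's route buys brevity, since the only genuinely new content at this point is the locality statement, which both arguments settle in the same way.
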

\begin{proof} The usual algebraic definition of the Schouten bracket is well-defined, as a sum of compositions of polydifferential operators.
\end{proof}

Note that, as with usual polyvector fields, $\Gamma_{m\loc}(\X,\T_\X)$ forms a sub Lie algebra of $\Gamma_{m\loc}(\X,\Sym^\bullet\T_\X)$ which acts on $\Gamma_{m\loc}(\X,\Sym^j\T_\X)$ for each $j\geq 0$.
Further, by definition there is an identification of sheaves $C^\infty_M$ modules
$$\Gamma_{m\loc}(\X,\T_\X)= \Gamma_{\loc}(\X,\T_\X).$$
The above identification is also an identification of sheaves of Lie algebras, where the Lie algebra structure on $\Gamma_{m\loc}(\X,\T_\X)$ is given by the Schouten bracket above, and the Lie algebra structure on $\Gamma_\loc(\X,\T_\X)$ is as discussed in Section $\ref{lvf}$; the definition of the Schouten bracket given above, when restricted to vector fields, is precisely the same composition of polydifferential operators defining the Lie algebra structure on local vector fields.

In particular, the image under this identification of the vector field $Q_\scg\in\Gamma_{\loc}(\X,\T_\X[1])$ defining the local $L_\infty$ structure on $\scg$ defines by adjoint action a cohomological degree $1$ differential on $\Gamma_{m\loc}(\X,\Sym^k\T_\X)$ for each $k\geq 0$, which we define to be the Chevalley--Eilenberg differential. Note that the identification in the preceding proposition is an isomorphism of sheaves of dg Lie algebras with respect to the Chevalley--Eilenberg differentials as defined.

In keeping with the usual interpretation of the Schouten bracket, we use the notation
$$\calL_Q X:=[Q,X]$$
for $Q\in\Gamma_{m\loc}(\X,\T_\X)$ and $X\in\Gamma_{m\loc}(\X,\Sym^k\T_\X)$.

We can now define the desired spaces of $(-1)$-shifted (homotopy) Poisson structures on local moduli problems:
\begin{defn} Let $\X$ be a local moduli problem with $\scg=\T_o[-1]\X$ the corresponding local $L_\infty$ algebra. A \emph{local $(-1)$-shifted Poisson structure} on $\X$ is a cohomological degree $1$ multilocal bivector field
$$\Pi \in \Gamma_{m\loc}(\X,\Sym^2 \T_\X)[1]$$
such that
$$\calL_{Q_\scg}\Pi=0\quad\quad\text{and}\quad\quad [\Pi,\Pi]=0.$$

More generally, a \emph{local $(-1)$-shifted homotopy Poisson structure} on $\X$ is a cohomological degree $1$, non-homogeneous, multilocal polyvector field
$$\bar\Pi=\{\Pi_j\}_{j=2}^\infty \in \Gamma_{m\loc}(\X,\widehat{\Sym}^{\bullet}\T_\X)[1]$$
satisfying the equations
$$[Q_\scg+\bar\Pi, Q_\scg+\bar\Pi]=0$$
defining a $(-1)$-shifted homotopy Poisson structure.
\end{defn}
This notion strongly generalizes the notion of strictly local $(-1)$-shifted symplectic structure on a local moduli problem $\X$ and thus the notion of classical field theory, and will allow us to consider many physically and mathematically interesting examples which do not satisfy the previously given definition of classical field theory.

\begin{eg} Let $M=\{*\}$ be a point. Then the differential geometric subtleties of the preceding definition hold vacuously, and it reduces to the usual definition of a $(-1)$-shifted (homotopy) Poisson structure on a formal moduli problem.
\end{eg}

\begin{eg} Note that the space of constant coefficient local $(-1)$-shifted Poisson structures is identified with a subspace of the space of differential operators $\Diff(\scg^!,\scg[3])$. In particular, given a strictly local $(-1)$-shifted symplectic structure $\omega:L\to L^![-3]$, its corresponding Poisson tensor, defined as the inverse bundle map $\Pi_\omega:L^!\to L[3]$, defines such a differential operator.
\end{eg}

The boundary theory construction, given in Theorem $\ref{BTC}$, will provide many more interesting examples of local $(-1)$-shifted homotopy Poisson structures. Several explicit examples of applications of this theorem occur in Section $\ref{section:examples}$.

Our main theorem about local $(-1)$-shifted Poisson structures is the following:
\begin{theo}\label{dftf} Let $\X$ be a local moduli problem over $M$, equipped with a local $(-1)$-shifted Poisson structure $\Pi\in \Gamma_{m\loc}(\X,\Sym^2 \T_\X)[1]$. Then $\calO_{md}(\X)$ defines a $P_0$ factorization algebra on $M$.
\end{theo}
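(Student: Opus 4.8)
The plan is to follow closely the strategy used in the non-degenerate case treated above, replacing the role of the inverse symplectic tensor $\Pi_\omega$ by the given multilocal bivector $\Pi$, and to isolate precisely where the weaker locality hypotheses suffice. First I would recall that the commutative factorization algebra structure on $\calO_{md}(\X)$ is independent of $\Pi$: it is the one already produced by Costello and Gwilliam, with commutative product given by multiplication of functions and factorization structure maps induced from the cosheaf maps $\scg_c(U_1)\oplus\cdots\oplus\scg_c(U_n)\to\scg_c(V)$ for disjoint $U_i\subset V$. Thus the entire content of the theorem is to equip this commutative factorization algebra with a compatible cohomological degree $+1$ Poisson bracket, i.e. a $P_0$ structure, built from $\Pi$.

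Next I would define the bracket on each open $U$ by contracting the bivector with the de Rham differentials of its arguments,
$$\{f,g\}_U = \langle \Pi, d_\dR f\otimes d_\dR g\rangle,$$
exactly mirroring the formula in the symplectic case but now with a general multilocal $\Pi\in\Gamma_{m\loc}(\X,\Sym^2\T_\X)[1]$. The first and most delicate point to verify is that this is well defined and lands again in $\calO_{md}(\X)$. Here the mollified condition is essential: by construction of $\calO_{md}(\X)$, the covector component of $d_\dR f$ is mollified, i.e. lies in $\scg^!_c[-1]$ rather than in its distributional completion, so that the contraction against the polydifferential operators assembling $\Pi$ is a genuine composition of polydifferential operators producing an honest function; the same multilocality forces its de Rham differential to again have mollified covector component, so the output lies in $\calO_{md}(\X)$. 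This closure property, together with the degree of $\Pi$ being $+1$, is what makes the bracket a candidate $P_0$ bracket, and I expect it to be the main obstacle, since it is exactly the interplay between multilocality and the mollified condition for which both $\calO_{md}(\X)$ and the spaces $\Gamma_{m\loc}(\X,\Sym^j\T_\X)$ were designed.

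The algebraic axioms then follow formally. That the bracket is a biderivation for the commutative product follows from $\Pi$ being a symmetric degree-two polyvector field together with the Leibniz rule for $d_\dR$. Compatibility with the Chevalley--Eilenberg differential --- that the differential is a graded derivation of the bracket, so that the bracket descends to a map of dg algebras --- is precisely the closedness condition $\calL_{Q_\scg}\Pi=0$, unwound through the identification (established before Proposition $\ref{SB}$) of $\calL_{Q_\scg}$ with the Chevalley--Eilenberg differential on multilocal polyvector fields. The graded Jacobi identity is exactly the condition $[\Pi,\Pi]=0$ for the Schouten bracket of Proposition $\ref{SB}$, since both sides are expressible as the same sums of compositions of polydifferential operators.

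Finally I would check the factorization-theoretic locality. The crucial vanishing $\{\iota_U f,\iota_V g\}_W=0$ for disjoint $U,V\subset W$ follows immediately from multilocality of $\Pi$: the polydifferential operators defining $\Pi$ are supported on the diagonal, so they annihilate any contraction of $d_\dR f$ against $d_\dR g$ when $f$ and $g$ have disjoint supports. Compatibility of the bracket with the factorization structure maps is automatic, since the same $\Pi$ defines the bracket uniformly on all opens and the structure maps are the cosheaf extensions used above, while the underlying gluing axioms are inherited from the commutative factorization algebra structure. Assembling these observations shows that $\calO_{md}(\X)$ is a factorization algebra valued in $P_0$ algebras, completing the proof.
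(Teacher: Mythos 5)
First, an important caveat: the paper never actually proves Theorem \ref{dftf}. The theorem is stated and immediately followed by the definition of a degenerate classical field theory; the only gesture toward a proof is the remark in the introduction that ``the locality conditions in our definition of shifted Poisson structure, as well as the original definition of local moduli problem, were chosen to ensure this result.'' So there is no argument in the text to compare yours against, and your proposal must be judged on its own terms. On those terms, your route is certainly the intended one: take the commutative factorization algebra of Costello--Gwilliam, define the bracket by the same contraction formula as in the non-degenerate Proposition of Subsection \ref{slss} with $\Pi$ in place of $\Pi_\omega$, and check the axioms. Your treatment of the algebraic steps is correct: the biderivation property from the Leibniz rule for $d_\dR$, compatibility of the Chevalley--Eilenberg differential with the bracket from $\calL_{Q_\scg}\Pi=0$ (via the identification of $\calL_{Q_\scg}$ with the differential on multilocal polyvector fields), the Jacobi identity from $[\Pi,\Pi]=0$ and Proposition \ref{SB}, and the vanishing $\{\iota_U f,\iota_V g\}_W=0$ from the diagonal support of the kernels of $\Pi$.

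The gap is in the step you yourself flag as the main obstacle: closure of the bracket on $\calO_{md}(\X)$. Your justification --- that the contraction is ``a genuine composition of polydifferential operators producing an honest function'' --- mischaracterizes the objects involved. Elements of $\calO_{md}(\X)$ still have distributional Taylor coefficients in $\widehat\Sym^\bullet(\scg^*[-1])$; indeed the motivating examples, local functionals, have kernels supported on diagonals. The mollified condition only smooths the single covector slot of $d_\dR f$. Hence $\{f,g\}$ is a contraction of genuinely distributional kernels of $f$ and $g$ along their smooth slots against the diagonal-supported kernel of $\Pi$, and what must be proved is that the resulting kernel is again smooth and compactly supported in each single slot stripped off by $d_\dR$ --- not only the spectator slots of $f$ and $g$, but also, more seriously, the new field-input slots contributed by the components of $\Pi$ in $\Gamma^k_{m\loc}(\X,\Sym^2\T_\X)$ with $k\geq 1$: for these, the kernel of $\{f,g\}$ acquires factors of (derivatives of) delta functions tying the $k$ new slots to each other and to the contraction point. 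The claim does survive --- pairing all remaining slots against smooth test sections yields smooth dependence on the stripped slot, because delta kernels are themselves smooth in any one variable valued in distributions in the others, and the mollified condition on $f$ and $g$ is consumed exactly once per contraction --- but establishing this is a kernel-theorem/wave-front-set argument about compositions of distributional and semiregular kernels, not a formal manipulation of polydifferential operators, and it is precisely the analytic content the theorem's hypotheses were designed to deliver. As it stands, your proposal is a correct and well-organized outline, with the theorem's central assertion restated rather than proved at this point.
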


Motivated by this, we make the following definition:

\begin{defn}
A \emph{degenerate classical field theory} is a local moduli problem $\X$ together with a local $(-1)$-shifted Poisson structure $\Pi\in \Gamma_{m\loc}(\X,\Sym^2\T_\X)[1]$ on $\X$.
\end{defn}

We will discuss many examples of degenerate classical field theories, and see that they provide models at the classical level of many interesting quantum field theories which do not admit Lagrangian descriptions.

\newpage

\section{Classical Boundary Theories and Universal Bulk Theories}\label{bt}

In this section, we introduce the notions of phase spaces and boundary conditions for classical field theories on manifolds with boundary, in the language outlined in the previous section. Further, given a fixed classical field theory on a manifold with boundary, together with a choice of boundary condition for it, we construct the corresponding boundary theory as a degenerate classical field theory on the boundary, as described in the introduction. Finally, we construct the universal bulk theory of a given degenerate classical field theory.

\subsection{A Model for the $(n-1)$-Shifted Homotopy Poisson Structure on Lagrangians in $n$-Shifted Symplectic Formal Moduli Problems}\label{TM}

In this subsection we explain the existence of, and give an explicit model for computing, the $(n-1)$-shifted homotopy Poisson structure on a derived Lagrangian in an $n$-shifted symplectic formal moduli problem. As far as we understand, Costello and Rozenblyum are the first who proved a version of the result in the global algebraic setting \cite{Rozenblyum}. Since we need to prove an analogue of this result in the setting of local moduli problems, where we insist on having strict models for the maps involved, we only provide this construction in an analogously strict sense, and do not discuss its homotopical invariance. Closely related constructions to this one have also appeared in \cite{MelaniSafronov}, \cite{Theo}.

Let $X$ be a formal moduli problem over $k$ with corresponding $L_\infty$ algebra $\g=\T_o[-1]X$ finite dimensional as a vector space and denote the $L_\infty$ brackets of $\g$ by $\{l_n\}_{n\in\N}$. Further, let $\omega\in \Gamma(X,\wedge^2 \LL_X[n])$ a strict, constant coefficient, $n$-shifted symplectic structure on $X$, which is determined by a symmetric linear map
$$\omega: \g \to \g^*[n-2]$$
which is an isomorphism of vector spaces satisfying an invariance condition in terms of the $L_\infty$ brackets $\{l_n\}$; this condition can be described as requiring $\calL_{Q_\g}\omega=0$, for $Q_\g\in \Gamma(X,\T_X[1])$ the degree $1$ cohomological vector field corresponding to the Chevalley--Eilenberg differential on $C^\bullet(\g;k)$. Recall again that the space of such strict, constant coefficient $n$-shifted symplectic structures is equivalent to the space of all $n$-shifted symplectic structures \cite{CostelloGwilliam}, so that this simplification is without loss of generality for our purposes here.

The condition $\calL_{Q_\g}\omega=0$ can also be interpreted as requiring that the vector field $Q_\g$ be symplectic with respect to $\omega$, and in the formal setting any such vector field is a Hamiltonian vector field corresponding to a cohomological degree $n+1$ function $S\in \calO(X)[n+1]$. Further, the condition $Q_\g^2=0$ is equivalent to the condition $\{S,S\}_\omega=0$; this is the finite dimensional toy model of the result stated in Proposition $\ref{slla}$, although with a cohomological grading shift, as discussed in Remark $\ref{sslla}$.

Let $L_+$ be another formal moduli problem over $k$ with corresponding $L_\infty$ algebra $\el_+=\T_o[-1]L_+$ finite dimensional and denote the $L_\infty$ brackets of $\el_+$ by $\{l^+_n\}_{n\in\N}$. Further, let $F:L_+\to X$ be a map of formal moduli problems such that the corresponding $L_\infty$ map $f:\el_+\to \g$ is strict, and is injective as a map of the underlying vector spaces. Under these conditions, the map $L_+\to X$ is Lagrangian if and only if the induced map $\el_+ \to \el_+^*[n-2]$ is zero. We will call such Lagrangians as \emph{regular embedded}.

\begin{prop} Let $X$ be a formal moduli problem equipped with an $n$-shifted symplectic structure $\omega$, and $L_+\into X$ a regular embedded Lagrangian. Then there exists a natural $(n-1)$-shifted homotopy Poisson structure $\bar \Pi\in \Gamma(L_+, \widehat\Sym^\bullet (\T_{L_+}[-n] ) )[n+1]$ on $L_+$.
\end{prop}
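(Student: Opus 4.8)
The plan is to use the symplectic form $\omega$ to identify functions on $X$ with polyvector fields on $L_+$, under which the Hamiltonian $S \in \calO(X)[n+1]$ generating the $L_\infty$ structure of $\g$ becomes the desired homotopy Poisson tensor. First I would choose a complementary \emph{Lagrangian} $\el_- \subset \g$ to $f(\el_+)$, i.e. an isotropic graded subspace with $\g = f(\el_+) \oplus \el_-$; the existence of such a polarization is the one genuinely structural input, and follows from the fact that $f(\el_+)$ is a maximal isotropic subspace for the non-degenerate symmetric pairing $\omega$, as guaranteed by the regular embedded Lagrangian hypothesis that the induced map $\el_+ \to \el_+^*[n-2]$ vanishes. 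Restricting $\omega$ then gives a perfect pairing $\el_- \cong \el_+^*[n-2]$, hence a graded isomorphism $\g \cong \el_+ \oplus \el_+^*[n-2]$.

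Dualizing and shifting, this produces an isomorphism of graded vector spaces
$$\calO(X) = \widehat\Sym^\bullet(\g^*[-1]) \cong \widehat\Sym^\bullet(\el_+^*[-1]) \otimes \widehat\Sym^\bullet(\el_+[1-n]) = \Gamma(L_+, \widehat\Sym^\bullet(\T_{L_+}[-n])),$$
where the number of tensor factors coming from $\el_-^*[-1] \cong \el_+[1-n]$ records the polyvector degree, and the overall shift $[n+1]$ matches on both sides. I would then transport $S$ across this identification and define $\bar\Pi$ as its image. The polyvector-degree-one component of $S$ is exactly the Chevalley--Eilenberg vector field $Q_{\el_+}$ encoding the induced $L_\infty$ structure on $\el_+$ — this is well defined precisely because $f$ is a strict $L_\infty$ embedding, so $\el_+$ is a subalgebra and the brackets close — while the polyvector-degree-zero component is $f^*S$, which vanishes by the Lagrangian condition (geometrically, the action restricts to zero on the isotropic $L_+$). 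Hence $\bar\Pi = \{\Pi_j\}_{j \geq 2}$ genuinely begins in polyvector degree two.

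The crux of the argument is to check that, with $\el_-$ chosen Lagrangian so that the Poisson tensor $\Pi_\omega$ is purely off-diagonal, the symplectic Poisson bracket $\{\cdot,\cdot\}_\omega$ on $\calO(X)$ is carried to the Schouten bracket on $\Gamma(L_+, \widehat\Sym^\bullet \T_{L_+})$ of Proposition~\ref{SB}: both are odd brackets obtained by contracting the $\el_+^*$-legs against the $\el_+$-legs through the pairing, so matching them is a bookkeeping of signs and degree shifts rather than a substantial difficulty. Granting this, the classical master equation $\{S,S\}_\omega = 0$ — equivalent to $Q_\g^2 = 0$, the defining condition of the $L_\infty$ structure on $\g$ — translates directly into the equation $[Q_{\el_+} + \bar\Pi, Q_{\el_+} + \bar\Pi] = 0$, which is exactly the condition defining an $(n-1)$-shifted homotopy Poisson structure on $L_+$. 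The main obstacle I anticipate is the careful verification that a Lagrangian complement exists in the graded setting, which is where the non-degeneracy and maximality of the isotropic $\el_+$ is essential, together with ensuring that the bracket-matching respects the $[n-1]$ versus $[n+1]$ conventions throughout.
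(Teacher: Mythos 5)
Your argument is correct in substance, but it is not the route the paper takes for this particular proposition. The paper's proof is choice-free: it realizes $X$ as a formal deformation to the normal cone of $L_+$, using the Lagrangian condition to identify the normal bundle $N_{L_+/X}$ with $\LL_{L_+}[n]$, so that $Q_\g$ decomposes as $Q_{\T_o(T^*[n]L_+)}+Q_d$ for a deformation vector field $Q_d$ on $T^*[n]L_+$; the canonical $n$-shifted symplectic structure on $T^*[n]L_+$ converts $Q_d$ into a Hamiltonian $S_d\in\calO(T^*[n]L_+)[n+1]\cong \Gamma(L_+,\widehat\Sym^\bullet(\T_{L_+}[-n]))[n+1]$, and the classical master equation for $S$ then gives $[Q_{\el_+}+\Pi,Q_{\el_+}+\Pi]=0$. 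What you have written is instead precisely the paper's subsequent \emph{explicit presentation} of this structure (the two propositions that follow): choose an isotropic complement $\el_-$, use $\omega$ to identify $\el_-\cong\el_+^*[n-2]$ and hence $\calO(X)\cong\Gamma(L_+,\widehat\Sym^\bullet(\T_{L_+}[-n]))$, transport $S$, check $S_0=0$ and $S_1=Q_{\el_+}$, and convert the classical master equation through the bracket-matching into the homotopy Poisson equation. Your route has the virtue of being directly computable (it is what the paper actually uses in the WZW and Toda examples, and the bracket-matching you call the crux is carried out, in the local setting, in the proof of Theorem \ref{BTC}); what the normal-cone argument buys is that naturality is manifest, whereas in your approach one must additionally check that the result is independent, up to homotopy, of the choice of $\el_-$ --- a point the paper explicitly flags.

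One imprecision is worth repairing. You claim the existence of the polarization follows because $f(\el_+)$ is ``maximal isotropic,'' as ``guaranteed by'' the vanishing of $\el_+\to\el_+^*[n-2]$. That vanishing is only isotropy: by itself it neither implies maximality (e.g.\ $\el_+=0$ satisfies it) nor, for a graded symmetric pairing, would maximality alone suffice (a maximal isotropic subspace of an odd-dimensional orthogonal summand admits no isotropic complement). What you actually need is the genuine Lagrangian condition implicit in the hypothesis --- that $\el_+\to\g\to\el_+^*[n-2]$ exhibits $\el_+$ as $\el_+^\perp$, equivalently $\g/\el_+\cong\el_+^*[n-2]$. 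Granting that, a degreewise construction in characteristic zero does produce an isotropic complement: degrees $i\neq n-2-i$ pair hyperbolically, and the middle degree, when present, is a split quadratic or symplectic space containing a half-dimensional Lagrangian, which always admits a Lagrangian complement. With that substitution, the rest of your bookkeeping --- $S_0=f^*S=0$, $S_1=Q_{\el_+}$ via closure of $\el_+$ under the brackets (strictness of $f$), and the identification of $\{\cdot,\cdot\}_\omega$ with the Schouten bracket of Proposition \ref{SB} --- goes through as you describe.
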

\begin{proof}
Since $L_+\into X$ is a regular embedding, the normal cone $L_+\into C_X L_+$ is isomorphic to the total space of the normal bundle $L_+\into |N_{L_+/X}|$, which is isomorphic as a vector bundle to $\LL_{L_+}[n]$, the total space of which we denote by $T^*[n]L_+$. The formal deformation of the normal cone construction yields a deformation of $T^*[n]L_+$ to $X$, given by a vector field $Q_d\in \Gamma(T^*[n]L_+,\T_{T^*[n]L_+}[1])$ such that $Q_\g=Q_{\T_o(T^*[n]L_+)}+Q_d$. The vector field $Q_d$ is equivalent, via the $n$-shifted symplectic structure on $T^*[n]L_+$, to an action functional
$$S_d\in \calO(T^*[n]L_+)[n+1]\cong \Gamma(L_+,\widehat\Sym^\bullet ( \T_{L_+}[-n]) )[n+1]$$
Defining $\Pi= S_d$, the classical master equation for the action functional $S\in \calO(X)[n+1]$ then implies
$$[Q_{\el_+}+\Pi, Q_{\el_+}+\Pi]=0$$
so that $\Pi$ defines an $(n-1)$-shifted homotopy Poisson structure on the formal moduli problem $L_+$.
\end{proof}

We now give an explicit presentation of the $(n-1)$-shifted homotopy Poisson structure given above: choose a complementary vector subspace $\el_-\into \g$ such that the induced map $\el_-\to \el_-^*[n-2]$ also vanishes; the following calculation will only depend on this choice up to homotopy. Now, the vector space isomorphism $\omega:\g\to \g^*[n-2]$ induces an isomorphism
$$\omega|_{\el_-}:\el_-\to \el_+^*[n-2]$$
so that we have an isomorphism of vector spaces $\g\cong \el_+\oplus \el_+^*[n-2]$. In terms of this isomorphism, we have the isomorphism of vector spaces
$$ \calO(X)[n+1] = \widehat\Sym^\bullet( \g^*[-1])[n+1] \cong \widehat\Sym^\bullet(\el_+^*[-1])\otimes_k \widehat\Sym^\bullet(\el_+[1-n])[n+1] $$
and thus we can decompose $S\in \calO(X)[n+1]$ as
$$
S = \sum_{j\geq 0} S_j \quad\quad\text{where}\quad\quad S_j \in \widehat \Sym^\bullet(\el_+^*[-1])\otimes_k \Sym^j(\el_+[1-n])[n+1] \cong \Gamma(L_+, \Sym^j( \T_{L_+}[-n]))[n+1]
$$
so that we obtain a sequence of elements of the $(n+1)^{th}$ shift of the $(n-1)$-shifted polyvector fields on $L_+$. Note that in particular
$$ S_1 \in \Gamma(L_+,\T_{L_+})[1] \quad\quad\text{and}\quad\quad S_2 \in \Gamma(L_+,  \T_{L_+}^{\otimes 2} )[1-n] ,$$ where we have symmetricity or anti-symmetricity of tensor powers depending on the parity of $n$, although these are not a priori closed elements for the cohomological differentials on these complexes. Now, we have:

\begin{prop} Let $X$ be a formal moduli problem equipped with an $n$-shifted symplectic structure $\omega$, and with $L_\infty$ structure on $\g=\T_o[-1]X$ given by an action functional $S\in \calO(X)[n+1]$. Further, let $L_+\into X$ be a regular embedded Lagrangian, and fix a complementary vector space $\el_-$ to $\el_+\into \g$. Then $S_0=0$, $S_1=Q_\g \in \Gamma(X,\T_X[1])$ the cohomological vector field defining the $L_\infty$ structure on $\g$,  and
$$\bar\Pi = \{S_j\}_{j=2}^\infty \quad \in \ \prod_{j\geq 2}\Gamma(L_+, \Sym^j( \T_{L}[-n]))[n+1]$$
defines the $(n-1)$-shifted homotopy Poisson structure on $L_+$.
\end{prop}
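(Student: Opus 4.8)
The plan is to pass through the standard identification, already invoked in the previous proposition, of functions on the shifted cotangent space with polyvector fields on the base, under which the shifted Poisson bracket becomes the Schouten bracket, and then to read off the three claims from the fact that $\el_+\into\g$ is a strict, injective map of $L_\infty$ algebras.

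First I would fix the splitting $\g\cong \el_+\oplus \el_+^*[n-2]$ furnished by $\omega$ and the choice of complement $\el_-$, which identifies $X$ with $T^*[n]L_+$ as a graded symplectic formal space: since both $\el_+$ and $\el_-$ are isotropic, $\omega$ becomes the canonical pairing between base and fibre, so the $n$-shifted Poisson bracket on $\calO(X)[n+1]$ agrees with the canonical bracket on $\calO(T^*[n]L_+)[n+1]$, and hence, under the isomorphism $\calO(T^*[n]L_+)[n+1]\cong \Gamma(L_+,\widehat\Sym^\bullet(\T_{L_+}[-n]))[n+1]$, with the Schouten bracket on polyvector fields. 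Under this dictionary the decomposition of $S$ by fibre-degree is exactly the decomposition $S=\sum_j S_j$ with $S_j\in\Gamma(L_+,\Sym^j(\T_{L_+}[-n]))[n+1]$ in the statement, and the Hamiltonian vector field $Q_\g=\Pi_\omega d_\dR S$ decomposes correspondingly.

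The heart of the argument is to exploit that $\el_+\into\g$ is strict and injective. This says precisely that $\el_+$ is a sub-$L_\infty$-algebra, equivalently that $L_+=\{\,\text{fibre}=0\,\}$ is preserved by $Q_\g$ and that the tangential restriction of $Q_\g$ to $L_+$ is the cohomological vector field $Q_{\el_+}$ defining the $L_\infty$ structure on $\el_+=\T_o[-1]L_+$. In Darboux-type coordinates, where the fibre component of $Q_\g$ is $-\partial S/\partial x$ and the base component is $\partial S/\partial p$, evaluating at the zero fibre kills all terms of fibre-degree $\geq 1$. Tangency of $Q_\g$ to $L_+$ then forces $\partial S_0/\partial x=0$, so $S_0$ is constant and hence $0$, since $S$ has no constant term; this gives $S_0=0$. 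Likewise the base component of $Q_\g$ along $L_+$ is the coefficient vector field of the fibre-linear part $S_1$, and its equality with $Q_{\el_+}$ is exactly the identification recovering the cohomological vector field, matching the claimed $S_1=Q_\g$; its Hamiltonian vector field is the cotangent lift $Q_{\T_o(T^*[n]L_+)}$ appearing as the undeformed part of $Q_\g$ in the previous proposition.

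Finally I would combine these two facts with the previous proposition. Since $S_0=0$ and $S_1$ is Hamiltonian for $Q_{\T_o(T^*[n]L_+)}$, the difference $Q_d=Q_\g-Q_{\T_o(T^*[n]L_+)}$ is Hamiltonian for $S-S_1=\sum_{j\geq 2}S_j$, so $\bar\Pi=S_d=\{S_j\}_{j=2}^\infty$ is exactly the $(n-1)$-shifted homotopy Poisson structure produced there, with the closedness equation $[Q_{\el_+}+\bar\Pi,Q_{\el_+}+\bar\Pi]=0$ following from $\{S,S\}_\omega=0$. I expect the main obstacle to be purely bookkeeping: pinning down the signs and degree shifts in the identification of the Schouten bracket with the shifted canonical Poisson bracket, and the parity-of-$n$ dependence of symmetry versus antisymmetry of the $S_j$, so that the fibre-degree grading matches the symmetric-power grading of polyvector fields on the nose; the conceptual input beyond this dictionary is only the strictness of $\el_+\into\g$.
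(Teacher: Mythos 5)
Your proposal is correct and takes essentially the same route as the paper's proof: both pass through the splitting $\g\cong \el_+\oplus \el_+^*[n-2]$ and the identification of $\calO(T^*[n]L_+)$ with polyvector fields on $L_+$ (Poisson bracket $\leftrightarrow$ Schouten bracket), decompose $S$ by fibre degree, and deduce the homotopy Poisson equation for $\bar\Pi=\{S_j\}_{j\geq 2}$ from the classical master equation $\{S,S\}_\omega=0$. The only variation is presentational: you obtain $S_0=0$ and $S_1=Q_{\el_+}$ from tangency of $Q_\g$ to the zero section $L_+$ together with the absence of a constant term in $S$, whereas the paper reads these off directly from the explicit formula $S(x)=\sum_{k\geq 0}\omega(x,l_k(x^{\otimes k}))$ using the Lagrangian condition and closure of $\el_+$ under the brackets --- two equivalent ways of exploiting the strictness of the embedding.
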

\begin{proof}
Recall that $S\in\calO(X)[n+1]$ is given by
$$S(x)=\sum_{n\geq 0} \omega(x,l_n(x^{\otimes n}))$$
for each $x\in\g[1]$, where $\{l_n:\g^{\otimes n}\to \g[2-n]\}_{n\geq 0}$ are the $L_\infty$ brackets on $\g$. Since $\el_+$ is a Lagrangian subspace of $\g$ which is closed under $\{l_n\}_{n\geq 0}$, it is clear that the above expression for $S$ vanishes on tensor products of elements $l\in \el_+$, so that $S_0=0$. 

Interpreting the Lagrangian condition as the equality of $\omega$ with its restriction $\omega|_{\el_-}:\el_-\to \el_+^*[n-2]$, the $S_1$ component of the above expression for $S$ is identified with the vector field
$$S_1(x)=\sum_{n\geq 0} l_n(x^{\otimes n})$$
which is precisely $Q_\g$, the vector field defining the $L_\infty$ brackets on $\g$, as claimed.

Finally, under the above identifications, the $n$-shifted classical master equation on $S$ is equivalent to the equation
$$[Q_\g+\bar\Pi,Q_\g+\bar\Pi]=0$$
defining an $(n-1)$-shifted homotopy Poisson structure on $X$.
\end{proof}

An alternative way of understanding this construction is via the notion of higher Poisson centre, which comes from thinking of a Lagrangian as a coisotropic rather than isotropic subspace. This notion has appeared in \cite{Safronov} in the affine case; we will define it here for formal moduli problems and later give the analogous construction for local moduli problems.

Let $L$ be a formal moduli problem with $\el=\T_o[-1]L$ the corresponding $L_\infty$-algebra and let
$$\bar\Pi=\{\Pi_j\}_{j=2}^\infty\quad \in \  \prod_{j\geq 2}\Gamma(L, \Sym^j( \T_{L}[-n]))[n+1]$$
not neccesarily Poisson.
Recall that the $L_\infty$ $\el$-module corresponding to the vector bundle $\LL_L[n]$ on $L$ is $\el^*[n-1]$ so that we have an isomorphism of $P_{n+1}$ algebras
$$\calO(T^*[n]L) = C^\bullet(\el\ltimes \el^*[n-2];k) \cong \Gamma(L, \widehat\Sym^\bullet(\T_L[-n])) $$
and the latter space is the completion of the space of $(n-1)$-shifted polyvector fields of all degrees, equipped with the Schouten bracket, which is of cohomological degree $-n$. Under this isomorphism, $\bar\Pi$ corresponds to a cohomological degree $n+1$ function on $T^*[n]L$, which determines a cohomological degree $1$ Hamiltonian vector field $Q_{\bar\Pi}\in \Gamma(T^*[n]L,\T_{T^*[n]L}[1])$. We have:

\begin{prop} Let $L$ be a formal moduli problem and $\bar\Pi\in\Gamma(L, \widehat \Sym^{\bullet} ( \T_{L}[-n]))[n+1]$ an $(n-1)$-shifted polyvector field of mixed degree $j\geq 2$ and of cohomological degree $n+1$. Then
$$ Q_{T^*[n]L}+Q_{\bar\Pi} \quad \in \  \Gamma(T^*[n]L,\T_{T^*[n]L}[1]) $$
is square zero if and only if $\bar\Pi$ defines an $(n-1)$-shifted homotopy Poisson structure on $L$, where $Q_{T^*[n]L}$ is the cohomological vector field defining the $L_\infty$-algebra structure on $\T_o[-1](T^*[n]L)$.
\end{prop}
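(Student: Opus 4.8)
The plan is to realize both $Q_{T^*[n]L}$ and $Q_{\bar\Pi}$ as Hamiltonian vector fields on the $n$-shifted symplectic space $T^*[n]L$, and then transport the square-zero condition into the Schouten-bracket master equation defining an $(n-1)$-shifted homotopy Poisson structure, using the isomorphism of $P_{n+1}$ algebras $\calO(T^*[n]L)\cong \Gamma(L,\widehat\Sym^\bullet(\T_L[-n]))$ recalled just above the statement. The key general facts I would invoke are: under the canonical $n$-shifted symplectic form on $T^*[n]L$, a function of cohomological degree $n+1$ determines a Hamiltonian vector field of cohomological degree $1$; the assignment $f\mapsto Q_f$ is a morphism of shifted Lie algebras, so $[Q_f,Q_g]=Q_{\{f,g\}_\omega}$, with kernel exactly the constant functions by non-degeneracy of $\omega$ (since $Q_f=\Pi_\omega\circ d_\dR f$ and $\Pi_\omega$ is an isomorphism); and, under the $P_{n+1}$ isomorphism above, the bracket $\{\cdot,\cdot\}_\omega$ corresponds precisely to the Schouten bracket on polyvector fields.

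First I would identify $Q_{T^*[n]L}$ itself as a Hamiltonian vector field. Writing $\calO(T^*[n]L)[n+1]\cong \Gamma(L,\widehat\Sym^\bullet(\T_L[-n]))[n+1]$, the polynomial-degree-$1$ summand is $\Gamma(L,\T_L[1])$, and I claim the cohomological vector field $Q_{T^*[n]L}$ defining the semidirect-product $L_\infty$ structure on $\el\ltimes\el^*[n-2]$ is the Hamiltonian vector field $Q_{S_1}$ of the element $S_1=Q_\el\in\Gamma(L,\T_L[1])$ encoding the brackets $\{l_n\}$ of $\el$. This is the finite-dimensional, shifted toy model of the equivalence in Proposition $\ref{slla}$ (with the degree shift of Remark $\ref{sslla}$): the fibre-linear function $x\mapsto \sum_n l_n(x^{\otimes n})$ has Hamiltonian flow acting on the base functions $\widehat\Sym^\bullet(\el^*[-1])$ as the Chevalley--Eilenberg differential of $\el$, and dually on the fibre coordinates as the coadjoint action, which is exactly the $L_\infty$ structure of $\el\ltimes \el^*[n-2]$.

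Given this identification, linearity of $f\mapsto Q_f$ gives $Q_{T^*[n]L}+Q_{\bar\Pi}=Q_{S_1+\bar\Pi}$, so that
$$(Q_{T^*[n]L}+Q_{\bar\Pi})^2=\tfrac{1}{2}[Q_{S_1+\bar\Pi},Q_{S_1+\bar\Pi}]=\tfrac{1}{2} Q_{\{S_1+\bar\Pi,\,S_1+\bar\Pi\}_\omega}=\tfrac{1}{2} Q_{[Q_\el+\bar\Pi,\,Q_\el+\bar\Pi]},$$
the last equality using the identification of $\{\cdot,\cdot\}_\omega$ with the Schouten bracket. Hence the left-hand side vanishes if and only if $[Q_\el+\bar\Pi,Q_\el+\bar\Pi]$ lies in the kernel of $f\mapsto Q_f$, i.e. is a constant function. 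But $S_1=Q_\el$ has polynomial degree $1$ in the cotangent fibre directions and each $\Pi_j$ has polynomial degree $\geq 2$, so every component of $[Q_\el+\bar\Pi,Q_\el+\bar\Pi]$ has positive fibre degree and it therefore cannot be a nonzero constant; thus it vanishes. This is exactly the equation defining an $(n-1)$-shifted homotopy Poisson structure on $L$, establishing both implications simultaneously.

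The step I expect to be the main obstacle is the identification $Q_{T^*[n]L}=Q_{S_1}$: one must verify carefully that the semidirect-product $L_\infty$ structure on $\el\ltimes\el^*[n-2]$ defining $T^*[n]L$ is generated, as the Hamiltonian vector field for the canonical symplectic form, precisely by the fibre-linear function $Q_\el$, with no additional contributions. This is a compatibility between the coadjoint module structure and the symplectic pairing, and is where the shifted-degree bookkeeping of Remark $\ref{sslla}$ must be handled with care. The remaining steps are then formal consequences of the Hamiltonian correspondence together with the polynomial-degree counting that pins down the constant ambiguity.
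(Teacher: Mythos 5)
Your proposal is correct and takes essentially the same route as the paper: the paper's own proof is a one-line appeal to the preceding discussion, which sets up exactly the identifications you spell out --- $Q_{T^*[n]L}+Q_{\bar\Pi}$ as the Hamiltonian vector field of $S_1+\bar\Pi$ under the $P_{n+1}$ isomorphism $\calO(T^*[n]L)\cong\Gamma(L,\widehat\Sym^\bullet(\T_L[-n]))$, with the square-zero condition translating into the Schouten master equation $[Q_\el+\bar\Pi,Q_\el+\bar\Pi]=0$. Your explicit handling of the kernel of $f\mapsto Q_f$ (the constants) and the fibre-degree count ruling out a nonzero constant in $[Q_\el+\bar\Pi,Q_\el+\bar\Pi]$ simply makes precise details the paper leaves implicit.
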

\begin{proof}
This follows immediately from the preceding proof and discussion.
\end{proof}

In the case where $\bar\Pi$ does define an $(n-1)$-shifted homotopy Poisson structure, the above cohomological vector field defines a new $L_\infty$ algebra yielding a formal moduli problem deforming $T^*[n]L$, which is a twisted cotangent bundle, denoted by $T^*_{\bar\Pi}[n]L $. Further, since the deformation is Hamiltonian, the deformed cohomological vector field is again symplectic for the $n$-shifted symplectic structure on $T^*[n]L$ and thus the twisted cotangent bundle $T^*_{\bar\Pi}[n]L$ is also an $n$-shifted symplectic formal moduli problem. Also, note that such twistings preserve the zero section map $\sigma_0:L\to T^*_{\bar\Pi}[n]L$.

In terms of this space, the model for the $(n-1)$-shifted homotopy Poisson structure on a regular embedded Lagrangian $L\into X$, for $X$ an $n$-shifted symplectic formal moduli problem, can be understood as follows: there is a unique twisted cotangent bundle $T^*_{\bar\Pi}L$ deforming $T^*L$ which has total space isomorphic to $X$ as an $n$-shifted symplectic formal moduli problem, intertwining the maps $L\into X$ and $\sigma_0:L\to T_{\bar\Pi}^*[n]L$; the Hamiltonian function corresponding to the deformation of the cohomological vector field defining this twisting is precisely the same data as an $(n-1)$-shifted homotopy Poisson structure on $L$. In this sense, the construction of $T^*_{\bar\Pi}[n]L$ from the Poisson structure on $L$ is an inverse to the procedure of constructing the $(n-1)$-shifted homotopy Poisson structure on $L$ described above for regular embedded Lagrangians $L\into X$.

\begin{prop} Let $X$ be an $n$-shifted symplectic formal moduli problem and $f:L\to X$ a Lagrangian for $X$ with $(n-1)$-shifted Poisson structure $\bar\Pi$. Then there exists a map $\tilde f:T^*_{\bar\Pi}[n]L\to X$ of $n$-shifted symplectic formal moduli problems such that
\[\xymatrix{
L \ar[r]^-{\sigma_0} \ar[dr]_f & T^*_{\bar\Pi}[n]L \ar[d]^{{\tilde f}}  \\
  & X
}\]
commutes.
\end{prop}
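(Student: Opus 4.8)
The plan is to produce $\tilde f$ directly from the explicit model for $\bar\Pi$ constructed above, and to observe that in the regular embedded case it is in fact an isomorphism of $n$-shifted symplectic formal moduli problems. First I would fix the same auxiliary data used to define $\bar\Pi$: a complementary subspace $\el_-\into\g$ to $\el_+=\T_o[-1]L$ with $\omega|_{\el_-}$ vanishing. Together with the induced isomorphism $\omega|_{\el_-}:\el_-\xrightarrow{\ \sim\ }\el_+^*[n-2]$, this furnishes a linear symplectomorphism
$$\phi:\T_o[-1]\big(T^*[n]L\big)=\el_+\oplus\el_+^*[n-2]\xrightarrow{\ \sim\ }\g$$
carrying the canonical $n$-shifted pairing on $T^*[n]L$ to $\omega$.

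The key point is that $\phi$ upgrades to a \emph{strict} isomorphism of $L_\infty$ algebras onto $\T_o[-1](T^*_{\bar\Pi}[n]L)$. This is exactly the content of the identity $Q_\g=Q_{\T_o(T^*[n]L)}+Q_d$ extracted from the deformation to the normal cone in the existence proof above, once one recalls that the deformation term $Q_d$ is the Hamiltonian vector field $Q_{\bar\Pi}$ of $S_d=\bar\Pi$: transporting $Q_\g$ through $\phi$ yields precisely $Q_{T^*[n]L}+Q_{\bar\Pi}$, which is by definition the cohomological vector field defining the twisted cotangent bundle $T^*_{\bar\Pi}[n]L$. Equivalently, decomposing the Hamiltonian $S\in\calO(X)[n+1]$ by momentum degree gives $\phi^*S=S_1+\bar\Pi$ with $S_0=0$, where $S_1$ generates the semidirect-product field $Q_{\T_o(T^*[n]L)}$ and $\bar\Pi=\{S_j\}_{j\geq 2}$ generates $Q_{\bar\Pi}$. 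I would then define $\tilde f$ to be the map of formal moduli problems corresponding to $\phi$. Since $\phi$ preserves the symplectic structures — the twisting being Hamiltonian leaves the canonical form unchanged — $\tilde f$ is a map of $n$-shifted symplectic formal moduli problems, and as $\phi$ is a linear isomorphism it is moreover an isomorphism, recovering the identification asserted in the discussion preceding the statement.

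For commutativity of the triangle, I would note that $\sigma_0:L\to T^*_{\bar\Pi}[n]L$ is the inclusion $\el_+\into\el_+\oplus\el_+^*[n-2]$ of the base summand, and that $\phi$ restricts to the identity on $\el_+\subset\g$; hence $\tilde f\circ\sigma_0$ corresponds to the original inclusion $f:\el_+\into\g$, as required. The main subtlety here is not any single computation, since the required identity $\phi^{-1}_*Q_\g=Q_{T^*[n]L}+Q_{\bar\Pi}$ is assembled from the two preceding propositions, but rather the dependence of the entire construction on the choice of complement $\el_-$. As $\bar\Pi$ itself is only well-defined up to homotopy through this choice, I would fix $\el_-$ consistently with the definition of $\bar\Pi$ and record that distinct choices produce isomorphic data, so that $\tilde f$ is well-defined with the expected homotopical ambiguity.
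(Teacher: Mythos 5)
Your proposal is correct and takes essentially the same approach the paper intends: the paper states this proposition without a separate proof because it is precisely the content of the preceding construction, in which the choice of Lagrangian complement $\el_-$ identifies $\g$ with $\el_+\oplus\el_+^*[n-2]$ as symplectic vector spaces and transports $Q_\g$ to $Q_{T^*[n]L}+Q_{\bar\Pi}$, exhibiting $X$ as the twisted cotangent bundle $T^*_{\bar\Pi}[n]L$ compatibly with the maps from $L$. Your assembly of these facts into the map $\tilde f$, including the symplectic compatibility via Hamiltonian twisting, the commutativity check on the $\el_+$ summand, and the remark that the choice of $\el_-$ matters only up to homotopy, matches the paper's discussion point for point.
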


Motivated by this universal property, we make the following definition:

\begin{defn} Let $L$ be a formal moduli problem with $(n-1)$-shifted homotopy Poisson structure $\bar\Pi$. The \emph{higher Poisson centre} $Z_{\bar\Pi} (L)$ of $L$ is the $n$-shifted symplectic formal moduli problem
$$Z_{\bar\Pi}(L)= T^*_{\bar\Pi}[n]L.$$
\end{defn}

\begin{eg}\label{ex:Poisson centre of symplectic}
Let $X$ be a formal moduli problem with a strictly local $(n-1)$-shifted symplectic structure $\omega\in \Gamma(X,\wedge^2 \LL_X[n-1])$, which gives rise to $\Pi_\omega \colon \Gamma(X, \mr{Sym}^2(\T_X[-n]) )[n+1] $. Following the construction above, this leads to $Q_{\Pi_\omega} \in \Gamma( T^*[n] X, \T_{T^*[n]X} )[1] $ understood as the differential. More concretely, $\omega$ is given by a symmetric linear map $\omega \colon \g\to \g^*[n-2]$ and in particular $\Pi_\omega$ by $\Pi_\omega \colon \g^*[n-2]\to \g$ which is an isomorphism as a vector space. Under the identification $\T_o (T^*[n]X) = \g \ltimes \g^*[n-2]$, the differential $Q_{\Pi_\omega}$ is given by an isomorphism $\g^*[n-2]\to  \g$. In particular, it is contractible and hence has homotopically trivial $L_\infty$ algebra structure. This observation is used for proving Proposition \ref{prop:universal bulk of non-degenerate}.
\end{eg}

\begin{rmk}\label{rmk:global interpretation}
In terms of the global algebraic language, as $X$ is $(n-1)$-shifted symplectic, one has $T^*[n]X \simeq T[1] X$ which is the Dolbeault stack $X_{\mr{Dol}} $. Then adding a non-degenerate Poisson bivector amounts to turning on the de Rham differential on the Dolbeault stack $T[1]X$ to obtain the de Rham stack $X_{\mr{DR}}$, which has the trivial tangent complex. This means that the corresponding theory is trivial in a perturbative description. However, this is not necessarily the case at a \emph{nonperturbative} level. For instance, the Kapustin--Witten A-twist as computed by \cite{EY} involves the de Rham stack and hence is perturbatively trivial but encodes interesting information to be S-dual to the B-twist which is nontrivial even at the perturbative level. Indeed, from the nature of S-duality, one shouldn't expect perturbative information of two dual theories to be comparable. 
\end{rmk}

\subsection{The Phase Space of a Classical Field Theory on a Manifold with Boundary}\label{PhS}

In this subsection, we define the phase space of a classical field theory on a manifold with boundary, in terms of the formalism of local moduli problems, using a simplifying assumption that the classical field theory is topological in the direction normal to the boundary of the manifold. This definition is essentially equivalent to the construction of the boundary BFV theory in \cite{CattaneoMnevReshetikhin}.

Let $M$ be a manifold with boundary, $N=\del M$, $M^\circ=M\setminus N$, and $U_\e\cong N\times (0,\e) $ the interior of a collar neighbourhood of $\del M$ in $M$. We define a classical field theory $(\X,\omega)$ on $M$ as simply one defined on $M^\circ$, and will define the phase space of $(\X,\omega)$ on $N$ as a local moduli problem $\X^\del$ equipped with an appropriately local $0$-shifted symplectic structure. To simplify this problem, we restrict our attention to field theories which are topological in the direction normal to $\Sigma$, in a precise sense described below, which will in particular ensure the $0$-shifted symplectic structure on $X^\del$ is strictly local. We first state a condition on the underlying local moduli problem of the classical field theory, followed by an additional constraint on the symplectic form in the case that the preceding condition is satisfied:

\begin{defn} Let $\X$ a local moduli problem on a manifold $M$ with boundary $\del M=N$. Then $\X$ is \emph{topological normal to $N$} if
$$\X|_{U_\e} \cong \widehat{\underline{\mr{Map}}}_o((0,\e)_\dR,\X^\del)$$
as local moduli problems on $U_\e\times N$, for a local moduli problem $\X^\del$ on $N$.
\end{defn}

Let $L$ denote the vector bundle on $M$ underlying $\scg=\T_o[-1]\X$, and $L^\del$ the vector bundle on $N$ underlying $\scg^\del=\T_o[-1]\X^\del$. Then, unpacking this definition, we have that
$$L|_{U_\e}\cong\wedge^\bullet T^*_{(0,\e)}\boxtimes L^\del $$
and moreover
$$
l_1|_{U_\e} = l_1^\del\otimes \id_{\Omega^\bullet_{(0,\e)}} + \id_{\scg^\del}\otimes d_{(0,\e)} \quad\quad\text{and}\quad\quad  l_n|_{U_\e}  = l_n^\del\otimes \mu^n \quad \text{for }n\geq 2,
$$
where $\{l_n^\del:(\scg^\del)^{\otimes n}\to \scg^\del[2-n]\}_{n\in\N}$ is the family of polydifferential operators over $N$ defining the local $L_\infty$ structure on $\scg^\del$, and $\mu^n:(\Omega^\bullet_{(0,\e)})^{\otimes n}\to\Omega^\bullet_{(0,\e)}$ denotes the multiplication of $n$ elements.

Further, we make the following definition for a classical field theory:

\begin{defn}
Let $(\X,\omega)$ a classical field theory on $M$ such that $\X$ is topological normal to $N$. Then $(\X,\omega)$ is \emph{topological normal to $N$} if
$$\omega|_{U_\e}= I_{(0,\e)}\boxtimes\omega^\del :\wedge^\bullet T^*_{(0,\e)} \boxtimes  L^\del  \to (\wedge^\bullet T^*_{(0,\e)})^! \boxtimes (L^\del)^! [-3], $$
for $\omega^\del:L^\del\to (L^\del)^![-2]$ a strictly local $0$-shifted symplectic structure on the local moduli problem $\X^\del$, and where $I_{(0,\e)}: \wedge^\bullet T^*_{(0,\e)}\to  (\wedge^\bullet T^*_{(0,\e)})^![-1]$ is the isomorphism corresponding to integration.
\end{defn}
Under these conditions, we define the phase space of the classical field theory:

\begin{defn}
Let $(\X,\omega)$ a classical field theory on a manifold $M$ with boundary $\del M=N$, which is topological normal to $N$. The \emph{phase space of $(\X,\omega)$ on $N$} is the local moduli problem $\X^\del$ over $N$ corresponding to $\scg^\del$ above, equipped with the required strictly local $0$-shifted symplectic structure $\omega^\del$ on $\X^\del$.
\end{defn}

Note that the $0$-shifted symplectic variant of Proposition $\ref{slla}$, as discussed in Remark $\ref{sslla}$, gives an equivalence between the data of the $L_\infty$ brackets $\{l_n^\del\}_{n\in\N}$ on $\scg^\del$ and a cohomological degree $1$ local action functional $S^\del\in\calO_\loc(\X^\del)[1]$ satisfying the $0$-shifted classical master equation $\{S^\del,S^\del\}_{\omega^\del}=0$.

\begin{eg}\label{ubtp} For any local moduli problem $\Y$ over $N$ equipped with a strictly local $0$-shifted symplectic structure $\eta$, there exists a classical field theory on $\R_{\geq 0}\times N$ with phase space given by $\Y$, with underlying local moduli problem $\widehat{\underline{\mr{Map}}}_o((\R_{\geq 0})_\dR,\Y)$ and $(-1)$-shifted symplectic form $\omega=I_{\R_{\geq 0}}\boxtimes \eta$. This fact will be used essentially in the construction of the universal bulk theory.
\end{eg}

\begin{eg} From the mapping stack adjunction, an AKSZ type theory on a manifold $M$ with boundary $\del M=N$, defined by $\X=\widehat{\underline{\mr{Map}}}_o(M_\dR,B\g)$ and $\omega=I_M\boxtimes \eta$, has phase space $\X^\del=\widehat{\underline{\mr{Map}}}_o(N_\dR,B\g)$ with $\omega^\del=I_N\boxtimes \eta$.
\end{eg}

\subsection{Local Boundary Conditions for Classical Field Theories}

In this subsection, we define the notion of a local boundary condition for a classical field theory on a manifold with boundary. Let $M$ be a manifold with boundary $N=\del M$, $(X,\omega)$ a classical field theory on $M$ which is topological normal to $N$, and $(\X^\del,\omega^\del)$ the phase space of $(\X,\omega)$ on $N$.

\begin{defn} A \emph{regular embedded local boundary condition} for $(X,\omega)$ on $N$ is:
\begin{itemize}
\item a local moduli problem $\scL_+$ over $N$, with $\scl_+=\T_o[-1]\scL_+$ the defining local $L_\infty$ algebra with underlying vector bundle denoted $L_+$, and

\item a homotopically strict, strictly local map $F:\scL_+\to\X^\del$ of local moduli problems over $N$, with underlying bundle map $f:L_+\to L^\del $ injective,
\end{itemize}
such that there exists a choice of complementary subbundle $L_-\into L^\del$ to $L_+$ in $L^\del$, making $L_+$ and $L_-$ Lagrangian subbundles of the vector bundle $L^\del$ with respect to the strictly local $0$-shifted symplectic structure $\omega^\del:L^\del\otimes L^\del\to \Dens_N$, in the sense that $\omega^\del|_{L_\pm\otimes L_\pm}=0$.
\end{defn}

Unpacking this definition, we have that the $L_\infty$ brackets $\{l^+_n:\scl_+^{\otimes n}\to\scl_+[2-n]\}_{n\in\N}$ defining $\scL_+$ are given by
$$l_n^+=l_n^\del\circ f^{\otimes n} $$
where $\{l_n^\del\}_{n\in\N}$ are the $L_\infty$ brackets defining $\X^\del$. Further, since $\omega^\del:L^\del\to (L^\del)^![-2]$ is an isomorphism, the Lagrangian condition implies that
$$ \omega^\del|_{L_+}:L_+\xrightarrow{\cong} L_-^![-2] \quad\quad\text{and}\quad\quad \omega^\del|_{L_-}:L_-\xrightarrow{\cong} L_+^![-2]$$
have the stated codomains and moreover are isomorphisms of vector bundles on $N$.

\begin{eg} Recall that an AKSZ type theory defined by $\X=\widehat{\underline{\mr{Map}}}_o(M_\dR,B\g)$ and $\omega=I_M\boxtimes \eta$ has phase space $\X^\del=\widehat{\underline{\mr{Map}}}_o(N_\dR,B\g)$ with $\omega^\del=I_N\boxtimes \eta$. Given a regular embedded Lagrangian $B\el_+\into B\g$ and a complementary subspace $\el_-$ to $\el_+\into \g$, the local moduli problem $\scL_+=\widehat{\underline{\mr{Map}}}_o(N_\dR,B\el_+)$ has a natural homotopically strict, strictly local map to $\X^\del$ with injective underlying bundle map $L_+=\wedge^\bullet T^*_N\otimes \el_+\into L=\wedge^\bullet T^*_N\otimes \g$ and a complementary Lagrangian subbundle $L_-= \wedge^\bullet T^*_N\otimes \el_-$, defining a regular embedded local boundary condition. 
\end{eg}

\subsection{Construction of the Boundary Theory}

In this subsection we explain our main construction, which proceeds precisely as in the finite dimensional toy model given in Subsection $\ref{TM}$:

\begin{theo}\label{BTC}

Let $(\X,\omega)$ be a classical field theory on a manifold $M$ with boundary $\del M=N$ such that $(\X,\omega)$ is topological normal to $N$, and $\scL_+\into \X^\del$ a regular embedded local boundary condition. Then there is a natural local $(-1)$-shifted homotopy Poisson structure $\bar\Pi$ on $\scL_+$.

Further, there is an isomorphism of sheaves of $C^\infty_N$ modules, and of sheaves of Lie algebras,
$$\calO_\loc(\X^\del)\cong \Gamma_{m\loc}(\scL_+, \widehat\Sym^\bullet \T_{\scL_+})$$
under which one has
$$S^\del_1 \cong Q_{\scl_+}\quad \in\  \Gamma_\loc(\scL_+, \T_{\scL_+}[1]) \qquad \text{and}\qquad  \sum_{j=2}^\infty S^\del_j\cong \bar\Pi \quad \in \ \Gamma_{m\loc}(\scL_+, \widehat \Sym^\bullet \T_{\scL_+})[1]$$
where
$$ S^\del=\sum_j S^\del_j \qquad\text{for}\qquad  S^\del_j \in \Gamma_{m\loc}(\scL_+, \widehat\Sym^\bullet \T_{\scL_+}) $$
is the decomposition in polyvector field degree of the image under the preceding isomorphism of the cohomological degree $1$ action functional $S^\del\in \calO_\loc(\X^\del)[1]$ describing the local $L_\infty$ structure on $\scg^\del$, and $Q_{\scl_+}$ is the cohomological vector field defining the local $L_\infty$ structure on $\scl_+=\T_o[-1]\scL_+$.
\end{theo}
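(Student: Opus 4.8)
The plan is to realize this as the local moduli problem analogue, with $n=0$, of the finite-dimensional construction of Subsection \ref{TM}. Note that the hypothesis that $(\X,\omega)$ is topological normal to $N$ has already done its work in producing the phase space $(\X^\del,\omega^\del)$, with $\omega^\del$ a genuinely strictly local $0$-shifted symplectic structure on $\X^\del$; once this data together with the regular embedded boundary condition $\scL_+\into\X^\del$ is in hand, the entire argument takes place on $N$ and mirrors the two Propositions of Subsection \ref{TM}. First I would fix a complementary Lagrangian subbundle $L_-\into L^\del$ as provided by the boundary condition, and use the isomorphism $\omega^\del|_{L_-}\colon L_-\xrightarrow{\cong} L_+^![-2]$ to obtain a splitting $L^\del\cong L_+\oplus L_+^![-2]$ of the bundle underlying $\scg^\del$. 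This is precisely the bundle underlying $T^*[0]\scL_+=|\LL_{\scL_+}|$, since $\T_o[-1]|\LL_{\scL_+}|=\scl_+\ltimes\scl_+^![-2]$.

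Next I would construct the claimed $C^\infty_N$-module isomorphism by organizing local action functionals according to their polynomial degree in the normal direction $L_-$. Using the splitting above, a local action functional $S^\del\in\calO_\loc(\X^\del)$ is a completed sum of pieces $S^\del_j$ that are polynomial of degree $j$ in jets of the $\scl_+^![-1]$-summand of $\scg^\del[1]$; applying the formal adjoint and the identification $\Dens_N\otimes_{\calD_N}\Diff(\Dens_N,\scl_+^!)\cong\scl_+^!$ exactly as in the proof of Proposition \ref{ldrd}, each such piece is identified with a multilocal $j$-polyvector field, giving
\[ S^\del_j\ \in\ \Gamma_{m\loc}(\scL_+,\Sym^j\T_{\scL_+}). \]
Taking the product over $j$ yields the desired isomorphism of sheaves of $C^\infty_N$-modules, which at the underlying graded level is the local analogue of the finite-dimensional identification $\calO(T^*[n]L)\cong\Gamma(L,\widehat\Sym^\bullet(\T_L[-n]))$ used in Subsection \ref{TM}.

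The substantive point, and the step I expect to be the main obstacle, is upgrading this to an isomorphism of sheaves of Lie algebras, matching the bracket $\{\cdot,\cdot\}_{\omega^\del}$ on $\calO_\loc(\X^\del)$ with the Schouten bracket of Proposition \ref{SB} on $\Gamma_{m\loc}(\scL_+,\widehat\Sym^\bullet\T_{\scL_+})$. Both brackets are built from the same underlying compositions of polydifferential operators---the former via $\Pi_{\omega^\del}\circ d_\dR$ together with the differentiation pairing of the Proposition preceding \ref{slla}, the latter by the usual algebraic formula---once $\omega^\del$ is used to identify the $L_-$-directions with the cotangent directions $L_+^![-2]$ of $\scL_+$. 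The work is to check that this identification intertwines the two brackets compatibly with the density-and-integration-by-parts structure of $\calO_\loc$ on one side and the Verdier-duality structure built into $\Gamma_{m\loc}$ on the other; this is the local, strict refinement of the $P_{n+1}$-algebra isomorphism invoked tacitly in Subsection \ref{TM}, and it is where locality must be tracked carefully.

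Granting the Lie-algebra isomorphism, I would conclude exactly as in Subsection \ref{TM}. Writing $S^\del(x)=\sum_n\omega^\del(x,l^\del_n(x^{\otimes n}))$, the piece $S^\del_0$ vanishes because $\scl_+$ is closed under the brackets $l^\del_n$ (by strictness of $\scL_+\into\X^\del$) and is Lagrangian, so $\omega^\del$ pairs its outputs against $\scl_+$ trivially; and interpreting the Lagrangian condition as the equality of $\omega^\del$ with its restriction $\omega^\del|_{L_-}$ identifies $S^\del_1$ with $\sum_n l^\del_n(x^{\otimes n})=Q_{\scl_+}$, the cohomological vector field defining the local $L_\infty$ structure on $\scl_+$. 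Setting $\bar\Pi=\sum_{j\geq 2}S^\del_j$, the $0$-shifted classical master equation $\{S^\del,S^\del\}_{\omega^\del}=0$ transports under the Lie-algebra isomorphism to
\[ [Q_{\scl_+}+\bar\Pi,\,Q_{\scl_+}+\bar\Pi]=0, \]
which is precisely the defining equation of a local $(-1)$-shifted homotopy Poisson structure on $\scL_+$. This simultaneously produces the natural $\bar\Pi$ and establishes the stated identifications $S^\del_1\cong Q_{\scl_+}$ and $\sum_{j\geq 2}S^\del_j\cong\bar\Pi$.
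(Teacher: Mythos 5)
Your proposal is correct and follows essentially the same route as the paper's proof: fixing the complementary Lagrangian $L_-$ and using $\omega^\del|_{L_-}$ to split $\scg^\del\cong\scl_+\oplus\scl_+^![-2]$, identifying $\calO_\loc(\X^\del)\cong\Gamma_{m\loc}(\scL_+,\widehat\Sym^\bullet\T_{\scL_+})$ by the formal-adjoint mechanism of Proposition \ref{ldrd}, matching the bracket induced by $\omega^\del$ with the Schouten bracket via $\Pi_{\omega^\del}\circ d_\dR$ on functionals linear in the $L_-$-directions, and then concluding by the finite-dimensional argument of Subsection \ref{TM}. The only slip is notational: to produce the $\scl_+[1]$-valued output of a multilocal polyvector field, the formal adjoint must be taken in a $\calJ(\scl_+^!)^\vee\cong\Diff(\scl_+^!,C^\infty_N)$ factor, giving $\Diff(\Dens_N,\scl_+)$ and the contraction $\Dens_N\otimes_{\calD_N}\Diff(\Dens_N,\scl_+)\cong\scl_+$, rather than the identification $\Dens_N\otimes_{\calD_N}\Diff(\Dens_N,\scl_+^!)\cong\scl_+^!$ that you wrote, which is the one relevant for the $1$-form-valued output of $d_\dR$ in Proposition \ref{ldrd}.
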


If this Poisson structure $\bar\Pi$ is strict, given by $\Pi\in \Gamma_{m\loc}(\scL_+, \Sym^2\T_{\scL_+})$, then $(\scL_+,\Pi)$ defines a degenerate classical field theory on $N$, which we call the $\emph{boundary theory}$ for $(\X,\omega)$ corresponding to the boundary condition $\scL_+\into \X^\del$. 

\begin{proof}

Let $\scg=\T_o[-1]\X,\scg^\del=\T_o[-1]\X^\del$ and $\scl_+=\T_o[-1]\scL_+$, and let $\scl_-$ denote the sheaf of sections of a fixed complementary Lagrangian subbundle $L_-$ to $L_+$ in $L^\del$. Then we have a direct sum decomposition as sheaves of $C^\infty_N$-modules $\scg^\del= \scl_+\oplus \scl_-$, and from the bundle isomorphism $\omega|_{L_-}:L_-\to L_+^![-2]$ we further obtain $\scg^\del \cong \scl_+ \oplus \scl_+^![-2]$ in the same category. This gives a further isomorphism of sheaves of $C^\infty_N$-modules
$$ \calO_\loc(\X^\del) \cong  \Dens_N\otimes_{\calD_N} \widehat \Sym^\bullet(\calJ(\scl_+)^\vee[-1])\otimes_{C^\infty_N}\widehat \Sym^\bullet( \calJ(\scl_+^!)^\vee[1])$$
Now, applying to one of the symmetric factors the natural isomorphism $\calJ(\scl_+^!)^\vee \cong \Diff(\scl_+^!,C^\infty_N)$ and taking the formal adjoint yields $\Diff(\Dens_N, \scl_+)$. Applying the isomorphism $\Dens_N\otimes_{\calD_N} \Diff(\Dens_N, \scl_+)\cong \scl_+$, and taking a symmetrizing sum over symmetric factors yields an isomorphism
$$ \Dens_M\otimes_{\calD_M} \widehat \Sym^k(\calJ(\scl_+)^\vee[-1])\otimes_{C^\infty_M}\widehat \Sym^j( \calJ(\scl_+^!)^\vee[1]) \cong \Poly(\scl_+[1]^{\otimes k}\otimes \scl_+^![-1]^{\otimes (j-1)},\scl_+[1])_{S_k\times S_j} $$
and thus an isomorphism of sheaves of $C^\infty_N$ modules $\calO_\loc(\X^\del)\cong \Gamma_{m\loc}(\scL_+, \widehat\Sym^\bullet \T_{\scL_+})$ as claimed. 

Next, we show this is also an isomorphism of sheaves of Lie algebras with respect to the ($0$-shifted) Poisson bracket on $\calO_\loc(\X^\del)$ determined by the strictly local $0$-shifted symplectic structure on $\X^\del$ and the Schouten bracket on $\Gamma_{m\loc}(\scL_+, \widehat\Sym^\bullet \T_{\scL_+})$. The former is determined by its value on the space of local functionals $\calO_\loc^1(\X^\del)$ which are linear in $\scL_-$, while the later is determined correspondingly by its value on the local vector fields $\Gamma_{m\loc}(\scL_+,\T_{\scL_+})$. Moreover, the isomorphism constructed above identifying $\calO_{\loc}^1(\X^\del)\cong \Gamma_\loc(\scL_+,\T_{\scL_+})$ agrees with the map
$$\Pi_{\omega^\del} \circ d_\dR|_{\calO_{\loc}^1(\X^\del)}: \calO_\loc^1(\X^\del) \to \Gamma_\loc(\scL_+,\T_{\scL_+})\subset \Gamma_\loc(\X^\del,\T_{\X^\del})\ ,$$
as taking the formal adjoint of the symmetric factors and contracting over $\calD_M$, as above, agrees with the definition of $d_\dR$ given in the proof of Proposition $\ref{ldrd}$ modulo the identification given by $\Pi_\omega$. Now, the Poisson bracket on $\calO_\loc(\X^\del)$ is defined by
$$\{f,g\}= \langle \Pi_{\omega^\del}(d_\dR f), d_\dR g\rangle_{\T_{\X^\del}}= \langle \Pi_{\omega^\del}(d_\dR f), \omega\circ \Pi_\omega( d_\dR g) \rangle_{\T_{\X^\del}},$$
which agrees with the contraction of differential operators defining the Schouten bracket of $\Pi_\omega( d_\dR f)$ and $\Pi_\omega( d_\dR g)$, as desired.

Having established the above isomorphism, the remainder of the proof proceeds precisely as in the finite dimensional case given in Subsection $\ref{TM}$.
\end{proof}

\subsection{Local Higher Poisson Centres and Universal Bulk Theories}

In this subsection we define the notion of the local higher Poisson centre of a local moduli problem equipped with a local $(-1)$-shifted homotopy Poisson structure. Further, we formulate the definition of the universal bulk theory corresponding to such an object.

Let $\scL$ be a local moduli problem over $N$ and $\bar\Pi \in \Gamma_{m\loc}(\scL, \widehat\Sym^\bullet\T_\scL)[1]$ a local $(-1)$-shifted homotopy Poisson structure on $\scL$. Further, let $T^*\scL:=|\LL_\scL|$ denote the local moduli problem given by the total space of the cotangent bundle to $\scL$.

Note that we have an isomorphism of precosheaves on $N$ of dg Lie algebras
$$\calO_\loc(T^*\scL) \cong \Gamma_{m\loc}(\scL, \widehat\Sym^\bullet\T_\scL)$$
where the Lie algebra structure on $\calO_\loc(T^*\scL)$ is inherited from the local $0$-shifted Poisson structure corresponding to the canonical strictly local $0$-shifted symplectic structure on $T^*\scL$, and the Lie algebra structure on $\Gamma_{m\loc}(\scL, \widehat\Sym^\bullet\T_\scL)$ is given by the Schouten bracket. We can thus interpret $\bar\Pi$ as a cohomological degree $1$ local action functional, with corresponding Hamiltonian vector field $Q_{\bar\Pi}\in \Gamma_\loc( T^*\scL,\T_{T^*\scL}[1])$ and we have:
\begin{prop}Let $\scL$ be local moduli problem over $N$ and $\bar\Pi \in \Gamma_{m\loc}(\scL, \widehat\Sym^\bullet\T_\scL)[1]$. Then
$$ Q_{ T^*\scL} + Q_{\bar\Pi} \quad \in \ \Gamma_\loc( T^*\scL,\T_{T^*\scL}[1])$$
is square zero if and only if $\bar\Pi$ defines a local $(n-1)$-shifted homotopy Poisson structure on $\scL$, where $Q_{ T^*\scL}$ is the local vector field defining the local $L_\infty$ structure on $\T_o[-1] T^*\scL$.
\end{prop}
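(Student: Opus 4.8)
The plan is to transport the square-zero condition on $Q_{T^*\scL}+Q_{\bar\Pi}$ through the isomorphism of sheaves of dg Lie algebras
$$\calO_\loc(T^*\scL) \cong \Gamma_{m\loc}(\scL, \widehat\Sym^\bullet\T_\scL)$$
recorded just above, and to recognise the resulting equation as the defining equation $[Q_\scl+\bar\Pi,\,Q_\scl+\bar\Pi]=0$ of a local $(-1)$-shifted homotopy Poisson structure on $\scL$, where $\scl=\T_o[-1]\scL$ and the bracket is the Schouten bracket of Proposition \ref{SB}. Everything should reduce to the finite-dimensional discussion of Subsection \ref{TM}, with the isomorphism above replacing the vector-space identification used there.

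First I would identify $Q_{T^*\scL}$ itself as a Hamiltonian vector field. Applying the $0$-shifted variant of Proposition \ref{slla} (as in Remark \ref{sslla}) to the local moduli problem $T^*\scL$ with its canonical strictly local $0$-shifted symplectic structure, the cohomological local vector field $Q_{T^*\scL}$ defining the $L_\infty$ structure on $\T_o[-1]T^*\scL=\scl\ltimes\scl^![-1]$ is the Hamiltonian vector field $Q_{S^{(0)}}$ of a cohomological degree $1$ local action functional $S^{(0)}\in\calO_\loc(T^*\scL)[1]$. Under the isomorphism above, I claim $S^{(0)}$ is exactly the symmetric-degree-$1$ polyvector field $Q_\scl\in\Gamma_\loc(\scL,\T_\scL[1])$: the brackets on the total space of the cotangent bundle are assembled from the brackets of the base together with the coadjoint-type module action on the fibre $\scl^![-1]$, and the Hamiltonian generating this data is precisely the function linear in the cotangent direction whose associated polyvector field is $Q_\scl$. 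By linearity of $f\mapsto Q_f=\Pi_\omega(d_\dR f)$ this gives $Q_{T^*\scL}+Q_{\bar\Pi}=Q_{S^{(0)}+\bar\Pi}$, with $S^{(0)}+\bar\Pi\cong Q_\scl+\bar\Pi$.

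Next I would invoke the $0$-shifted, $T^*\scL$ analogue of the fact established in Subsection \ref{laf} that $\Pi_\omega\circ d_\dR$ is a morphism of shifted Lie algebras, namely $[Q_f,Q_g]=Q_{\{f,g\}}$. Since $Q_{T^*\scL}+Q_{\bar\Pi}$ is odd, it is square zero precisely when its self-commutator vanishes, and the morphism property turns this into
$$Q_{\{S^{(0)}+\bar\Pi,\,S^{(0)}+\bar\Pi\}}=0.$$
The kernel of $f\mapsto\Pi_\omega(d_\dR f)$ consists only of the locally constant functions, which lie in polynomial degree $0$; as $\{S^{(0)}+\bar\Pi,\,S^{(0)}+\bar\Pi\}$ has strictly positive polynomial degree, the displayed equation is equivalent to $\{S^{(0)}+\bar\Pi,\,S^{(0)}+\bar\Pi\}=0$. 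Transporting this through the dg Lie algebra isomorphism — under which the $0$-shifted Poisson bracket becomes the Schouten bracket and $S^{(0)}+\bar\Pi$ becomes $Q_\scl+\bar\Pi$ — yields exactly $[Q_\scl+\bar\Pi,\,Q_\scl+\bar\Pi]=0$, the definition of a local $(-1)$-shifted homotopy Poisson structure on $\scL$.

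The main obstacle I anticipate is the identification in the first step: pinning down that $Q_{T^*\scL}$ corresponds under the isomorphism to the symmetric-degree-$1$ element $Q_\scl$ with the correct cohomological and shift conventions, and checking that the $L_\infty$ module structure on $\scl^![-1]$ producing $Q_{T^*\scL}$ is precisely the coadjoint action encoded by the Schouten bracket with $Q_\scl$ (equivalently, by Proposition \ref{dtla}). Granting this compatibility, and the already-asserted fact that the displayed isomorphism is one of sheaves of dg Lie algebras, the remaining steps are formal consequences of the Lie-algebra-morphism property of the Hamiltonian map and the triviality of its kernel in positive polynomial degree.
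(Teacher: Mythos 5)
Your proposal is correct and takes essentially the same route as the paper: the paper's own justification of this proposition is precisely the combination of the dg Lie algebra isomorphism $\calO_\loc(T^*\scL)\cong\Gamma_{m\loc}(\scL,\widehat\Sym^\bullet\T_\scL)$ stated immediately before it with the finite-dimensional toy-model argument of Subsection \ref{TM} (square-zero of the Hamiltonian vector field $\leftrightarrow$ classical master equation $\leftrightarrow$, under the polyvector-field identification, $[Q_\scl+\bar\Pi,Q_\scl+\bar\Pi]=0$), which is exactly what your Hamiltonian-generator and kernel analysis spells out. The only slip is cosmetic: by the paper's conventions on total spaces and the cotangent module, $\T_o[-1]T^*\scL=\scl\ltimes\scl^![-2]$ rather than $\scl\ltimes\scl^![-1]$.
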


In the case where $\bar\Pi$ does define a $(-1)$-shifted homotopy Poisson structure on $\scL$, the above local cohomological vector field defines a local $L_\infty$ algebra on $N$, and we let $T^*_{\bar\Pi}\scL$ denote the corresponding local moduli problem on $N$. As in the finite dimensional toy model, the canonical strictly local, $0$-shifted symplectic structure $\omega$ on $T^*\scL$, introduced in Example $\ref{csf}$, induces another $0$-shifted symplectic structure on $T^*_{\bar\Pi}\scL$. Further, we have a homotopically strict, strictly local map $\sigma_0:\scL\to T^*_{\bar\Pi}\scL$ of local moduli problems on $N$.

\begin{defn}Let $\scL$ be local moduli problem over $N$ and $\bar\Pi \in \Gamma_{m\loc}(\scL, \widehat\Sym^\bullet\T_\scL)[1]$ a local $(-1)$-shifted homotopy Poisson structure on $\scL$. The \emph{local higher Poisson centre} is the local moduli problem
$$\calZ_{\bar\Pi}(\scL)=T^*_{\bar\Pi}\scL$$
over $N$, equipped with its induced strictly local $0$-shifted symplectic structure.
\end{defn}

As discussed in Example $\ref{ubtp}$, for any local moduli problem $\Y$ on a manifold $N$, together with a strictly local, $0$-shifted symplectic form $\omega$ on $\Y$, there is a natural classical field theory on $N\times \R_{\geq 0}$ with underlying moduli problem $\underline{\mr{Map}}(\R_{\geq 0},\Y)$, yielding $(\Y,\omega)$ as its phase space. We now formulate the definition of the universal bulk theory:

\begin{defn} Let $\scL$ be local moduli problem over $N$ and $\bar\Pi \in \Gamma_{m\loc}(\scL, \widehat\Sym^\bullet \T_\scL)[1]$ a local $(n-1)$-shifted homotopy Poisson structure on $\scL$. We define the \emph{universal bulk theory} corresponding to $(\scL,\bar\Pi)$ to be
$$ \calU_{\bar\Pi}(\scL) : = \widehat{\underline{\mr{Map}}}_o (\R_{\geq 0},\calZ_{\bar\Pi}(\scL))$$
equipped with the $(-1)$-shifted symplectic structure $\omega_\calU=I_{\R_{\geq 0}}\boxtimes \tilde \omega$, where $\calZ_{\bar\Pi}(\scL)$ is the local higher Poisson centre of $(\scL,\Pi)$ and $\tilde \omega$ is the $0$-shifted symplectic form on $\calZ_{\bar\Pi}(\scL)$ described above.

Further, we define the \emph{canonical boundary condition} for $\calU_{\bar\Pi}(\scL)$ as the $0$-section map $\sigma_0:\scL\to \calZ_{\bar\Pi}(\scL)$ described above.
\end{defn}

\begin{prop}\label{prop:universal bulk of non-degenerate}
Let $\X$ be a non-degenerate classical field theory. The universal bulk theory of $\X$ has trivial $L_\infty$ algebra structure up to homotopy.
\end{prop}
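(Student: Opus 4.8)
The plan is to realise the non-degenerate theory $\X$ as a (strict) degenerate classical field theory whose $(-1)$-shifted Poisson structure is the non-degenerate bivector $\Pi_\omega$ obtained by inverting the symplectic form, and then to compute the universal bulk theory directly, reducing the claim to the contractibility observation of Example~\ref{ex:Poisson centre of symplectic}. Concretely, from the $(-1)$-shifted symplectic form $\omega\colon L\to L^![-3]$ I would take its inverse bundle map $\Pi_\omega\colon L^!\to L[3]$, which is a strict local $(-1)$-shifted Poisson structure, and form $\calU_{\Pi_\omega}(\X)=\widehat{\underline{\mr{Map}}}_o(\R_{\geq 0},\calZ_{\Pi_\omega}(\X))$.

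The heart of the argument is to show that the local higher Poisson centre $\calZ_{\Pi_\omega}(\X)=T^*_{\Pi_\omega}\X$ has acyclic tangent complex. Writing $\scg_\calZ:=\T_o[-1]\calZ_{\Pi_\omega}(\X)$, its underlying graded bundle is $\scg\ltimes\scg^![-2]$, equipped with the cohomological vector field $Q_{T^*\X}+Q_{\Pi_\omega}$. The linear part of $Q_{T^*\X}$ is block-diagonal (the semidirect product contributes $l_1^\scg$ on $\scg$ and the coadjoint differential on the module $\scg^![-2]$), while, since $\Pi_\omega$ is non-degenerate, exactly as in the formal computation of Example~\ref{ex:Poisson centre of symplectic} the contribution of $Q_{\Pi_\omega}$ to the linear part is the off-diagonal fibrewise isomorphism $\scg^![-2]\xrightarrow{\ \cong\ }\scg$ (up to shift) induced by $\Pi_\omega$. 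I would then apply Gaussian elimination: a complex whose differential contains an isomorphism between the two summands is contractible, with a contracting homotopy $h$ built from the inverse isomorphism $\omega$. The crucial feature for the local setting is that $h$ is an order-zero bundle map, hence strictly local, so acyclicity is a statement about the sheaf $\scg_\calZ$ on $N$ and requires no analytic input.

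Finally I would propagate this through the mapping-space construction. The tangent complex of $\calU_{\Pi_\omega}(\X)$ is $\Omega^\bullet_{\R_{\geq 0}}\otimes\scg_\calZ$ with differential $d_{\R_{\geq 0}}\otimes\id+\id\otimes l_1^\calZ$ and higher brackets $\mu^n\otimes l_n^\calZ$. Extending the homotopy as $\id\otimes h$ and checking with Koszul signs that the two cross terms involving $d_{\R_{\geq 0}}$ cancel, while $\id\otimes(l_1^\calZ h+h\,l_1^\calZ)=\id$, shows that the manifestly continuous operator $\id\otimes h$ contracts the full linear differential. Hence $\Omega^\bullet_{\R_{\geq 0}}\otimes\scg_\calZ$ is acyclic, and an acyclic local $L_\infty$ algebra has vanishing minimal model, so it is $L_\infty$-equivalent to the trivial $L_\infty$ algebra; this is precisely the assertion, consistent with the global picture of Remark~\ref{rmk:global interpretation}.

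The step I expect to be the main obstacle is the second one: verifying that the non-degenerate tensor $\Pi_\omega$ contributes an honest isomorphism at linear order, cleanly separated from the ambient differential $Q_{T^*\X}$ and the higher brackets, and confirming that the contracting homotopy respects strict locality so that contractibility holds for the local $L_\infty$ algebra $\scg_\calZ$ itself rather than merely for its sections over individual open sets. Once this local, strict refinement of Example~\ref{ex:Poisson centre of symplectic} is in hand, the remaining steps are formal.
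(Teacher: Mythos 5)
Your proposal is correct and follows essentially the same route as the paper: the paper's entire proof is a citation of Example \ref{ex:Poisson centre of symplectic} ``mutatis mutandis'', which is precisely your key observation that the linear part of $Q_{T^*\X}+Q_{\Pi_\omega}$ contains the isomorphism $\scg^![-2]\to\scg$ induced by the non-degenerate bivector, rendering $\calZ_{\Pi_\omega}(\X)$ contractible and hence its $L_\infty$ structure homotopically trivial. Your additional verifications --- that the contracting homotopy is an order-zero bundle map (hence strictly local) and that it extends across $\Omega^\bullet_{\R_{\geq 0}}\otimes(-)$ with the cross terms cancelling --- are exactly the details the paper's ``mutatis mutandis'' leaves to the reader.
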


\begin{proof}
See Example \ref{ex:Poisson centre of symplectic} \textit{Mutatis mutandis}.
\end{proof}

\begin{rmk}
While there could be several different boundary theories one could consider for a given bulk theory, there is a unique universal bulk theory for a given degenerate classical field theory. It is noteworthy that the universal bulk theory of the universal bulk theory is always trivial by the proposition. One might read this as a version of the fundamental equation $d^2=0$ of homological algebra in the context of perturbative field theory. Note, though, that this is not the case at a nonperturbative level (see Remark \ref{rmk:global interpretation}).
\end{rmk}

\newpage

\section{Examples of Classical Boundary Theories}\label{section:examples}

In this section, we present several examples of the formalism developed in the previous sections. In each example, we proceed by first motivating the spaces involved in terms of global derived algebraic geometry, then explaining the precise description of the field theory in terms of the formal local language introduced in the preceding sections.

\subsection{Topological Classical Mechanics}

Our first example of a classical field theory is topological classical mechanics, the classical limit of topological quantum mechanics. This example demonstrates the main ideas of the previous sections in the simplest possible setting.

In the simplest case, for $V$ a ($0$-shifted) symplectic vector space, topological classical mechanics valued in $V$ is a 1-dimensional AKSZ-type classical field theory described by $\underline{\mr{Map}}( M^1_{\mr{dR}}, V  )$. The phase space of this theory on $\del \R_{\geq 0 } = \{0\}$ is just the symplectic vector space $V$ and its boundary conditions are Lagrangian subspaces $L\into V$. More generally, this construction can be applied globally for any target symplectic variety $X$, and working perturbatively around the constant map to a point $x\in X$ is equivalent to the linear case with $V=T_x X$.

For the local, formal description of the classical field theory, let $\g=V[-1]$ be the shift of $V$ by $-1$, viewed as a trivial $L_\infty$ algebra; this should be thought of as the $(-1)$-shifted tangent complex $\T_x[-1]X$ to a point $x\in X$. Then $\scg=\Omega^\bullet_{M^1} \otimes \g$ defines a local $L_\infty$ algebra and the symplectic form on $V$ gives rise to a strictly local $(-1)$-shifted symplectic structure on $\scg$ via the usual AKSZ construction.

More generally, let $\g$ be an arbitrary $L_\infty$ algebra and $\omega:\g\to \g^*[-2]$ define a $0$-shifted symplectic structure on the formal moduli problem $X$ corresponding to $\g$. Then $\underline{\widehat{\mr{Map}}}_o( M^1_{\mr{dR}}, X )$ has a strictly local $(-1)$-shifted symplectic structure coming from the AKSZ construction, and the resulting classical field theory describes topological classical mechanics valued in a derived stack $\tilde X$, in perturbation theory around the constant map to a geometric point $x\in \tilde X$, or equivalently, topological classical mechanics valued in the formal neighbourhood of $x$ in $\tilde X$; this theory was studied in detail in this language in \cite{GradyGwilliam} \cite{GradyLiLi}.

The phase space of this theory on $\R_{\geq 0}$ is the $0$-shifted symplectic formal moduli problem $X$ corresponding to the $L_\infty$ algebra $\g$, and its regular embedded boundary conditions are just strict derived Lagrangians $L_+\into X$. Further, the boundary theory corresponding to a given boundary condition $L\to X$ is just the formal moduli problem $L$, equipped with its $(-1)$-shifted Poisson structure. Note that since $\del\R_{\geq 0}=\{0\}$ is $0$-dimensional, all of the locality conditions on the phase space and boundary conditions are trivial, and thus the boundary theory construction in this case reduces to the finite dimensional model presented in Subsection $\ref{TM}$.

\subsubsection{Topological Poisson Mechanics and the Poisson $\sigma$-Model}

We can generalize the above construction to consider topological classical mechanics valued in a formal moduli problem $X$ with an arbitrary $0$-shifted Poisson structure $\Pi\in \Gamma(X,\widehat\Sym^2( \T_L[-1]))[2]$, yielding a degenerate classical field theory of (Poisson) AKSZ type, with underlying local moduli problem $\underline{\widehat{\mr{Map}}}_o( M^1_{\mr{dR}}, X )$ and (strictly) local $(-1)$-shifted Poisson structure given by
$$ I_M^{-1}\boxtimes \Pi: (\wedge^\bullet T^*_M)^!\boxtimes \g^*[-1] \to \wedge^\bullet T^*_M\boxtimes \g[2] $$

The universal bulk theory constructed from topological Poisson mechanics is given by $\underline{\mr{Map}}( M^1_{\mr{dR}} \times \R_{\geq 0} , T_\pi^*[1] X )$, which is the well-studied Poisson $\sigma$-model \cite{Kontsevich} \cite{CattaneoFelder} \cite{CattaneoFelderII}.

\begin{rmk}
There are a few theories which look similar to the Poisson $\sigma$-model, and also admit descriptions in this formalism:
\begin{itemize}
\item (2-dimensional topological Yang--Mills theory) In the case $\g= \mr{Lie}(G)$ for a semi-simple group $G$, the classical field theory described by $\underline{\mr{Map}}( M^2_{\mr{dR}}, T^*[1] B\g  )$ is called 2-dimensional topological Yang--Mills theory. This theory is going to play a crucial role for our description of derived Hamiltonian reduction from the physical point of view. Of course, nothing prevents one from considering an $L_\infty$ algebra $\g$ here.
\item (B-model) For a Calabi--Yau manifold $X$, the classical field theory underlying the B-model is described by $\underline{\mr{Map}}( M^2_{\mr{dR}}, T^*[1] X  )$. \
\end{itemize}
\end{rmk}

\subsection{Chern--Simons Theory}

Chern--Simons theory with gauge group $G$, a semi-simple complex group, is a three dimensional AKSZ type field theory,  with space of solutions to the Euler--Lagrange equations on a 3-manifold $M$ given by $\mr{Loc}_G(M)$. The phase space of Chern--Simons on a manifold $M$ with boundary $\del M= \Sigma$ is $\mr{Loc}_G(\Sigma)$. Recall from Remark \ref{rmk:algebraic vs smooth} that in the perturbative setting, $\mr{Loc}_G(\Sigma)$ is indistinguishable from $\mr{Flat}_G(\Sigma)$, so we will work with $\mr{Flat}_G(\Sigma)$ as the phase space on $\Sigma$. In this way, the boundary conditions we consider will admit natural descriptions and their relationship with the Kapustin--Witten twists of $\calN=4$ super Yang-Mills and the geometric Langlands program will become clear. Indeed, we claim that Kapustin--Witten theory is the universal bulk theory of Chern--Simons theory; however, this discussion is deferred to a seperate subsection on Kapustin--Witten theory, which also admits other interesting boundary theories.

As the moduli space of solutions to the equations of motion on $U\subset M$ is $\mr{Loc}_G(U)$, the corresponding presheaf $\X$ of formal moduli problems on $M$ is given by $U \mapsto \X_U =  \mr{Loc}_G(U)^\wedge_{P|_U}$ for a fixed flat bundle $P$ on $M$, which by abuse of notation is regarded as a point in $\mr{Loc}_G(U)$ for each open set $U$; for example, the trivial flat bundle defines a pointing of $\mr{Loc}_G(U)$ for each $U\subset M$. Thus, Chern--Simons theory is described in the formal local language of the previous sections by the local moduli problem $\widehat{\underline{\mr{Map}}}_o(M_\dR,B\g)$ with corresponding local $L_\infty$ algebra
\[ \scg_{\text{CS}}= \Omega_M^\bullet \otimes \g \qquad \text{with} \quad l_1 = d_M \otimes \id_\g, \; l_2 = \wedge \otimes [\;,\;], \text{ and } l_n=0 \text{ for } n\geq 3.\]

The AKSZ type symplectic pairing $\omega \colon L\to L^![-3]$ or $\omega \colon \Omega^\bullet_M \otimes \g \to ( \Omega^\bullet_M \otimes \g )^\vee \otimes \mr{Dens}_M[-3]$ is given by $\alpha \mapsto (\beta \mapsto \alpha \wedge \beta)$ using the integration pairing on $\Omega^\bullet_M$ and a symmetric invariant non-degenerate pairing on $\g$. The action functional defining the above local $L_\infty$ structure via this symplectic pairing is the classical Chern--Simons action
\[S(A)=\int_M \mr{Tr} \left( \frac{1}{2} A\wedge dA+ \frac{1}{6} A\wedge[A, A  ] \right) \] for $A\in\Omega^\bullet(M)\otimes\g$.

The phase space of Chern--Simons theory on a manifold $M$ with boundary $\del M=\Sigma$ is given in the formal local language by the $0$-shifted symplectic local moduli problem $\widehat{\underline{\mr{Map}}}_o(\Sigma_\dR,B\g)$ with corresponding local $L_\infty$ algebra \[\scg^\Sigma = \Omega_\Sigma^\bullet \otimes \g \qquad \text{with} \quad l_1 = d_\Sigma \otimes \id_\g, \; l_2= \wedge \otimes [\;,\;],  \text{ and }l_n=0\text{ for }n\geq 3.\]

\begin{rmk}[Critical Chern--Simons theory]
We also introduce here critical Chern--Simons theory, a 3d classical field theory defined on three manifolds of the form $M^1\times \Sigma$ for a fixed Riemann surface $\Sigma$, which we claim to be the limit of Chern--Simons theory compatible with the complex structure on $\Sigma$. 

In order to understand the dependence on the complex structure clearly, as discussed in Remark \ref{rmk:algebraic vs smooth}, we treat Chern--Simons theory as topological classical mechanics with target $\mr{Flat}_G(\Sigma)$, using the adjunction of a mapping stack:
\[\underline{\mr{Map}}( M^1_{\mr{dR}} \times \Sigma_{\mr{DR}} , BG   ) =\underline{\mr{Map}} (M^1_{\mr{dR}} , \underline{\mr{Map}} (\Sigma_{\mr{DR}}, BG ) )= \underline{\mr{Map}} (M^1_{\mr{dR}} , \mr{Flat}_G(\Sigma) ).\]
From this perspective, we think of the complex level $k\in \mathbb{C}$ of Chern--Simons theory as corresponding to the twisting parameter of $\mr{Flat}_G(\Sigma) = T^*_{\mr{tw}} \mr{Bun}_G(\Sigma)$ and the critical level corresponding to no twisting; thus, we define critical Chern--Simons theory as topological classical mechanics with target $\mr{Higgs}_G(\Sigma) = T^* \mr{Bun}_G(\Sigma)$.

The formal local description of critical Chern--Simons theory is as the corresponding formal mapping space, defined by the local $L_\infty$ algebra
\[\scg_{ \text{cCS} } = \Omega_{M^1}^\bullet \otimes \Omega_{\Sigma}^{\bullet,\bullet} \otimes \g\qquad \text{with} \quad l_1 = d_{M^1} \otimes \id_{\Omega^{\bullet,\bullet}_\Sigma} \otimes \id_\g+ \id_{\Omega^\bullet_{M^1}}\otimes \delbar_\Sigma \otimes \id_\g \ , \; l_2 = \wedge \otimes \wedge \otimes [\;,\;] \]
and $l_n=0$ for $n\geq 3$.

The phase space of critical Chern--Simons on $\R_{\geq 0}\times \Sigma$ is given by $\mr{Flat}_G(\Sigma)$, which has formal local description given by the local $L_\infty$ algebra
\[\scg^\Sigma =  \Omega_{\Sigma}^{\bullet,\bullet} \otimes \g\qquad \text{with} \quad l_1 = \delbar_\Sigma \otimes \id_\g, \;l_2 =  \wedge \otimes [\;,\;],  \text{ and }l_n=0\text{ for }n\geq 3.\] Note that critical Chern--Simons theory is not topological along $\Sigma$.
\end{rmk}

\subsection{Chiral Wess--Zumino--Witten Model}

Recall that the phase space of Chern--Simons theory $\mr{Flat}_G(\Sigma) = T^*_{\mr{tw}} \mr{Bun}_G(\Sigma)$ is a twisted cotangent bundle of the moduli space $\mr{Bun}_G(\Sigma)$ of holomorphic $G$-bundles on $\Sigma$. Its twisted cotangent fibre $ T^*_{\mr{tw} , P} \mr{Bun}_G(\Sigma)$ at given bundle $P\in \mr{Bun}_G(\Sigma)$ is a regular embedded Lagrangian, and hence will define a classical boundary condition for Chern--Simons theory on manifolds $M$ with boundary $\del M =\Sigma$. Presently, we consider the cotangent fibre at the trivial bundle $ T^*_{\mr{tw} , \triv} \mr{Bun}_G(\Sigma)$.

The $(-1)$-shifted tangent complex to $\mr{Bun}_G(\Sigma)$ at the trivial bundle is $(\Omega^{0,\bullet}(\Sigma) \otimes \g , \delbar_\Sigma)$, and that of $ T^*_{\mr{tw} , \mr{triv}} \mr{Bun}_G(\Sigma)$ is given by $(\Omega^{1,\bullet}(\Sigma) \otimes \g[-1], \delbar_\Sigma)$. 

We can decompose the formal local description of the phase space of Chern--Simons in perturbation theory around the trivial bundle, as a sheaf of cochain complexes, as
\[ \scg^\Sigma= (\Omega^\bullet_\Sigma \otimes \g , d) = \xymatrix{
\Omega^{0,0}_\Sigma \otimes \g \ar[r]^-{\delbar_\Sigma}  \ar[dr]_-{\del_\Sigma}  &  \Omega^{0,1}_\Sigma \otimes \g  \ar[dr]^{\del_\Sigma} \\
& \Omega^{1,0}_\Sigma \otimes \g \ar[r]^-{\delbar_\Sigma}    &  \Omega^{1,1}_\Sigma \otimes \g,
}\]
and the local boundary condition corresponding to the Lagrangian $ T^*_{\mr{tw} , \triv} \mr{Bun}_G(\Sigma)$ is given by
\[ \scl = \xymatrix{ ( \Omega^{1,0}_\Sigma \otimes \g \ar[r]^-{\delbar_\Sigma}   &  \Omega^{1,1}_\Sigma \otimes \g).} \]
Thus, the induced boundary theory has underlying local $L_\infty$ algebra
\[ \scg_{\text{WZW}} =  (\Omega^{1,0}_\Sigma\otimes \g)[-1]\oplus(\Omega^{1,1}_\Sigma\otimes\g)[-2] \qquad \text{with} \quad l_1 = \delbar_\Sigma \otimes \id_\g \text{ and }l_n=0\text{ for }n\geq 2\]
Note that this is an abelian local $L_\infty$ algebra, which is the appropriate notion of free field theory in the degenerate context. We let $\scL_\text{WZW}$ denote the corresponding local moduli problem.

\begin{rmk}\label{cdr} On a holomorphic disc $\mathbb{D}_z\into \Sigma$ around any point $z\in \Sigma$, the space of local observables $\calO(\scL_\text{WZW})_{\mathbb{D}_z}$ on $\mathbb{D}_z$ is homotopic to the vector space underlying the classical affine Kac--Moody Poisson vertex algebra, by contracting the Dolbeault resolution.
\end{rmk}

Further, we have:

\begin{prop} 
The local $(-1)$-shifted Poisson structure on the local moduli problem $\scL_\text{WZW}$ over $\Sigma$, as determined by Theorem $\ref{BTC}$, is given by
$$\Pi = \del_\Sigma\otimes \id_\g + \wedge\otimes [\;,\;] $$
where
\begin{align*}
\del_\Sigma\otimes\id_\g : &\  \Omega^{0,\bullet}_\Sigma\otimes \g \to \Omega^{1,\bullet}_\Sigma\otimes \g & \ \in\ \Diff(\scl^!,\scl[2])_{S_2}[1]= \Gamma^0_{m\loc}(\scL,\Sym^2\T_\scL)[1]   \\
\wedge\otimes [\; , \; ]: &\ (\Omega^{1,\bullet}_\Sigma\otimes \g )\otimes  (\Omega^{0,\bullet}_\Sigma \otimes \g) \to \Omega^{1,\bullet}_\Sigma\otimes \g &\  \in\ \Poly( \scl\otimes \scl^!,\scl[1])_{S_2}[1] = \Gamma^1_{m\loc}(\scL,\Sym^2\T_\scL)[1]
\end{align*} under the isomorphism $\scl_- \cong \scl^![-2]$.
\end{prop}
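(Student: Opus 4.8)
The plan is to apply Theorem \ref{BTC} directly: it reduces the computation of $\bar\Pi$ to decomposing the boundary action functional $S^\del$ in polyvector-field degree, once a complementary Lagrangian $\scl_-$ has been fixed. Here $\scL_+$ arises as a boundary condition for the phase space of Chern--Simons on $\Sigma$, so $S^\del$ is the Chern--Simons action
$$ S^\del(A)=\int_\Sigma \mr{Tr}\left( \tfrac{1}{2} A\wedge d_\Sigma A + \tfrac{1}{6} A\wedge [A,A] \right),\qquad A\in \Omega^\bullet_\Sigma\otimes \g, $$
with $d_\Sigma=\del_\Sigma+\delbar_\Sigma$. I would take the complement to $\scl_+=\Omega^{1,\bullet}_\Sigma\otimes\g$ to be $\scl_-=\Omega^{0,\bullet}_\Sigma\otimes\g$; a quick check using the AKSZ pairing $\omega^\del(A,B)=\int_\Sigma\mr{Tr}(A\wedge B)$ confirms both are isotropic (wedges land in $\Omega^{2,\bullet}_\Sigma=0$, or in $\Omega^{0,\bullet}_\Sigma$, which is not of top degree) and pair nondegenerately, giving the required identification $\scl_-\cong \scl^![-2]$. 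By Theorem \ref{BTC}, writing $A=A_++A_-$ and decomposing $S^\del=\sum_j S^\del_j$ in homogeneity degree $j$ in $A_-$, the degree-one piece recovers $Q_{\scl_+}$ and $\bar\Pi=\sum_{j\geq2}S^\del_j$.

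The core of the argument is then a dimension count on the Riemann surface $\Sigma$, where $\Omega^{p,q}_\Sigma=0$ unless $p,q\in\{0,1\}$ and where $\int_\Sigma$ is nonzero only on $\Omega^{1,1}_\Sigma$. First I would record that $\del_\Sigma$ annihilates $\scl_+$ and raises the holomorphic degree on $\scl_-$, while $\delbar_\Sigma$ preserves each of $\scl_\pm$. Feeding this into the action: the pure $A_+$ part $S^\del_0$ vanishes since every term lies in $\Omega^{\geq 2,\bullet}_\Sigma=0$; the linear part $S^\del_1$ has its cubic contribution killed (as $[A_+,A_+]\in\Omega^{2,\bullet}_\Sigma=0$) and reduces, after integration by parts, to $\int_\Sigma\mr{Tr}(A_+\wedge\delbar_\Sigma A_-)$, which is exactly the cohomological vector field $\delbar_\Sigma$ defining the abelian $L_\infty$ structure $Q_{\scl_+}$ on $\scg_\text{WZW}$. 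For $j\geq 3$ the only candidate is the $A_-^{\geq3}$ part of the cubic term, which lies in $\Omega^{0,\bullet}_\Sigma$ and cannot reach top degree, so it vanishes; this is precisely what makes $\bar\Pi$ strict. The surviving quadratic part $S^\del_2$ then splits into a polynomial-degree-zero piece $\tfrac{1}{2}\int_\Sigma\mr{Tr}(A_-\wedge\del_\Sigma A_-)$ from the kinetic term and a polynomial-degree-one piece $\tfrac{1}{2}\int_\Sigma\mr{Tr}(A_+\wedge[A_-,A_-])$ from the cubic term.

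It remains to identify these two pieces with the stated polydifferential operators. Following the chain of isomorphisms in the proof of Theorem \ref{BTC} — taking formal adjoints of the jet-dual factors and contracting over $\calD_\Sigma$ against $\Dens_\Sigma$ — the degree-zero piece, being quadratic in the $\scl_-\cong\scl^![-2]$ variable with no $A_+$ dependence, corresponds to a constant-coefficient bivector, namely $\del_\Sigma\otimes\id_\g\in\Diff(\scl^!,\scl[2])_{S_2}[1]=\Gamma^0_{m\loc}(\scL,\Sym^2\T_\scL)[1]$; the degree-one piece, being linear in $A_+$ and quadratic in $A_-$, gives a $k=1$ bivector, namely $\wedge\otimes[\,,\,]\in\Poly(\scl\otimes\scl^!,\scl[1])_{S_2}[1]=\Gamma^1_{m\loc}(\scL,\Sym^2\T_\scL)[1]$. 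I expect the main obstacle to be exactly this last bookkeeping step: one must track the cohomological shifts, the symmetrization and $S_k\times S_j$ coinvariant normalizations, and the signs produced by integration by parts carefully enough to be sure that the kinetic term yields $\del_\Sigma$ itself rather than its formal adjoint (which agrees only up to a sign that must be absorbed into the $S_2$-symmetrization), and that the combinatorial factors $\tfrac12$ from expanding the cubic match the coinvariant normalization implicit in $\Gamma^1_{m\loc}$. The degree counting is robust; verifying that no spurious term survives and that the constants land exactly is where the care is needed.
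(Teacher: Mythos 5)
Your proposal is correct and takes essentially the same route as the paper's own proof: choose the complementary Lagrangian $\scl_-=\Omega^{0,\bullet}_\Sigma\otimes\g$, expand the Chern--Simons phase space action in powers of the $\scl_-$ component, use the bidegree count on the Riemann surface to kill $S^\del_0$ and all $S^\del_j$ for $j\geq 3$, and identify $S^\del_1$ with $\delbar_\Sigma$ (i.e.\ $Q_{\scl_+}$) and the two terms of $S^\del_2$ with the two summands of $\Pi$. The sign and symmetrization bookkeeping you flag at the end is precisely the step the paper compresses into the single sentence asserting that, under the isomorphism of Theorem \ref{BTC}, the terms of $S^\del_2$ correspond to the two summands of the bivector field.
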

\begin{proof}
Choosing the complementary Lagrangian $\scl_-=\Omega^{0,\bullet}_\Sigma\otimes \g$ to $\scl \into \scg^\Sigma$, we have the decomposition of the Chern--Simons action
\begin{align*}
S^\del(l_-+l_+) & = \int_M \mr{Tr} \left( \frac{1}{2} (l_-+l_+)\wedge (\del_\Sigma+\delbar_\Sigma) (l_-+l_+)+  \frac{1}{6} (l_-+l_+)\wedge[ (l_-+l_+),  (l_-+l_+)  ] \right) \\
& = \int_M \mr{Tr}\left( l_-\wedge \delbar_\Sigma l_+ + \frac{1}{2} l_-\wedge \del_\Sigma l_- +  \frac{1}{2} l_-\wedge [ l_-,l_+]   \right)
\end{align*}
for $ (l_-+l_+) \in(\Omega^{0,\bullet}_\Sigma\oplus \Omega^{1,\bullet}_\Sigma) \otimes\g$, so that
$$S^\del_1(l_-+l_+) =\int_M \mr{Tr}\left( l_-\wedge \delbar_\Sigma l_+\right)  \quad\text{and}\quad S^\del_2(l_-+l_+) = \frac{1}{2} \int_M \mr{Tr}\left( l_-\wedge \del_\Sigma l_- +  l_-\wedge [ l_-,l_+]   \right)$$
and $S^\del_j=0$ for $j\geq 3$.
Under the isomorphism given in Theorem $\ref{BTC}$, the terms in $S^\del_2$ correspond to the two summands of the bivector field $\Pi$ above, as claimed.
\end{proof}

\begin{rmk} This Poisson structure is a $(-1)$-shifted analogue of the standard Poisson structure on the classical affine Kac--Moody Poisson vertex algebra, related via Remark $\ref{cdr}$. As our techniques apply globally over the underlying manifold, the data we recover is the $(-1)$-shifted analogue of a Coisson algebra over $\Sigma$.

Moreover, the underlying local moduli problem corresponding to $\scg_\text{WZW}$ is free, and free BV quantization of this degenerate classical field theory yields the twisted factorization envelope construction of $\cite{CostelloGwilliam}$, for which the associated factorization algebra of quantum observables has been shown to recover the affine Kac--Moody vertex algebra. An explicit proof of this claim will be given in a forthcoming note.
\end{rmk}

\begin{rmk}
If we instead chose the Lagrangian $\scl =\xymatrix{ (\Omega^{0,1}_\Sigma \otimes \g \ar[r]^-{\del_\Sigma}    &  \Omega^{1,1}_\Sigma \otimes \g), }$ this would define the anti-chiral WZW model. By reducing Chern--Simons theory on $\Sigma\times[0,1]$ along the interval, with boundary conditions giving the chiral and anti-chiral WZW models at the ends of the interval, one obtains the full WZW model on $\Sigma$. A proof of this later statement, among other constructions coming from reduced boundary condition configurations, will appear in a forthcoming version of the present paper.
\end{rmk}

\subsection{Chiral Toda Theory}

Another well-studied Lagrangian of the phase space $\mr{Flat}_G(\Sigma)$ of Chern--Simons theory is the moduli space $\mr{Op}_G(\Sigma)$ of opers, which is the space of holomorphic $G$-connections with a fixed $B$-reduction, modulo $N$-valued gauge transformations \cite{BDopers}. An oper is in particular an irreducible, flat $G$-connection and hence the trivial flat connection, which is completely reducible, does not define a point in $\mr{Op}_G(\Sigma)$, so that we cannot detect the local boundary condition induced by the Lagrangian $\mr{Op}_G(\Sigma)$ in perturbation theory about the trivial connection. Instead, for an $\mathfrak{sl}_2$ embedding $(e,f,h)$ in $\g$, we use the element $f$ to define a holomorphic connection $d_f = d + f dz$ on the trivial bundle on $\Sigma$, which defines an oper, and study Chern--Simons theory on $M^1\times \Sigma$ in perturbation theory around the flat connection $\tilde d_f=\id\otimes d_f+d_{M^1}\otimes \id$ on the trivial bundle.

The formal local description of the phase space of Chern--Simons theory perturbed around the connection $\tilde d_f$ is given by the local $L_\infty$ algebra describing the $(-1)$-shifted tangent complex to $\mr{Loc}_G(\Sigma)$ at $d_f$:
$$\scg^\Sigma=\Omega^\bullet_\Sigma \otimes \g\qquad\text{with}\quad l_1=d_f\otimes \id_\g,\ l_2=\wedge\otimes[\;,\;],\ \text{ and }l_n=0\text{ for }n\geq 3$$
where $d_f=d+[f,-]dz$ denotes here the induced connection on the adjoint bundle, for which we use the same notation.

In terms of the chosen $\mathfrak{sl}_2$ triple, the phase space can be decomposed, as a sheaf of cochain complexes, as
\[ \scg^\Sigma = (\Omega^\bullet_\Sigma \otimes \g , d_f) = \xymatrix{(\Omega^{0,\bullet}_\Sigma \otimes \mathfrak{b}_- , \delbar_\Sigma) \ar[rr]^-{\del_\Sigma\otimes\pi_{\n_-} + [f,-]dz }  \ar[drr]^-{\del_\Sigma\otimes\pi_\mathfrak{h}}  &&  (\Omega^{1,\bullet}_\Sigma \otimes \n_-, \delbar_\Sigma)\\
(\Omega^{0,\bullet}_\Sigma \otimes \n , \delbar_\Sigma) \ar[rr]^-{\del_\Sigma\otimes \pi_\n + [f,-]dz}    &&  (\Omega^{1,\bullet}_\Sigma \otimes \mathfrak{b}, \delbar_\Sigma),
}\]
where we have split $\id_\g= (\pi_{\n_-} \oplus \pi_\mathfrak{h}\oplus \pi_\n)$. The $(-1)$-shifted tangent complex at $d_f$ to the moduli of opers gives rise to the local boundary condition
$$\scl =\left( \xymatrix{ (\Omega^{0,\bullet}_\Sigma \otimes \n , \delbar_\Sigma) \ar[rr]^-{\del_\Sigma\otimes \id_\n + [f,-]dz}    &&  (\Omega^{1,\bullet}_\Sigma \otimes \mathfrak{b}, \delbar_\Sigma) }\right)$$
embedded as in the decomposition above. Thus, the resulting local $L_\infty$ algebra for the boundary theory is given by
\[\scg_{\text{Toda}} = (\Omega^{0,\bullet}_\Sigma \otimes \n) \oplus (\Omega^{1,\bullet}_\Sigma \otimes \mathfrak{b} )[-1] \qquad \text{with} \quad l_1 = d_f \otimes \id_\g, \; l_2 = \wedge \otimes [\;,\;],  \text{ and }l_n=0\text{ for }n\geq 3 ,\]
where $[\;,\;]$ denotes the restriction of the bracket on $\g$ to $\n$. We let $\scL_\text{Toda}$ denote the corresponding local moduli problem.

\begin{rmk}\label{cwa}
Evaluating on a holomorphic disc $\mathbb{D}_z$ and contracting the Dolbeault resolutions, as in Remark \ref{cdr}, the local observables $\calO(\scL_\text{Toda})_{\mathbb{D}_z}$ are homotopic to the BRST complex for the classical Drinfeld--Sokolov reduction of the affine Kac--Moody Poisson vertex algebra by the local Lie algebra of holomorphic, $\n$-valued infinitesimal gauge transformations; this complex has cohomology isomorphic to the vector space underlying the classical affine $\calW$ algebra $\calW_\infty(\g)$; see for example $\cite{DSKV} \cite{Valeri}$.
\end{rmk}

We now calculate the $(-1)$-shifted local Poisson structure on $\scL_\text{Toda}$:
\begin{prop}
The local $(-1)$-shifted Poisson structure on the local moduli problem $\scL_\text{Toda}$ over $\Sigma$, as determined by Theorem $\ref{BTC}$, is given by
$$\Pi = \del_\Sigma\otimes \id_\mathfrak{h} +\wedge_{10}\otimes (\pi_\mathfrak{b}\circ [\; , \; ]_{\mathfrak{b} \mathfrak{b}_-}) +2\wedge_{01}\otimes (\pi_\mathfrak{b}\circ  [\; , \; ]_{\n \n_-})$$
where
\begin{align*}
\del_\Sigma\otimes \id_\mathfrak{h}: &\  \Omega^{0,\bullet}_\Sigma\otimes \mathfrak{h} \to \Omega^{1,\bullet}_\Sigma\otimes \mathfrak{h} & \ \in\ \Diff(\scl^!,\scl[2])_{S_2}[1]= \Gamma^0_{m\loc}(\scL,\Sym^2\T_\scL)[1]   \\
\wedge_{10}\otimes (\pi_\mathfrak{b}\circ [\; , \; ]_{\mathfrak{b} \mathfrak{b}_-}): &\ ( \Omega^{1,\bullet}_\Sigma\otimes \mathfrak{b} )\otimes  (\Omega^{0,\bullet}_\Sigma \otimes \mathfrak{b}_-)  \to \Omega^{1,\bullet}_\Sigma\otimes \mathfrak{b} &\  \in\ \Poly( \scl\otimes \scl^!,\scl[1])_{S_2}[1] = \Gamma^1_{m\loc}(\scL,\Sym^2\T_\scL)[1] \\
\wedge_{01}\otimes (\pi_\mathfrak{b}\circ  [\; , \; ]_{\n \n_-}) : &\  (\Omega^{0,\bullet}_\Sigma \otimes \n) \otimes( \Omega^{1,\bullet}_\Sigma\otimes \n_- )\to \Omega^{1,\bullet}_\Sigma\otimes \mathfrak{b} &\  \in\ \Poly( \scl\otimes \scl^!,\scl[1])_{S_2}[1] = \Gamma^1_{m\loc}(\scL,\Sym^2\T_\scL)[1]
\end{align*}
under the isomorphism $\scl_- \cong \scl^![-2]$.
\end{prop}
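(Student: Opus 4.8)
The plan is to apply Theorem~\ref{BTC} directly, following the template of the preceding computation for the chiral WZW model. The first step is to exhibit a complementary Lagrangian subbundle $\scl_-$ to $\scl\into\scg^\Sigma$. Reading off the four-term decomposition of $\scg^\Sigma$ displayed above, the natural choice is the ``top row''
$$\scl_- = (\Omega^{0,\bullet}_\Sigma\otimes\mathfrak{b}_-)\oplus(\Omega^{1,\bullet}_\Sigma\otimes\n_-),$$
complementary to the ``bottom row'' $\scl$. I would first verify that $\scl_-$ is an isotropic complement: since $\scl\oplus\scl_-=\scg^\Sigma$ as graded bundles (because $\n\oplus\mathfrak{b}_-=\g=\mathfrak{b}\oplus\n_-$), it suffices to check isotropy. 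This follows from the bidegree selection rule of the AKSZ pairing $\omega^\del=I_\Sigma\boxtimes\mr{Tr}$, which is nonzero only when it pairs a $(p,q)$-form with a $(1-p,1-q)$-form, together with the vanishing of $\mr{Tr}$ on $\mathfrak{b}_-\otimes\n_-$ and on the summands of equal holomorphic degree. This also pins down the identification $\scl_-\cong\scl^![-2]$ used in the statement.

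Next, I would substitute $A=l_-+l_+$, with $l_\pm\in\scl_\pm$, into the phase-space Chern--Simons action
$$S^\del(A)=\int_\Sigma\mr{Tr}\left(\tfrac12\, A\wedge d_f A+\tfrac16\, A\wedge[A,A]\right),$$
and collect terms by polynomial degree in $l_-$, which is exactly the polyvector-field degree under the isomorphism of Theorem~\ref{BTC}. The degree-$0$ part $S^\del_0$ vanishes because $\scl$ is an isotropic subbundle closed under both $d_f$ and $[\,,\,]$, exactly as in Subsection~\ref{TM}. The parts of degree $\geq 3$ vanish as well: the only contribution with three factors of $l_-$ is $\tfrac16\,\mr{Tr}( l_-\wedge[l_-,l_-])$, and since $\scl_-$ is itself closed under the bracket (using $[\mathfrak{b}_-,\mathfrak{b}_-]\subset\mathfrak{b}_-$, $[\mathfrak{b}_-,\n_-]\subset\n_-$, and the absence of holomorphic $2$-forms on the curve) this pairs $l_-$ against an element of the isotropic $\scl_-$ and hence vanishes. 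Thus $\bar\Pi=\Pi=S^\del_2$ is a genuine bivector and the resulting homotopy Poisson structure is strict, as claimed; for a sanity check one sees that $S^\del_1$ reproduces the Toda differential $d_f\otimes\id_\g+\wedge\otimes[\,,\,]$ as forced by Theorem~\ref{BTC}.

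It then remains to evaluate $S^\del_2$, which splits into a kinetic piece $\tfrac12\,\mr{Tr}(l_-\wedge d_f l_-)$ and a cubic piece $\tfrac12\,\mr{Tr}(l_-\wedge[l_-,l_+])$. For the kinetic piece I would decompose $l_-=l_-^{(0)}+l_-^{(1)}$ along $\Omega^{0,\bullet}_\Sigma\otimes\mathfrak{b}_-$ and $\Omega^{1,\bullet}_\Sigma\otimes\n_-$; the bidegree rule kills every pairing except $\mr{Tr}(l_-^{(0)}\wedge\del_\Sigma l_-^{(0)})$, and restricting $\mr{Tr}$ to $\mathfrak{b}_-$ leaves only its nondegenerate $\mathfrak{h}$--$\mathfrak{h}$ part, producing the constant-coefficient bivector $\del_\Sigma\otimes\id_\mathfrak{h}$ (the $[f,-]dz$ term drops out since $[f,\mathfrak{b}_-]\subset\n_-$ pairs trivially with $\mathfrak{b}_-$). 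For the cubic piece I would expand in all four components and retain only the monomials of total holomorphic form-degree one; the surviving $\g$-contractions, computed by invariance of $\mr{Tr}$, are precisely $\mr{Tr}(\mathfrak{b}_-\cdot[\mathfrak{b}_-,\mathfrak{b}])$ with position variable $l_+^{(1)}\in\Omega^{1,\bullet}_\Sigma\otimes\mathfrak{b}$ and $\mr{Tr}(\mathfrak{b}_-\cdot[\n_-,\n])$ with position variable $l_+^{(0)}\in\Omega^{0,\bullet}_\Sigma\otimes\n$, which under $\scl_-\cong\scl^![-2]$ become the two linear-coefficient terms $\wedge_{10}\otimes(\pi_\mathfrak{b}\circ[\,,\,]_{\mathfrak{b}\mathfrak{b}_-})$ and $\wedge_{01}\otimes(\pi_\mathfrak{b}\circ[\,,\,]_{\n\n_-})$.

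I expect the main obstacle to be the bookkeeping of numerical coefficients, and in particular the factor of $2$ distinguishing the two cubic terms. This discrepancy is combinatorial in origin: for the $[\,,\,]_{\mathfrak{b}\mathfrak{b}_-}$ term both contracted $l_-$-legs lie in the single summand $\Omega^{0,\bullet}_\Sigma\otimes\mathfrak{b}_-$, so the monomial appears once, whereas for the $[\,,\,]_{\n\n_-}$ term the two legs lie in the distinct summands $\Omega^{0,\bullet}_\Sigma\otimes\mathfrak{b}_-$ and $\Omega^{1,\bullet}_\Sigma\otimes\n_-$ and hence occur with multiplicity two, from the two ways of distributing them between the outer and inner slots of $\mr{Tr}(l_-\wedge[l_-,l_+])$, the two contributions being equal by cyclic invariance of $\mr{Tr}$. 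Tracking these symmetry factors, together with the $\tfrac12$ prefactors and the passage from a quadratic function of $l_-$ to a symmetric bivector on $\scL_\text{Toda}$, is the delicate part; the overall normalizations and signs are best fixed by comparison with the WZW computation, whose structure this argument parallels.
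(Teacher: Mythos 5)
Your proposal is correct and follows the paper's own proof essentially verbatim: the paper chooses the same complementary Lagrangian $\scl_-=(\Omega^{0,\bullet}_\Sigma\otimes\mathfrak{b}_-)\oplus(\Omega^{1,\bullet}_\Sigma\otimes\n_-)$, expands the phase-space Chern--Simons action in $l_-+l_+$, and reads off the three terms of $S^\del_2$, with the relative factor of $2$ arising from exactly the cross-term multiplicity you describe (the paper's expansion has $\tfrac12\,b_-\wedge[b_-,b_+]$ versus $b_-\wedge[n_-,n_+]$). The one minor slip is attributing the vanishing of the kinetic cross terms such as $\mr{Tr}(b_-\wedge\delbar_\Sigma n_-)$ to the form-bidegree rule; these are instead killed by the orthogonality $\mr{Tr}(\mathfrak{b}_-\otimes\n_-)=0$, a fact you correctly invoke elsewhere, so the argument is unaffected.
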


\begin{proof}
Choosing the complementary Lagrangian subbundle with sheaf of sections $\scl_-= (\Omega^{0,\bullet}_\Sigma \otimes \mathfrak{b}_- \oplus \Omega^{1,\bullet}_\Sigma \otimes \n_-)$ to $\scl\into \scg$, we have the decomposition of the phase space action
\begin{align*}
S^\del(b_-+n_-+n_++b_+ ) & = \frac{1}{2} \int_M \mr{Tr} \left(  (b_-+n_-+n_++b_+)\wedge d_f (b_-+n_-+n_++b_+)\right)  \\
& \qquad\qquad + \frac{1}{6} \int_M \mr{Tr}\left( (b_-+n_-+n_++b_+)\wedge[b_-+n_-+n_++b_+,b_-+n_-+n_++b_+]  \right) \\
& = \int_M \mr{Tr}\left(  b_-\wedge (\del_\Sigma +[f,-]dz) n_+ + n_-\wedge\delbar_\Sigma n_+ + \frac{1}{2} b_-\wedge \del_\Sigma b_-  \right)  \\
&\qquad\qquad + \int_M \mr{Tr}\left( b_-\wedge[n_+,b_+] + \frac{1}{2} n_-\wedge[n_+,n_+] + \frac{1}{2} b_-\wedge[b_-,b_+] + b_-\wedge[n_-,n_+] \right)
\end{align*}
for $(b_-+n_-+n_++b_+) \in (\Omega^{0,\bullet}_\Sigma \otimes \mathfrak{b}_-)\oplus (\Omega^{1,\bullet}_\Sigma \otimes \n_-)\oplus (\Omega^{0,\bullet}_\Sigma \otimes \n)\oplus (\Omega^{1,\bullet}_\Sigma \otimes \mathfrak{b})$, so that
\begin{align*}
S^\del_1(b_-+n_-+n_++b_+) & = \int_M \mr{Tr}\left(  b_-\wedge (  (\del_\Sigma +[f,-]dz) n_+ + [n_+,b_+] ) + n_-\wedge (\delbar_\Sigma n_+  + \frac{1}{2} [n_+,n_+])  \right) \\
S^\del_2(b_-+n_-+n_++b_+) & = \int_M \mr{Tr}\left( \frac{1}{2}  b_-\wedge \del_\Sigma b_- +  \frac{1}{2} b_-\wedge[b_-,b_+] + b_-\wedge[n_-,n_+] \right)
\end{align*} 
Under the isomorphism given in Theorem $\ref{BTC}$, the terms in $S^\del_2$ correspond to the three summands of the bivector field $\Pi$ above, as claimed.
\end{proof}

\begin{rmk} This Poisson structure is the $(-1)$--shifted analogue of the Poisson vertex algebra structure on the classical $\calW$ algebra $\calW_\infty(\g)$, as described in Lemma 2(b) of $\cite{Valeri}$, under the equivalence of Remark $\ref{cwa}$.
\end{rmk}

\begin{rmk}[Critical Toda Theory] For the critical level, one has to consider the Higgs version of opers $\mr{Op}_G^{ \mr{Higgs} }(\Sigma)$ inside $\mr{Higgs}_G(\Sigma)$. Accordingly, critical Toda theory is described by \[\scg_{\text{cToda}} = (\Omega^{0,\bullet}_\Sigma \otimes \n) \oplus (\Omega^{1,\bullet}_\Sigma \otimes \mathfrak{b} )[-1] \qquad \text{with} \quad l_1 = [f,-]dz \otimes \id_\g, \; l_2 = \wedge \otimes [\;,\;],  \text{ and }l_n=0\text{ for }n\geq 3 .\]
One can calculate the induced $P_0$ structure precisely as in the non-critical case.
\end{rmk}

\subsection{Kapustin--Witten Theory}

In this subsection, we study the Kapustin--Witten $\mathbb{P}^1$ of twists of $\calN=4$ supersymmetric Yang--Mills theory \cite{KapustinWitten}. In the work of the second author with C. Elliott \cite{EY}, rigorous mathematical descriptions of the A- and B-twists as classical field theories are provided. Indeed, the same method can be applied to identify all of the $\mathbb{P}^1$ of twists.

To understand the application of this field theory to the geometric Langlands program, we consider its compactification along a fixed smooth proper curve $\Sigma$. In more mathematical language, we consider the moduli space of solutions to the equations of motion for the case where the spacetime 4-manifold $X$ is of the form $X= C \times \Sigma$ with a compact Riemann surface $\Sigma$. Except for the A-twist, they are all described by $\underline{\mr{Map}}( C_{\mr{dR}} , T_\pi^*[1] \mr{Flat}_G(\Sigma)  )$, where $\pi$ is the Poisson structure of $\mr{Flat}_G(\Sigma)$ and $T_\pi^*[1] \mr{Flat}_G(\Sigma) $ is the twisted cotangent bundle, with twist given by $\pi$. The choice of twist by the Poisson structure $\pi$ of the cotangent bundle corresponds to the parameter in the family of topological twists: in particular, $\pi=0$ corresponds to the B-twist, and the A-twist is described by $\underline{\mr{Map}}( C_{\mr{dR}} , T_\Pi^*[1] \mr{Higgs}_G(\Sigma)  )$, where $\Pi$ is the Poisson structure of $\mr{Higgs}_G(\Sigma)$.

To compare with Chern--Simons theory, recall the description of Chern--Simons in terms of topological classical mechanics after compactification along $\Sigma$, that is, $\underline{\mr{Map}}( M^1_{\mr{dR}}, \mr{Flat}_G(\Sigma)  )$. Its universal bulk theory is $\underline{\mr{Map}}( M^1_{\mr{dR}} \times \R_{\geq 0} , T_\pi^*[1] \mr{Flat}_G(\Sigma)  )$, where $\pi$ is the Poisson structure of $\mr{Flat}_G(\Sigma)$. As the level $c$ of Chern--Simons theory determines the Poisson structure of $\mr{Flat}_G(\Sigma)$, we know that the universal bulk theory of Chern--Simons theory is the Kapustin--Witten twists. Moreover, the limit $\hbar \to 0$ corresponds to the limit $c \to \infty$, and hence the classical limit of Chern--Simons theory gives the B-twist. On the other hand, critical Chern--Simons theory $\underline{\mr{Map}}( M^1_{\mr{dR}}, \mr{Higgs}_G(\Sigma)  )$ has the A-twist $\underline{\mr{Map}}( M^1_{\mr{dR}} \times \R_{\geq 0} , T_\Pi^*[1] \mr{Higgs}_G(\Sigma)  )$ as the universal bulk theory.

\begin{rmk}
One might wonder if $\underline{\mr{Map}}( C_{\mr{dR}} , T ^*[1] \mr{Higgs}_G(\Sigma)  )$ also has a natural interpretation in terms of quantum field theory. Indeed, it is a version of the Kapustin twist of the 4d $\calN=4$ theory as first introduced by Kapustin in the context of $\calN=2$ theory \cite{KapustinHolo}. One has the analogy 
\[ \text{B-twist : Chern--Simons = Kapustin twist : critical Chern--Simons}.\]
\end{rmk}

Finally, let us describe the Kapustin--Witten theory in a local formal setting. For the B-twist $\underline{\mr{Map}}(X_{\mr{dR}} , T^*[3]BG )$, one has \[\scg_B = \Omega^\bullet_X \otimes (\g \ltimes \g[1]),\] where the $L_\infty$ structure on $\g \ltimes \g[1]$ is best understood from the identification $\g \ltimes \g^*[1]= \T_o[-1] (T^*[3]BG )$. For a generic twist $\underline{\mr{Map}}(X_{\mr{dR}} , T_\pi^*[3]BG )$, one has \[\scg_{\text{generic} } = \left(\xymatrix{
 \Omega^\bullet_X \otimes \g[1]  \ar[r]^-{\id}  &  \Omega^\bullet_X \otimes \g
}\right),\] where the Poisson vector becomes the identity map understood as the differential. In particular, the theory is perturbatively trivial. Similarly, for the Kapustin twist $\underline{\mr{Map}}( C_{\mr{dR}} , T ^*[1] \mr{Higgs}_G(\Sigma)  )$, one has \[\scg_{\text{Kap} } =  (\Omega^\bullet_C, d_C) \otimes (\Omega^{\bullet,\bullet}_\Sigma, \delbar_\Sigma   ) \otimes (\g \ltimes \g[1]),\] 
and for the A-twist $\underline{\mr{Map}}( C_{\mr{dR}} , T_\Pi ^*[1] \mr{Higgs}_G(\Sigma)  )$, one has \[\scg_{A } = \left(\xymatrix{
(\Omega^\bullet_C, d_C) \otimes (\Omega^{\bullet,\bullet}_\Sigma, \delbar_\Sigma   )  \otimes \g[1]  \ar[r]^-{\id}  &  (\Omega^\bullet_C, d_C) \otimes (\Omega^{\bullet,\bullet}_\Sigma, \delbar_\Sigma   ) \otimes \g
}\right),\] which again is perturbatively trivial.

\subsection{Whittaker theory}

We introduce another boundary theory of Kapustin--Witten theory. Again after compactification along $\Sigma$, Kapustin--Witten theory is described by $\underline{\mr{Map}}( C_{\mr{dR}} , T_\pi^*[1] \mr{Flat}_G(\Sigma)  )$. We consider a 3-dimensional theory $\underline{\mr{Map}}( M^1_{\mr{dR}},N_\pi^*[1] (\mr{Op}_G(\Sigma)/ \mr{Flat}_G(\Sigma)) )$, where $N_\pi^*[1] (\mr{Op}_G(\Sigma)/ \mr{Flat}_G(\Sigma))$ is the shifted conormal bundle of $\mr{Op}_G(\Sigma)$ in $T^*_\pi[1]\mr{Flat}_G(\Sigma)$. We call this theory a \emph{Whittaker theory}.

\begin{rmk}
In the work of Gaiotto--Witten \cite{GaiottoWittenKnot}, S-dual boundary condition of Chern--Simons theory is described, under the name of Nahm pole boundary conditions. Moreover, in the work of Gaitsgory \cite{GaitsgoryWhittaker}, the equivalence of factorization categories between representations of quantum group and the twisted Whittaker category was proved. We claim that it is the mathematical manifestation of the S-duality: that is, Chern--Simons theory with gauge group $G$ and level $c$ is S-dual to Whittaker theory with gauge group $\check{G}$ and level $\check{c}$, under which the duality between the categories of line defects realizes the equivalence of Gaitsgory. This is the reason why we named Whittaker theory as such. We will further investigate this theory in future work.
\end{rmk}

In the local formal setting, Whittaker theory for the B-twist, or level $c=\infty$, described by the mapping stack $\underline{\mr{Map}}( M^1_{\mr{dR}},N^*[1] (\mr{Op}_G(\Sigma)/ \mr{Flat}_G(\Sigma)) )$ becomes \[\scg_{c=\infty} = \Omega^1_{M^1} \otimes (\scg_{\text{Toda} } \ltimes \scg_{ \text{Toda} }[1] ),\] because $N^*[1] (\mr{Op}_G(\Sigma)/ \mr{Flat}_G(\Sigma)) $ can be identified with $T[1]  \mr{Op}_G(\Sigma)$ under the identification $T^*[1]\mr{Flat}_G(\Sigma) \simeq T[1] \mr{Flat}_G(\Sigma)$. For a generic level $\underline{\mr{Map}}( M^1_{\mr{dR}},N_\pi^*[1] (\mr{Op}_G(\Sigma)/ \mr{Flat}_G(\Sigma)) )$, one has \[\scg_{\text{generic} } = \left(\xymatrix{
\Omega^1_{M^1} \otimes \scg_{\text{Toda}  } [1]  \ar[r]^-{\id}  &  \Omega^1_{M^1} \otimes \scg_{\text{Toda}  }
}\right),\] which is perturbatively trivial. Also, critical Whittaker theory is described by $\underline{\mr{Map}}( M^1_{\mr{dR}},N_\Pi^*[1] (\mr{Op}^{ \mr{Higgs} }_G(\Sigma)/ \mr{Higgs}_G(\Sigma)) )$. Then the local formal description is  \[\scg_{\text{cWhit} } = \left(\xymatrix{
\Omega^1_{M^1} \otimes \scg_{\text{cToda}  } [1]  \ar[r]^-{\id}  &  \Omega^1_{M^1} \otimes \scg_{\text{cToda}  }
}\right).\]
which is again perturbatively trivial.

\newpage
\begin{appendix}

\section{Derived Deformation Theory}

Let me start with stating the fundamental theorem of derived deformation theory.

\begin{theo}
There is an equivalence of $\infty$-categories between the category Moduli of formal (pointed) moduli problems and the category dgLa of differential graded Lie algebras.
\end{theo}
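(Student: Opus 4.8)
The plan is to follow the strategy of Lurie and Pridham \cite{LurieDeformation} \cite{Pridham}, presenting both categories through Koszul duality and matching them on tangent complexes. Recall that a formal moduli problem is a functor $X$ from the $\infty$-category of Artinian augmented commutative dg $k$-algebras to spaces, satisfying the normalization $X(k)\simeq *$ and the cohesiveness (Schlessinger-type) condition that $X$ carries pullback squares along small extensions to pullback squares of spaces. First I would construct the comparison functor $\Psi$ from $\mr{dgLa}$ to $\mr{Moduli}$ by the Maurer--Cartan construction: for a dg Lie algebra $\g$ and an Artinian algebra $A$ with maximal ideal $\mm_A$, set
$$\Psi(\g)(A)=\mr{MC}(\g\otimes \mm_A),$$
the Maurer--Cartan space of the nilpotent dg Lie algebra $\g\otimes \mm_A$; equivalently $\Psi(\g)(A)\simeq \mr{Map}_{\mr{dgLa}}(\mathfrak{D}(A),\g)$, where $\mathfrak{D}(A)$ is the Koszul-dual dg Lie algebra of $A$, computed by the cobar construction. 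Working over a field of characteristic zero is essential here, both for the equivalence between $\mr{dgLa}$ and $L_\infty$ algebras and for the operadic Koszul duality used below.

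The next step is to check that $\Psi(\g)$ is genuinely a formal moduli problem and to compute its tangent complex. Normalization is immediate since $\mm_k=0$. Cohesiveness follows because a small extension has square-zero kernel, so the associated Maurer--Cartan spaces sit in a fibre sequence and $\Psi(\g)$ carries the relevant pullbacks to pullbacks. Evaluating on the square-zero extensions $k\oplus k[n]$ then identifies the tangent complex as $\T_o\Psi(\g)\simeq \g[1]$, so that the functor $X\mapsto \T_o[-1]X$, suitably enhanced with its induced homotopy Lie structure, is the candidate inverse to $\Psi$; this is exactly the correspondence $\g=\T_o[-1]X$, $X=B\g$ recorded in the conventions.

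Finally I would prove that $\Psi$ is an equivalence by a d\'evissage along the tower of small extensions building up an arbitrary Artinian algebra from $k$. Since both $\Psi$ and the tangent complex functor are compatible with this filtration, induction reduces fully faithfulness and essential surjectivity to the case of trivial square-zero extensions $k\oplus V$, where every mapping space is computed directly by the tangent complex. The hard part will be the reconstruction step: showing that a formal moduli problem is determined, coherently and functorially, by its tangent complex together with the induced $L_\infty$ structure. This is precisely where the cohesiveness condition and the Koszul duality between the commutative and Lie operads enter, and it is the technical core of the arguments of \cite{Hinich} \cite{Getzler} \cite{Pridham} \cite{LurieDeformation}; I would isolate it as the main obstacle and carry it out by the inductive analysis of the Postnikov towers of moduli problems against those of dg Lie algebras.
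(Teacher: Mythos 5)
Your proposal takes essentially the same approach as the paper: the appendix presents exactly the same pair of functors, the Maurer--Cartan functor $B=\mr{MC}$ and the shifted tangent complex $\Omega=\TT[-1]$, as the mutually inverse equivalences, and your d\'evissage along small extensions is a faithful outline of the Lurie--Pridham argument that the paper points to. Be aware, however, that the paper does not actually prove this theorem---it explicitly defers to \cite{Pridham}, \cite{LurieDeformation}, \cite{Hinich}, \cite{Getzler}, aiming ``only to provide ideas to orient the readers''---and your write-up likewise leaves the technical core (the coherent, functorial reconstruction of a formal moduli problem from its tangent $L_\infty$ algebra) to those same references, so both texts are proof outlines rather than self-contained proofs.
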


The ideas of this theorem have been developed by many giants in the latter half of 20th century mathematics, including Quillen, Deligne, Drinfeld, and Feigin. Now a version of the theorem in a more general context is proved \cite{Pridham} \cite{LurieDeformation}. The articles \cite{KontsevichSoibelman} \cite{Manetti} as well as the appendix of the second volume of \cite{CostelloGwilliam} also have exposition of the result with many examples, so we only aim to provide ideas to orient the readers, referring to them for details.

Let dSt be the $\infty$-category of derived stacks. Then the situation of interest can be summarized as follows: \[\xymatrix{
\text{dSt} \ar[d]^{ (-)^\wedge_x} \ar[drr]^-{ \TT_x[-1]  } & \\
\text{Moduli}\ar@<0.5ex>[rr]^-{\Omega= \TT[-1]} &  & \text{dgLa}.\ar@<0.5ex>[ll]^-{B= \mr{MC} }
}\]

One needs to understand the following three points.
\begin{itemize}
\item If $X$ is a derived stack, then the formal neighbourhood of a point $x\in X$ is encoded by a formal moduli problem that we denote by $X^\wedge_x$.
\item There exists a functor $\Omega= \TT[-1] \colon \text{Moduli}\to \text{dgLa}$.
\item There exists a functor $B= \mr{MC} \colon \mr{dgLa}\to \text{Moduli}$.
\end{itemize}

In the context of field theory, we think of $X$ as the moduli space of solutions to the equations of motion and $X^\wedge_x$ as encoding the perturbative information around the fixed solution $x$. Then the fundamental theorem allows one to encode any perturbative information of field theory in terms of linear algebraic data.

\begin{itemize}
\item \textbf{Formal neighbourhood of a point}
\end{itemize}

Suppose we want to understand a scheme locally, namely, around a fixed point. In differential geometry, there is no way to choose an open set in a canonical way, but in algebraic geometry, we have a canonical neighbourhood around a point, which in terms of functor of points is realized by local Artinian algebras. Namely, local geometry of $X$ around $x \in X(k)$ can be completely understood by maps of the form \[\xymatrix{
\mr{Spec}(k) \ar[r]^-{x} \ar[d] & X\\
\mr{Spec}(R) \ar@{.>}[ur]
}\] for a local Artinian algebra $R \in \mr{Art}$.

In our context of derived stacks, we need to work with the category $\mr{dgArt}^{\leq 0}$ of differential graded local Artinian algebras concentrated in non-positive degrees. Then a formal moduli problem is in particular a functor $\mr{dgArt}^{\leq 0}\to \mr{sSet}$. By construction, one would have $X^\wedge_x \in \text{Moduli}$.

\begin{itemize}
\item \textbf{Shifted tangent complex}
\end{itemize}

Here is another important main character for us.

\begin{defn}
A \textit{differential graded Lie algebra} is a Lie algebra object in the category of cochain complexes, that is, a cochain complex $\g^\bullet$ with a graded anti-symmetric bracket $[\; , \; ] \colon \g^\bullet \otimes \g^\bullet \rightarrow \g^\bullet$ which is a cochain map, satisfying the Jacobi identity. More explicitly, if $\alpha \in \g^i$ and $\beta \in \g^j$, then one has $[\alpha, \beta] \in \g^{i+j}$ satisfying
\begin{itemize}
\item (anti-symmetric) $[\alpha, \beta] = -(-1)^{ij} [\beta, \alpha]$,
\item (Leibniz rule) $d[\alpha, \beta] = [d \alpha, \beta ] + (-1)^i [\alpha, d \beta]$, and
\item (Jacobi identity) $[\alpha, [\beta, \gamma]] = [[\alpha, \beta], \gamma] + (-1)^{ij} [\beta, [\alpha, \gamma]]$.
\end{itemize}
\end{defn}

Here the Jacobi identity can be thought of as ``bracketing with anything is a derivation''.

\begin{eg}
\begin{itemize}
\item An ordinary Lie algebra $\g$ is a differential graded Lie algebra concentrated at degree 0 with $d=0$. Note that a differential graded Lie algebra $\g[-i]$, concentrated at degree $i \neq 0$, should necessarily be trivial for a degree reason.
\item For a commutative differential graded algebra $(A,d_A, \cdot)$ and a differential graded Lie algebra $(\g, d_\g, [\;,\;]_\g)$, one can construct another differential graded Lie algebra $(A \otimes \g , d_{A\otimes \g} , [\;,\;]_{A \otimes \g})$ with $d_{A \otimes \g} = d_A \otimes 1 + 1 \otimes d_\g$ and $[ x \otimes \alpha,  y \otimes \beta  ]_{A \otimes\g} = (-1)^{|y| |\beta|} xy \otimes [\alpha,\beta]$. In a similar way, for instance, for a complex manifold $X$, one has a differential graded Lie algebra $( \Omega^{0,\bullet}(X, T_X^{1,0}) , \bar \del, [\;,\;] )$, where the bracket $[\;,\;]$ is induced from the one on the holomorphic tangent sheaf $T_X^{1,0}$.
\item For a smooth manifold $M$, the space $\Gamma(M,T_M)$ of tangent vectors is a Lie algebra. Let us introduce a graded vector space $T_{\mr{poly}}^\bullet(M) = \Gamma(M, \wedge^\bullet T_M)[1]$ of polyvector fields with $T^n_{\mr{poly}} (M) = \Gamma(M,\wedge^{n+1}T_M)$. We define the Schouten-Nijenhuis bracket $ [\;,\;]_{\mr{SN}} \colon T_{\mr{poly}}^k(M)\otimes T_{\mr{poly}}^l(M) \to T_{\mr{poly}}^{k+l-1}(M)$ by \[[\xi_0\wedge\cdots\wedge \xi_k ,\eta_0 \wedge \cdots \wedge \eta_l ]_{\mr{SN}} = \displaystyle\sum_{i,j} (-1)^{i+j+k}[\xi_i,\eta_j ] \wedge \xi_0 \wedge \cdots \wedge \widehat{\xi_i} \wedge \cdots \wedge \xi_k \wedge \eta_0 \wedge \cdots \wedge \widehat{\eta_j} \wedge \cdots \wedge \eta_l,\]
which makes $(T^\bullet_{\mr{poly}}(M),d=0, [\;,\;]_{\mr{SN}} )$ a differential graded Lie algebra.
\item For an associative $k$-algebra $A$, let us consider the Hochschild cochain complex $HC^\bullet(A)$, where $HC^n(A) = \mr{Hom}_k(A^{\otimes n} , A )$. For $\phi \in HC^p(A)$ and $\psi \in HC^q(A)$, we define a non-associative product, called the Gerstenhaber product, $\phi \circ \psi \in HC^{p+q-1}(A)$ by \[(\phi\circ \psi)(a_1,\cdots, a_{p+q-1} ) = \displaystyle\sum_{i=1}^{p} (-1)^{(i-1)(q-1)} \phi(a_1,\cdots, a_{i-1}, \psi(a_i, \cdots, a_{i+q-1} ) , a_{i+q} , \cdots, a_{p+q-1} ),\] and the Gerstenhaber bracket $[\phi,\psi]_{G} \in HC^{p+q-1}(A)$ by $[\phi,\psi]_G = \phi \circ \psi - ( -1)^{(p-1)(q-1)} \psi \circ \phi$. Also, for the multiplication operator $\mu \colon HC^2(A)$, we define the differential $d_\mu=[\mu,- ] \colon HC^n(A) \rightarrow HC^{n+1}(A) $. One can check that $(HC^\bullet(A)[1] , d_\mu, [\;,\;]_{G} )$ is a differential graded Lie algebra.
\end{itemize}
\end{eg}

\begin{rmk}
\begin{itemize}
\item $( \Omega^{0,\bullet}(X, T_X^{1,0}) , \bar \del, [\;,\;] )$ controls deformations of $X$ as a complex manifold.
\item $(T^\bullet_{\mr{poly}}(M),d=0, [\;,\;]_{\mr{SN}} )$ controls deformations of Poisson structures on $M$, while $(HC^\bullet(A)[1] , d_\mu, [\;,\;]_{G} )$ controls deformations of $A$ as an associative algebra. If $A=\calO_M$ for a smooth manifold $M$, then one defines $D^\bullet_{\mr{poly}}(M)$ to be a subcomplex of $HC^\bullet(\calO_M)[1]$ consisting of polydifferential operators. Kontsevich constructed an explicit equivalence between $T^{\bullet}_{\mr{poly}}(M)$ and $D^{\bullet}_{\mr{poly}}(M)$ in dgLa, which is the single most important ingredient for his celebrated deformation quantization result for a real Poisson manifold.
\end{itemize}
\end{rmk}

Now the claim is that for a derived stack $X$ and a point $x\in X$, the shifted tangent complex $\mathbb{T}_x[-1]X$ has the structure of a differential graded Lie algebra. First let us give an abstract reasoning. Consider $\Omega_x X := x \times_X x$ which is a group object in the category of derived stacks. In derived algebraic geometry, one can always take the Lie algebra of a group object to get $\mr{Lie}(\Omega_x X ) =: \TT_x X[-1]$ which is necessarily a differential graded Lie algebra.

\begin{rmk}
For the skyscraper sheaf $k_x$ at $x\in X$, one has $\R\mr{Hom}_{\calO_X}(k_x, k_x ) = U( \TT_x X[-1] )$ where $U( \TT_x X[-1] )$ is the universal enveloping algebra of the Lie algebra object $\TT_x X[-1]$ in the category of chain complexes. This gives a sense in which taking the shifted tangent complex is an instance of the Koszul duality.
\end{rmk}

A formal moduli problem has a unique point, so this explains what we mean by the shifted tangent complex $\TT_F[-1]$. On the other hand, this definition is hardly useful for actually performing any computation. Let us list some important examples: we phrase the result of shifted tangent complexes for global moduli spaces with a specified point rather than formal moduli problems.

\begin{eg}
\begin{itemize}
\item The classifying stack $BG$ for a group $G$ has $\TT_o [-1] BG = \g$ with its Lie algebra structure.
\item The mapping stack $\underline{\mr{Map}}( X , Y )$ has $\TT_{f} \underline{\mr{Map}}( X , Y ) = \R\Gamma(X, f^*\TT_Y )$. In particular, for a (compact oriented) manifold $M$ and for a (smooth proper) complex algebraic variety $X$,
\begin{itemize}
\item $\mr{Loc}_G(M) : = \underline{\mr{Map}}(M_B, BG)$ has $\TT_{\text{triv} }[-1] \mr{Loc}_G(M)  = (C^\bullet( M ,\g ), d_M, [\;,\;]) $, where the bracket $[\;,\;]$ is induced from the one on $\g$.
\item $\mr{Bun}_G(X) : = \underline{\mr{Map}}(X, BG)$ has $\TT_{\text{triv} }[-1] \mr{Bun}_G(X)   = (\Omega^{0,\bullet}(X,\g) , \bar \del_X , [\;,\;]) $.
\item $\mr{Flat}_G(X) :=  \underline{\mr{Map}}( X_{\mr{dR}} , BG )$ has $\TT_{\text{triv} }[-1] \mr{Flat}_G(X)   =( \Omega^{\bullet}(X,\g), d_X, [\;,\;]  )$.
\item $\mr{Higgs}_G(X) :=  \underline{\mr{Map}}( X_{\mr{Dol}} , BG )$ has $\TT_{\text{triv} }[-1] \mr{Bun}_G(X)  = ( \Omega^{\bullet,\bullet}(X,\g) ,  \bar \del_X, [\;,\;] ) $.
\end{itemize}
Each shifted tangent complex controls deformations of such $G$-bundles.
\end{itemize}
\end{eg}

\begin{rmk}
As we work with the $\infty$-category of differential graded Lie algebras, there is no actual difference between differential graded Lie algebras and $L_\infty$ algebras. On the other hand, for our purpose toward describing field theory, it is more convenient to work with a strict model of $L_\infty$ algebras and hence we are going to do so.
\end{rmk}

\begin{itemize}
\item\textbf{Maurer--Cartan functor}
\end{itemize}

It remains to understand the quasi-inverse functor we denote by MC.

\begin{defn}
Let $\g$ be a differential graded Lie algebra. The \textit{Maurer--Cartan functor} $\mr{MC}_\g \colon \mr{dgArt}^{\leq 0} \rightarrow \mr{sSet}$ is defined by $(R,\mm) \mapsto \mr{MC}_\g(R)$ with \[\mr{MC}_\g(R)[n] := \{ \alpha \in \left( \mm \otimes \Omega^\bullet(\Delta^n) \otimes \g \right)^1 \mid d \alpha + \frac{1}{2} [\alpha, \alpha] = 0 \}.\]
\end{defn}

It is a nontrivial theorem \cite{Hinich} \cite{Getzler} that $\mr{MC}_\g$ defines a formal moduli problem, which we sometimes write as $B\g$. Moreover, the fundamental theorem says that any formal moduli problem arises in this way, up to homotopy.

In fact, one can represent the formal moduli problem $B\g$ in a more explicit way, when $\mr{dim}(\g^i)<\infty$ for each $i$. That is, not only do we know $\mr{MC}_\g(-) =  \mr{Map}_{\mr{dSt}_\ast}(-,X^\wedge_x )$ by the fundamental theorem, we have $\mr{MC}_\g(-)= \mr{Map}_{\mr{cdga}_\ast}(A,-)$ for some $A\in \mr{cdga}$, where $A$ does not necessarily lie in $\mr{cdga}^{\leq 0}$ because the corresponding object $X$ possibly has some stacky nature.

\begin{defn}
The \emph{Chevalley--Eilenberg cochain complex} $C^\bullet(\g)$ of a differential graded Lie algebra $\g$ is an augmented commutative differential graded algebra defined as follows: as a graded algebra one has \[C^\bullet(\g) = \widehat{\mr{Sym}}( \g^*[-1] ) = \displaystyle\prod_{i \geq 0} \mr{Sym} ^i(\g^*[-1])\] and the differential $d_{\mr{CE}} =d_\g + d_{[\;,\;]} $ is defined as a derivation, where on generators $\g^*[-1]$, the differential $d_\g \colon \g^*[-1] \rightarrow \g^*[-1] $ is defined as the dual of the differential $d \colon \g \rightarrow \g$ and the differential $d_{[\;,\; ] }\colon \g^*[-1] \rightarrow \mr{Sym}^2 (\g^*[-1]) = \wedge^2(\g^*)[-2]$ is defined as the dual of the bracket map $[\;,\;] \colon \wedge^2 \g \rightarrow \g$.
\end{defn}

We should think of $(C^\bullet(\g), d_{\mr{CE}}) $ as the structure sheaf of the formal moduli problem $B\g$.

\begin{rmk}
The differential $d_{\mr{CE}}$ can be regarded as a vector field of cohomological degree 1 on $\widetilde{B\g}$ with $\calO(\widetilde{B\g})= ( \widehat{\mr{Sym}}( \g^*[-1] )  , d =0)$, that is, from $\TT_{\widetilde{B\g}} = \g[1]$, one has \[d_{\mr{CE}} \in \Gamma(\widetilde{B\g}, \TT_{\widetilde{B\g}}[1])=   \displaystyle\prod_{k \geq 0 } \mr{Hom}( \g[1]^{\otimes k} , \g[2] )_{S_k} \subset \displaystyle\prod_{k\geq 0} \mr{Hom}( \g^{\otimes k} , \g[2-k]  ). \] In fact, one can take a definition of (possibly curved) $L_\infty$ algebra structure on $\g$ as such a vector field $Q_\g$ of cohomological degree 1: the equation $Q_\g^2=0$ corresponds to the $L_\infty$ equations. 
\end{rmk}
\end{appendix}

\emph{Email address}: \href{mailto:dbutson@perimeterinstitute.ca}{dbutson@perimeterinstitute.ca}, \href{mailto:philsang@math.northwestern.edu}{philsang@math.northwestern.edu}

\end{document}